\newtheorem{theorem}{Theorem}
\newtheorem{example}{Example}
\newtheorem{definition}{Definition}
\newtheorem{remark}{Remark}
\newtheorem{corollary}{Corollary}
\newtheorem{lemma}{Lemma}
\newtheorem{proposition}{Proposition}
\begin{document}

\title{Bent Partitions, Vectorial Dual-Bent Functions and Partial Difference Sets$^{\dag}$}
\author{Jiaxin Wang, Fang-Wei Fu, Yadi Wei
\IEEEcompsocitemizethanks{\IEEEcompsocthanksitem Jiaxin Wang, Fang-Wei Fu and Yadi Wei are with Chern Institute of Mathematics and LPMC, Nankai University, Tianjin 300071, China, Emails: wjiaxin@mail.nankai.edu.cn, fwfu@nankai.edu.cn, wydecho@mail.nankai.edu.cn.
}
\thanks{$^\dag$This research is supported by the National Key Research and Development Program of China (Grant Nos. 2018YFA0704703 and 2022YFA1005001), the National Natural Science Foundation of China (Grant Nos. 12141108, 61971243, 12226336), the Natural Science Foundation of Tianjin (20JCZDJC00610), the Fundamental Research Funds for the Central Universities of China (Nankai University), and the Nankai Zhide Foundation.}
}

\maketitle

\begin{abstract}
   Bent partitions of $V_{n}^{(p)}$ are quite powerful in constructing bent functions, vectorial bent functions and generalized bent functions, where $V_{n}^{(p)}$ is an $n$-dimensional vector space over $\mathbb{F}_{p}$, $n$ is an even positive integer and $p$ is a prime. It is known that partial spreads is a class of bent partitions. In \cite{AM2022Be,MP2021Be}, two classes of bent partitions whose forms are similar to partial spreads were presented. In \cite{AKM2022Ge}, more bent partitions $\Gamma_{1}, \Gamma_{2}, \Gamma_{1}^{\bullet}, \Gamma_{2}^{\bullet}, \Theta_{1}, \Theta_{2}$ were presented from (pre)semifields, including the bent partitions given in \cite{AM2022Be,MP2021Be}. In this paper, we investigate the relations between bent partitions and vectorial dual-bent functions. For any prime $p$, we show that one can generate certain bent partitions (called bent partitions satisfying Condition $\mathcal{C}$) from certain vectorial dual-bent functions (called vectorial dual-bent functions satisfying Condition A). In particular, when $p$ is an odd prime, we show that bent partitions satisfying Condition $\mathcal{C}$ one-to-one correspond to vectorial dual-bent functions satisfying Condition A. We give an alternative proof that $\Gamma_{1}, \Gamma_{2}, \Gamma_{1}^{\bullet}, \Gamma_{2}^{\bullet}, \Theta_{1}, \Theta_{2}$ are bent partitions in terms of vectorial dual-bent functions. We present a secondary construction of vectorial dual-bent functions, which can be used to generate more bent partitions. We show that any ternary weakly regular bent function $f: V_{n}^{(3)}\rightarrow \mathbb{F}_{3}$ ($n$ even) of $2$-form can generate a bent partition. When such $f$ is weakly regular but not regular, the generated bent partition by $f$ is not coming from a normal bent partition, which answers an open problem proposed in \cite{AM2022Be}. We give a sufficient condition on constructing partial difference sets from bent partitions, and when $p$ is an odd prime, we provide a characterization of bent partitions satisfying Condition $\mathcal{C}$ in terms of partial difference sets.
\end{abstract}

\begin{IEEEkeywords}
Bent partitions; bent functions; vectorial bent functions; vectorial dual-bent functions; semifields; partial difference sets
\end{IEEEkeywords}

\section{Introduction}
\label{sec:1}
Boolean bent functions were introduced by Rothaus \cite{Rothaus1976On} and were generalized to $p$-ary bent functions by Kumar, Scholtz and Welch \cite{KSW1985Ge}, where $p$ is an arbitrary prime. Due to applications of $p$-ary bent functions in cryptography, coding theory, sequence and combinatorics, they have been extensively studied. We refer to surveys \cite{CS2016Fo,Meidl2022A} and a book \cite{Mesnager2016Be} on $p$-ary bent functions and their generalizations such as vectorial bent functions and generalized bent functions.

In \cite{CMP2018Ve}, \c{C}e\c{s}melio\u{g}lu \emph{et al.} introduced vectorial dual-bent functions, which is a special class of vectorial bent functions. In \cite{CMP2021Ve,CM2018Be,WF2022Ne}, vectorial dual-bent functions were used to construct partial difference sets. In particular, Wang and Fu \cite{WF2022Ne} showed that for certain vectorial dual-bent functions $F: V_{n}^{(p)}\rightarrow V_{s}^{(p)}$ (where $V_{n}^{(p)}$ is an $n$-dimensional vector space over the prime field $\mathbb{F}_{p}$), the preimage set of any subset of $V_{s}^{(p)}$ for $F$ forms a partial difference set.

Very recently, bent partitions of $V_{n}^{(p)}$ were introduced \cite{AM2022Be,MP2021Be}, which are quite powerful in constructing bent functions, vectorial bent functions and generalized bent functions. The well-known partial spreads is a class of bent partitions. In \cite{MP2021Be}, Meidl and Pirsic for the first time presented two classes of bent partitions for $p=2$ different from partial spreads. In \cite{AM2022Be}, Anbar and Meidl generalized the contributions in \cite{MP2021Be} to the case of $p$ being odd and gave the corresponding two classes of bent partitions for odd $p$. In \cite{AKM2022Ge}, Anbar, Kalayc{\i} and Meidl presented more bent partitions $\Gamma_{1}, \Gamma_{2}, \Gamma_{1}^{\bullet}, \Gamma_{2}^{\bullet}, \Theta_{1}, \Theta_{2}$ from (pre)semifields, including the bent partitions given in \cite{AM2022Be,MP2021Be}. In \cite{AKM2022Be}, Anbar, Kalayc{\i} and Meidl showed that any union of elements in the bent partition given in \cite{AM2022Be,MP2021Be} forms a partial difference set. In terms of constructing partial difference sets, certain vectorial dual-bent functions and certain bent partitions seem to play the same role. Therefore, it is interesting to investigate the relations between vectorial dual-bent functions and bent partitions. In this paper, we show that by using certain vectorial dual-bent functions (called vectorial dual-bent functions satisfying Condition A), we can construct bent partitions of $V_{n}^{(p)}$ with certain properties (called bent partitions satisfying Condition $\mathcal{C}$) for any prime $p$. Particularly, when $p$ is an odd prime, we prove that bent partitions of $V_{n}^{(p)}$ with Condition $\mathcal{C}$ one-to-one correspond to vectorial dual-bent functions satisfying Condition A. In terms of vectorial dual-bent functions, we provide an alternative proof that $\Gamma_{1}, \Gamma_{2}, \Gamma_{1}^{\bullet}, \Gamma_{2}^{\bullet}, \Theta_{1}, \Theta_{2}$ given in \cite{AKM2022Ge} are bent partitions. We provide a secondary construction of vectorial dual-bent functions, which can be used to generate more bent partitions. We prove that any ternary weakly regular bent function $f: V_{n}^{(3)}\rightarrow \mathbb{F}_{3}$ ($n$ even) of $2$-form can generate a bent partition. In the special case that $f$ is weakly regular but not regular, the generated bent partition by $f$ is not coming from a normal bent partition, which answers an open problem proposed in \cite{AM2022Be}. By using vectorial dual-bent functions as the link between bent partitions and partial difference sets, we give a sufficient condition on constructing partial difference sets from bent partitions. When $p$ is an odd prime, we provide a characterization of bent partitions satisfying Condition $\mathcal{C}$ in terms of partial difference sets.

The rest of the paper is organized as follows. In Section II, we state some needed results on vectorial dual-bent functions and bent partitions. In Section III, we present relations between certain bent partitions and certain vectorial dual-bent functions. In Section IV, we give a secondary construction of vectorial dual-bent functions, which can be used to generate more bent partitions. In Section V, we present relations between certain bent partitions and certain partial difference sets. In Section VI, we make a conclusion.

\section{Preliminaries}
\label{sec:2}
In this section, we state some basic results on vectorial dual-bent functions and bent partitions. First, we fix some notations used throughout this paper.
\begin{itemize}
  \item $p$ is a prime.
  \item $\zeta_{p}=e^{\frac{2 \pi \sqrt{-1}}{p}}$ is a complex primitive $p$-th root of unity. Note that $\zeta_{2}=-1$.
  \item $\mathbb{F}_{p^n}$ is the finite field with $p^n$ elements.
  \item $\mathbb{F}_{p}^{n}$ is the vector space of the $n$-tuples over $\mathbb{F}_{p}$.
  \item $V_{n}^{(p)}$ is an $n$-dimensional vector space over $\mathbb{F}_{p}$.
  \item $\langle \cdot \rangle_{n}$ denotes a (non-degenerate) inner product of $V_{n}^{(p)}$. In this paper, when $V_{n}^{(p)}=\mathbb{F}_{p^n}$, let $\langle a, b\rangle_{n}=Tr_{1}^{n}(ab)$, where $a, b \in \mathbb{F}_{p^n}$, $Tr_{k}^{n}(\cdot)$ denotes the trace function from $\mathbb{F}_{p^n}$ to $\mathbb{F}_{p^k}$, $k \mid n$; when $V_{n}^{(p)}=\mathbb{F}_{p}^{n}$, let $\langle a, b\rangle_{n}=a \cdot b=\sum_{i=1}^{n}a_{i}b_{i}$, where $a=(a_{1}, \dots, a_{n}), b=(b_{1}, \dots, b_{n})\in \mathbb{F}_{p}^{n}$; when $V_{n}^{(p)}=V_{n_{1}}^{(p)}\times \dots \times V_{n_{m}}^{(p)} (n=\sum_{i=1}^{m}n_{i})$, let $\langle a, b\rangle_{n}=\sum_{i=1}^{m}\langle a_{i}, b_{i}\rangle_{n_{i}}$, where $a=(a_{1}, \dots, a_{m}), b=(b_{1}, \dots, b_{m})\in V_{n}^{(p)}$.
  \item For any set $A\subseteq V_{n}^{(p)}$ and $u \in V_{n}^{(p)}$, let $\chi_{u}(A)=\sum_{x \in A}\chi_{u}(x)$, where $\chi_{u}$ denotes the character $\chi_{u}(x)=\zeta_{p}^{\langle u, x\rangle_{n}}$.
\end{itemize}
\subsection{Vectorial dual-bent functions}
A function $F: V_{n}^{(p)}\rightarrow V_{s}^{(p)}$ is called a \textit{vectorial $p$-ary function}, or simply \textit{$p$-ary function} when $s=1$. The \textit{Walsh transform} of a $p$-ary function $f: V_{n}^{(p)}\rightarrow \mathbb{F}_{p}$ is the complex valued function defined by
\begin{equation}\label{1}
  W_{f}(a)=\sum_{x \in V_{n}^{(p)}}\zeta_{p}^{f(x)-\langle a, x\rangle_{n}}, a \in V_{n}^{(p)}.
\end{equation}

A $p$-ary function $f: V_{n}^{(p)}\rightarrow \mathbb{F}_{p}$ is called \textit{bent} if $|W_{f}(a)|=p^{\frac{n}{2}}$ for all $a \in V_{n}^{(p)}$. Note that when $f$ is a Boolean bent function, that is, $p=2$, then $n$ must be even since in this case, $W_{f}$ is an integer valued function. A vectorial $p$-ary function $F: V_{n}^{(p)}\rightarrow V_{s}^{(p)}$ is called \textit{vectorial bent} if all \textit{component functions} $F_{c}: V_{n}^{(p)}\rightarrow \mathbb{F}_{p}, c \in V_{s}^{(p)}\backslash \{0\}$ defined as $F_{c}(x)=\langle c, F(x)\rangle_{s}$ are bent. It is known that if $F: V_{n}^{(p)}\rightarrow V_{s}^{(p)}$ is vectorial bent, then $s\leq \frac{n}{2}$ if $p=2$, and $s\leq n$ if $p$ is an odd prime. If $f: V_{n}^{(p)}\rightarrow \mathbb{F}_{p}$ is bent, then so are $cf, c \in \mathbb{F}_{p}^{*}$, that is, any $p$-ary bent function is vectorial bent. For $F: V_{n}^{(p)}\rightarrow V_{s}^{(p)}$, the vectorial bentness of $F$ is independent of the inner products of $V_{n}^{(p)}$ and $V_{s}^{(p)}$. The Walsh transform of a $p$-ary bent function $f: V_{n}^{(p)} \rightarrow \mathbb{F}_{p}$ satisfies that for any $a \in V_{n}^{(p)}$, when $p=2$, we have
\begin{equation}\label{2}
  W_{f}(a)=2^{\frac{n}{2}}(-1)^{f^{*}(a)},
\end{equation}
and when $p$ is an odd prime, we have
\begin{equation}\label{3}
  W_{f}(a)=\left\{\begin{split}
                     \pm p^{\frac{n}{2}}\zeta_{p}^{f^{*}(a)} & \ \ \text{if} \ p \equiv 1 \ (mod \ 4) \ \text{or} \ n \ \text{is even},\\
                     \pm \sqrt{-1} p^{\frac{n}{2}} \zeta_{p}^{f^{*}(a)} & \ \ \text{if} \ p \equiv 3 \ (mod \ 4) \ \text{and} \ n \ \text{is odd},
                  \end{split}\right.
\end{equation}
where $f^{*}$ is a function from $V_{n}^{(p)}$ to $\mathbb{F}_{p}$, called the \textit{dual} of $f$. A $p$-ary bent function $f: V_{n}^{(p)} \rightarrow \mathbb{F}_{p}$ is called \textit{weakly regular} if $W_{f}(a)=\varepsilon_{f}p^{\frac{n}{2}}\zeta_{p}^{f^{*}(a)}$, where $\varepsilon_{f}$ is a constant independent of $a$, otherwise $f$ is called \textit{non-weakly regular}. In particular, if $\varepsilon_{f}=1$, $f$ is called \textit{regular}. The (non-)weakly regularity of $f$ is independent of the inner product of $V_{n}^{(p)}$ and if $f$ is weakly regular, $\varepsilon_{f}$ is independent of the inner product of $V_{n}^{(p)}$. By (2), all Boolean bent functions are regular. If $f$ is a $p$-ary weakly regular bent function, then the dual $f^{*}$ of $f$ is also weakly regular bent with $(f^{*})^{*}(x)=f(-x)$ (see \cite{CMP2013On}).

In 2018, \c{C}e\c{s}melio\u{g}lu \emph{et al.} \cite{CMP2018Ve} introduced vectorial dual-bent functions.
\begin{definition}\label{1}
A vectorial $p$-ary bent function $F: V_{n}^{(p)}\rightarrow V_{s}^{(p)}$ is called \textit{vectorial dual-bent} if there exists a vectorial bent function $G: V_{n}^{(p)}\rightarrow V_{s}^{(p)}$ such that $(F_{c})^{*}=G_{\sigma(c)}$ for any $c \in V_{s}^{(p)}\backslash \{0\}$, where $(F_{c})^{*}$ is the dual of the component function $\langle c, F(x)\rangle_{s}$ and $\sigma$ is some permutation over $V_{s}^{(p)}\backslash \{0\}$. The vectorial bent function $G$ is called a \textit{vectorial dual} of $F$ and denoted by $F^{*}$.
\end{definition}
It is known in \cite{CMP2018Ve} that the property of being vectorial dual-bent is independent of the inner products of $V_{n}^{(p)}$ and $V_{s}^{(p)}$. Note that for a vectorial dual-bent function, its vectorial dual is not unique since being vectorial bent and vectorial dual-bent for a function is a property of the vector space consisting of all component functions (see Remark 1 of \cite{CMP2018Ve}). For example, if a $p$-ary function $f$ (seen as a vectorial function into $V_{1}^{(p)}$, $p$ odd) is vectorial dual-bent under any fixed inner product, then its dual $f^{*}$ is unique, but its vectorial dual is not unique since for any $c \in \mathbb{F}_{p}^{*}$, $cf^{*}$ is a vectorial dual of $f$. A $p$-ary function $f: V_{n}^{(p)}\rightarrow \mathbb{F}_{p}$ is called an \textit{$l$-form} if $f(ax)=a^{l}f(x)$ for any $a \in \mathbb{F}_{p}^{*}$ and $x \in V_{n}^{(p)}$, where $1\leq l\leq p-1$ is an integer. By the results in \cite{CM2018Be,WF2022Ne}, we have the following proposition.

\begin{proposition}[\cite{CM2018Be,WF2022Ne}]\label{1}
A $p$-ary function $f$ with $f(0)=0$ is a weakly regular vectorial dual-bent function if and only if $f$ is a weakly regular bent function of $l$-form with $gcd(l-1,p-1)=1$. In particular, a $p$-ary function $f$ is a weakly regular vectorial dual-bent function with $(cf)^{*}=cf^{*}$ for any $c \in \mathbb{F}_{p}^{*}$ if and only if $f$ is a weakly regular bent function of $(p-1)$-form.
\end{proposition}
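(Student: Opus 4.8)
The plan is to prove the two directions of the main equivalence directly and then to read off the ``in particular'' claim. The case $p=2$ is degenerate ($l=1$ is forced, every Boolean bent function is regular and is trivially a $1$-form, and a function into $V_{1}^{(2)}$ is vectorial dual-bent exactly when it is bent), so assume henceforth that $p$ is odd. For $c\in\mathbb{F}_{p}^{*}$ let $\tau_{c}$ be the automorphism of $\mathbb{Q}(\zeta_{p})$ fixing $\mathbb{Q}$ with $\tau_{c}(\zeta_{p})=\zeta_{p}^{c}$. The main tool is the purely formal identity $W_{cf}(a)=\tau_{c}\bigl(W_{f}(c^{-1}a)\bigr)$, obtained by applying $\tau_{c}$ termwise to the defining sum for $W_{f}$ and using bilinearity of the inner product. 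Combining it with formula (\ref{3}) applied to the bent functions $f$ and $cf$ shows that every component $cf$ ($c\in\mathbb{F}_{p}^{*}$) of a weakly regular bent $f$ is again weakly regular bent, with $(cf)^{*}(a)=c\,f^{*}(c^{-1}a)$ and $\varepsilon_{cf}$ equal to $\varepsilon_{f}$ up to a sign; here one uses that the quotient of the two resulting expressions for $W_{cf}(a)$ is a root of unity lying in $\{\pm 1,\pm\sqrt{-1}\}$ and in $\mu_{p}$ at once, hence equal to $1$ since $p$ is odd.

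For ``$\Leftarrow$'', let $f$ be a weakly regular bent $l$-form with $\gcd(l-1,p-1)=1$ and $f(0)=0$. Since $\gcd(l-1,p-1)=1$, for each $c$ there is a unique $t_{c}\in\mathbb{F}_{p}^{*}$ with $t_{c}^{l-1}=c^{-1}$, namely $t_{c}=c^{-(l-1)^{-1}}$, where $(l-1)^{-1}$ is the inverse of $l-1$ modulo $p-1$. Substituting $x\mapsto t_{c}x$ in the defining sum for $W_{cf}$ and using $f(t_{c}x)=t_{c}^{l}f(x)$, $ct_{c}^{l}=t_{c}$, and $f(0)=0$ (which makes the $x=0$ term match) one gets $W_{cf}(a)=\tau_{t_{c}}\bigl(W_{f}(a)\bigr)$, hence $cf$ is weakly regular bent with $(cf)^{*}=t_{c}f^{*}$. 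Since $f^{*}$ is a $p$-ary bent function and $c\mapsto t_{c}$ is a permutation of $\mathbb{F}_{p}^{*}$ fixing $1$, taking $\sigma(c)=t_{c}$ exhibits $f^{*}$ as a vectorial dual of $f$; so $f$ is a weakly regular vectorial dual-bent function.

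For ``$\Rightarrow$'', let $f$ with $f(0)=0$ be a weakly regular vectorial dual-bent function. By the definition of a vectorial dual-bent function there are a $p$-ary bent function $G$ and a permutation $\sigma$ of $\mathbb{F}_{p}^{*}$ with $(cf)^{*}=\sigma(c)G$; at $c=1$ this gives $G=\sigma(1)^{-1}f^{*}$, so after renormalising $(cf)^{*}=\rho(c)f^{*}$ for a permutation $\rho$ of $\mathbb{F}_{p}^{*}$ with $\rho(1)=1$. Comparing with $(cf)^{*}(a)=c\,f^{*}(c^{-1}a)$ and putting $a=cb$ gives $f^{*}(cb)=\bigl(c/\rho(c)\bigr)f^{*}(b)$ for all $b,c$. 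As $f^{*}$ is bent it is not constant, so $f^{*}(b_{0})\neq 0$ for some $b_{0}$; hence $c\mapsto c/\rho(c)$ is a group homomorphism $\mathbb{F}_{p}^{*}\to\mathbb{F}_{p}^{*}$, necessarily $c\mapsto c^{l'}$ for some $l'$, so $f^{*}$ is a weakly regular bent $l'$-form and $\rho(c)=c^{1-l'}$, which forces $\gcd(l'-1,p-1)=1$ because $\rho$ permutes $\mathbb{F}_{p}^{*}$. Now apply the ``$\Leftarrow$'' computation to the $l'$-form $f^{*}$ (whose value at $0$ is automatically $0$ unless $l'\equiv 0\pmod{p-1}$, in which case $t_{c}=c$ and the value at $0$ is irrelevant) and use $(f^{*})^{*}(x)=f(-x)$: one obtains $f(cb)=c^{1+(l'-1)^{-1}}f(b)$, so $f$ is an $l$-form with $l-1\equiv(l'-1)^{-1}\pmod{p-1}$; since $(l'-1)^{-1}$ is a unit modulo $p-1$, $\gcd(l-1,p-1)=1$.

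Finally, the ``in particular'' claim. A weakly regular bent $(p-1)$-form has $\gcd(p-2,p-1)=1$, and the ``$\Leftarrow$'' computation with $l=p-1$ (so $t_{c}=c$) gives $(cf)^{*}=c\,f^{*}$. Conversely, if $f$ is weakly regular vectorial dual-bent with $(cf)^{*}=c\,f^{*}$ for all $c$, then comparing with $(cf)^{*}(a)=c\,f^{*}(c^{-1}a)$ forces $f^{*}(c^{-1}a)=f^{*}(a)$, so $f^{*}$ is a $(p-1)$-form; since the dual of a weakly regular bent $(p-1)$-form is again a $(p-1)$-form (apply the same two facts to $f^{*}$) and $(f^{*})^{*}(x)=f(-x)$, it follows that $f$ is a $(p-1)$-form. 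I expect the main obstacle to be the ``$\Rightarrow$'' direction, above all the step that upgrades the scaling law for $f^{*}$ to the conclusion that $f$ itself is a form with the correct gcd; a secondary nuisance is keeping track of the sign and $\sqrt{-1}$ ambiguities in (\ref{3}), which are genuinely present when $n$ is odd but are disposed of uniformly via the observation that every stray prefactor is a root of unity of order dividing $4$ that is forced into $\mu_{p}$, hence trivial for odd $p$.
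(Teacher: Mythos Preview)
The paper does not supply its own proof of Proposition~1; the statement is quoted from \cite{CM2018Be,WF2022Ne} and used as a black box, so there is no in-paper argument to compare against.

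Evaluated on its own merits, your argument is correct and is essentially the route taken in the cited sources. The Galois identity $W_{cf}(a)=\tau_{c}(W_{f}(c^{-1}a))$ indeed yields $(cf)^{*}(a)=c\,f^{*}(c^{-1}a)$: the stray prefactor you worry about is exactly $\eta(c)^{[n\ \text{odd}]}\,\varepsilon_{f}/\varepsilon_{cf}\in\{\pm 1\}$ (since $\tau_{c}(\sqrt{p^{*}})=\eta(c)\sqrt{p^{*}}$), and for odd $p$ the only $p$-th root of unity equal to $\pm 1$ is $1$. In the ``$\Rightarrow$'' direction your passage from $\rho(c)f^{*}(a)=c\,f^{*}(c^{-1}a)$ to ``$c\mapsto c/\rho(c)$ is a homomorphism'' is justified because $f^{*}$, being bent, is nonconstant, hence nonzero at some point; the resulting $\rho(c)=c^{1-l'}$ with $\gcd(l'-1,p-1)=1$ is correct. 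Feeding $f^{*}$ back through the same two identities and invoking $(f^{*})^{*}(x)=f(-x)$ gives $f(cb)=c^{\,1+(l'-1)^{-1}}f(b)$, so $f$ is an $l$-form with $l-1\equiv(l'-1)^{-1}\pmod{p-1}$, whence $\gcd(l-1,p-1)=1$. The ``in particular'' clause then drops out by specialising to $l=p-1$ (so $t_{c}=c$) and observing that the $(p-1)$-form property passes from $f^{*}$ to $(f^{*})^{*}(x)=f(-x)$ and hence to $f$. No gaps.
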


In the rest of this subsection, we recall an important class of $p$-ary bent functions, called Maiorana-McFarland bent functions.
\begin{itemize}
  \item Let $f: \mathbb{F}_{p^n}\times \mathbb{F}_{p^n}\rightarrow \mathbb{F}_{p}$ be defined as
  \begin{equation*}
    f(x, y)=Tr_{1}^{n}(\alpha x \pi(y))+g(y),
  \end{equation*}
  where $\alpha \in \mathbb{F}_{p^n}^{*}$, $\pi$ is a permutation over $\mathbb{F}_{p^n}$ and $g: \mathbb{F}_{p^n}\rightarrow \mathbb{F}_{p}$ is an arbitrary function. Then $f$ is bent and is called a Maiorana-McFarland bent function. The dual $f^{*}$ of $f$ is
  \begin{equation}\label{4}
    f^{*}(x, y)=Tr_{1}^{n}(-\pi^{-1}(\alpha^{-1}x)y)+g(\pi^{-1}(\alpha^{-1}x)).
  \end{equation}
   All Maiorana-McFarland bent functions are regular (see \cite{KSW1985Ge}).
\end{itemize}

\subsection{Bent partitions}
Very recently, the concept of bent partitions of $V_{n}^{(p)}$ were introduced \cite{AM2022Be,MP2021Be}.

\begin{definition}\label{2}
Let $n$ be an even positive integer, $K$ be a positive integer divisible by $p$.
\begin{itemize}
  \item Let $\Gamma=\{A_{1}, \dots, A_{K}\}$ be a partition of $V_{n}^{(p)}$. Assume that every function $f$ for which every $i\in \mathbb{F}_{p}$ has exactly $\frac{K}{p}$ of sets $A_{j}$ in $\Gamma$ in its preimage, is a $p$-ary bent function. Then $\Gamma$ is called a \textit{bent partition} of $V_{n}^{(p)}$ of depth $K$ and every such bent function $f$ is called a \textit{bent function constructed from bent partition $\Gamma$}.
  \item Let $\Gamma=\{U, A_{1}, \dots, A_{K}\}$ be a partition of $V_{n}^{(p)}$. Assume that every function $f$ with the following properties is bent:

  (1) Every $c \in \mathbb{F}_{p}$ has exactly $\frac{K}{p}$ of the sets $A_{1}, \dots, A_{K}$ in its preimage set;

  (2) $f(x)=c_{0}$ for all $x \in U$ and some fixed $c_{0} \in \mathbb{F}_{p}$.

  Then we call $\Gamma$ a \textit{normal bent partition} of $V_{n}^{(p)}$ of depth $K$.
\end{itemize}
\end{definition}

Bent partitions are very powerful in constructing bent functions, vectorial bent function and generalized bent functions. In this paper, we focus on the relations between bent partitions and vectorial bent functions.

\begin{proposition}[\cite{AM2022Be}]\label{2}
Let $\Gamma=\{A_{1}, \dots, A_{p^s}\}$ be a bent partition of $V_{n}^{(p)}$. Then every function $F: V_{n}^{(p)}\rightarrow V_{s}^{(p)}$ such that every element $i \in V_{s}^{(p)}$ has the elements of exactly one of the sets $A_{j}, 1\leq j \leq p^s$, in its preimage, is a vectorial bent function.
\end{proposition}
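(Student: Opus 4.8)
The plan is to derive the vectorial bentness of $F$ directly from the defining property of the bent partition $\Gamma$. Since $F$ has the elements of exactly one set $A_{j}$ in the preimage of each $i\in V_{s}^{(p)}$, and $|\{A_{1},\dots,A_{p^{s}}\}|=p^{s}=|V_{s}^{(p)}|$, the function $F$ determines a bijection $\phi:V_{s}^{(p)}\rightarrow\{1,\dots,p^{s}\}$ with $F^{-1}(i)=A_{\phi(i)}$ for every $i\in V_{s}^{(p)}$; in particular $F$ is constant on each block $A_{j}$. The goal is to show that for every $c\in V_{s}^{(p)}\backslash\{0\}$ the component function $F_{c}(x)=\langle c, F(x)\rangle_{s}$ is a $p$-ary bent function, which by definition gives that $F$ is vectorial bent. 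By the remark recalled earlier that vectorial bentness does not depend on the chosen inner products, it suffices to argue for the fixed inner product $\langle\cdot,\cdot\rangle_{s}$.

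First I would fix $c\neq 0$ and describe the preimages of $F_{c}$. For $j\in\mathbb{F}_{p}$ set $L_{c,j}=\{i\in V_{s}^{(p)}:\langle c,i\rangle_{s}=j\}$. Since $F$ is constant on each $A_{j}$, the function $F_{c}$ is constant on each $A_{j}$ as well, and
\[
F_{c}^{-1}(j)=\{x:\langle c,F(x)\rangle_{s}=j\}=\bigcup_{i\in L_{c,j}}A_{\phi(i)}.
\]
Because $\langle\cdot,\cdot\rangle_{s}$ is non-degenerate and $c\neq 0$, the map $i\mapsto\langle c,i\rangle_{s}$ is a nonzero (hence surjective) linear functional $V_{s}^{(p)}\rightarrow\mathbb{F}_{p}$, so $L_{c,0}$ is a hyperplane of $V_{s}^{(p)}$ and each $L_{c,j}$ is one of its $p$ additive cosets; consequently $|L_{c,j}|=p^{s-1}$ for every $j\in\mathbb{F}_{p}$.

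Next I would invoke the definition of a bent partition with depth $K=p^{s}$. The displayed identity shows that $F_{c}$ is a function $V_{n}^{(p)}\rightarrow\mathbb{F}_{p}$ for which every $j\in\mathbb{F}_{p}$ has exactly $|L_{c,j}|=p^{s-1}=\frac{K}{p}$ of the sets $A_{1},\dots,A_{p^{s}}$ of $\Gamma$ in its preimage. This is precisely the hypothesis under which the bent partition property guarantees bentness, so $F_{c}$ is a $p$-ary bent function constructed from $\Gamma$. Since $c\in V_{s}^{(p)}\backslash\{0\}$ was arbitrary, all component functions of $F$ are bent, and therefore $F$ is vectorial bent.

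I do not expect a genuine obstacle here: the only points requiring care are (i) noting that $F$, having exactly one block $A_{j}$ in each fibre, induces the labelling bijection $\phi$, so that $F$ — and hence each $F_{c}$ — is constant on the blocks of $\Gamma$, and (ii) checking via non-degeneracy of $\langle\cdot,\cdot\rangle_{s}$ that every value class $L_{c,j}$ has the same cardinality $p^{s-1}$, which is exactly the counting condition $\frac{K}{p}$ in the definition of a bent partition. Once these bookkeeping observations are in place, the statement is immediate from Definition of bent partitions.
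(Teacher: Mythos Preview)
Your argument is correct and is exactly the natural one: each component $F_{c}$ is constant on the blocks of $\Gamma$, and the balancedness of the nonzero linear form $i\mapsto\langle c,i\rangle_{s}$ forces every $j\in\mathbb{F}_{p}$ to pick up exactly $p^{s-1}=K/p$ blocks, so the bent-partition hypothesis applies directly. Note that the paper does not supply its own proof of this proposition but simply cites it from \cite{AM2022Be}; your write-up matches the standard proof one would find there.
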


It is known that partial spreads is a class of bent partitions (for instance see Section 2 of \cite{AM2022Be}). In \cite{AM2022Be,MP2021Be}, two classes of explicit bent partitions different from partial spreads were presented. In \cite{AKM2022Ge}, bent partitions $\Gamma_{1}, \Gamma_{2}, \Gamma_{1}^{\bullet}, \Gamma_{2}^{\bullet}, \Theta_{1}, \Theta_{2}$ were presented from certain (pre)semifields, including the bent partitions given in \cite{AM2022Be,MP2021Be}. We will recall bent partitions $\Gamma_{1}, \Gamma_{2}, \Gamma_{1}^{\bullet}, \Gamma_{2}^{\bullet}, \Theta_{1}, \Theta_{2}$ given in \cite{AKM2022Ge}. First, we need to introduce some basic knowledge on (pre)semifields.

\begin{definition}\label{3}
Let $\circ$ be a binary operation defined on $(V_{n}^{(p)}, +)$ such that

(i) $x \circ y=0$ implies $x=0$ or $y=0$,

(ii) $(x+y)\circ z=(x\circ z)+(y \circ z), (z\circ (x+y)=(z\circ x)+(z\circ y)$, respectively),

\noindent for all $x, y, z \in V_{n}^{(p)}$. Then $(V_{n}^{(p)}, +, \circ)$ is called a \textit{right} (\textit{left}, respectively) \textit{prequasifield}. If $(V_{n}^{(p)}, +, \circ)$ is a right and a left prequasifield, then it is called a \textit{presemifield}. If $(V_{n}^{(p)}, +, \circ)$ is a presemifield for which there is an element $e\neq 0$ such that $e\circ x=x\circ e=x$ for all $x \in V_{n}^{(p)}$, then it is called a \textit{semifield}.
\end{definition}

Let $P=(\mathbb{F}_{p^n}, +, \circ)$ be a presemifield. Then one can obtain presemifields $P^{\bullet}=(\mathbb{F}_{p^n}, +, \bullet)$ and $P^{\star}=(\mathbb{F}_{p^n}, +, \star)$ from $P$, where $\bullet$ and $\star$ are given by
\begin{equation*}
  x \bullet y=y\circ x \text{ for all }x, y \in \mathbb{F}_{p^n},
\end{equation*}
and
\begin{equation*}
  Tr_{1}^{n}(z(x\circ y))=Tr_{1}^{n}(x(z\star y)) \text{ for all }x, y, z \in \mathbb{F}_{p^n},
\end{equation*}
respectively. The presemifield $P^{\star}$ is called the \textit{dual} of $P$. Let $s$ be a positive divisor of $n$. If $x\circ (cy)=c(x\circ y)$ holds for any $x, y \in \mathbb{F}_{p^n}, c \in \mathbb{F}_{p^s}$, then $P$ is called \textit{right $\mathbb{F}_{p^s}$-linear}. Each presemifield $P=(\mathbb{F}_{p^n}, +, \circ)$ can induce a semifield $P'=(\mathbb{F}_{p^n}, +, \ast)$ via the following transformation: choose any $\alpha \in \mathbb{F}_{p^n}^{*}$ and give $\ast$ by
\begin{equation*}
  (x\circ \alpha)\ast (\alpha\circ y)=x\circ y \text{ for all }x, y \in \mathbb{F}_{p^n}.
\end{equation*}
By Lemma 2 of \cite{AKM2022Ge}, if $P$ is right $\mathbb{F}_{p^s}$-linear, then $P'$ is also right $\mathbb{F}_{p^s}$-linear. The finite field $\mathbb{F}_{p^n}$ is a right $\mathbb{F}_{p^s}$-linear semifield (that is, $\circ$ is the field multiplication). For more right $\mathbb{F}_{p^s}$-linear (pre)semifields, see Section 3 of \cite{AKM2022Ge}.

Now we recall bent partitions $\Gamma_{1}, \Gamma_{2}, \Gamma_{1}^{\bullet}, \Gamma_{2}^{\bullet}, \Theta_{1}, \Theta_{2}$ given in \cite{AKM2022Ge}.

\begin{itemize}
  \item Let $n, s$ be positive integers satisfying $s \mid n$ and $gcd(p^n-1,p^s+p-1)=1$. Set $u=p^s+p-1$, and let $d$ be an integer with $du\equiv 1 \ mod \ (p^n-1)$. Let $P=(\mathbb{F}_{p^n}, +, \circ)$ be a (pre)semifield such that its dual $P^{\star}=(\mathbb{F}_{p^n}, +, \star)$ is right $\mathbb{F}_{p^s}$-linear. For given $x \in \mathbb{F}_{p^n}$, if $x=0$, then let $\eta_{x}=0$, and if $x\neq 0$, then let $\eta_{x}$ be given by $x\star \eta_{x}^{-1}=1$.
  \item Define
      \begin{equation*}
        U_{t}=\{(x, t\circ x^{u}): x \in \mathbb{F}_{p^n}^{*}\} \text{ if } t \in \mathbb{F}_{p^n}, \text{ and }U= \{(0, y): y \in \mathbb{F}_{p^n}\}.
      \end{equation*}
      Let $i_{0} \in \mathbb{F}_{p^s}$ be an arbitrary element. Define
      \begin{equation}\label{5}
       \Gamma_{1}=\{A_{i}, i \in \mathbb{F}_{p^s}\},
      \end{equation}
      where $A_{i}=\bigcup_{t \in \mathbb{F}_{p^n}: Tr_{s}^{n}(t)=i}U_{t}$ if $i\neq i_{0}$, $A_{i_{0}}=\bigcup_{t \in \mathbb{F}_{p^n}: Tr_{s}^{n}(t)=i_{0}}U_{t} \bigcup U$.
  \item Define
      \begin{equation*}
        U_{t}^{\bullet}=\{(x, x^{u}\circ t): x \in \mathbb{F}_{p^n}^{*}\} \text{ if } t \in \mathbb{F}_{p^n}, \text{ and }U= \{(0, y): y \in \mathbb{F}_{p^n}\}.
      \end{equation*}
      Let $i_{0} \in \mathbb{F}_{p^s}$ be an arbitrary element. Define
      \begin{equation}\label{6}
       \Gamma_{1}^{\bullet}=\{A_{i}^{\bullet}, i \in \mathbb{F}_{p^s}\},
      \end{equation}
      where $A_{i}^{\bullet}=\bigcup_{t \in \mathbb{F}_{p^n}: Tr_{s}^{n}(t)=i}U_{t}^{\bullet}$ if $i\neq i_{0}$, $A_{i_{0}}^{\bullet}=\bigcup_{t \in \mathbb{F}_{p^n}: Tr_{s}^{n}(t)=i_{0}}U_{t}^{\bullet} \bigcup U$.
  \item Define
      \begin{equation*}
        V_{t}=\{(t\circ x^{d}, x): x \in \mathbb{F}_{p^n}^{*}\} \text{ if } t \in \mathbb{F}_{p^n}, \text{ and }V=\{(x, 0):  x \in \mathbb{F}_{p^n}\}.
      \end{equation*}
      Let $i_{0} \in \mathbb{F}_{p^s}$ be an arbitrary element. Define
      \begin{equation}\label{7}
       \Gamma_{2}=\{B_{i}, i \in \mathbb{F}_{p^s}\},
      \end{equation}
     where $B_{i}=\bigcup_{t \in \mathbb{F}_{p^n}: Tr_{s}^{n}(t)=i}V_{i}$ if $i\neq i_{0}$, $B_{i_{0}}=\bigcup_{t \in \mathbb{F}_{p^n}: Tr_{s}^{n}(t)=i_{0}}V_{i}\bigcup V$.
  \item Define
      \begin{equation*}
        V_{t}^{\bullet}=\{(x^{d}\circ t, x): x \in \mathbb{F}_{p^n}^{*}\} \text{ if } t \in \mathbb{F}_{p^n}, \text{ and }V=\{(x, 0):  x \in \mathbb{F}_{p^n}\}.
      \end{equation*}
      Let $i_{0} \in \mathbb{F}_{p^s}$ be an arbitrary element. Define
      \begin{equation}\label{8}
       \Gamma_{2}^{\bullet}=\{B_{i}^{\bullet}, i \in \mathbb{F}_{p^s}\},
      \end{equation}
     where $B_{i}^{\bullet}=\bigcup_{t \in \mathbb{F}_{p^n}: Tr_{s}^{n}(t)=i}V_{i}^{\bullet}$ if $i\neq i_{0}$, $B_{i_{0}}=\bigcup_{t \in \mathbb{F}_{p^n}: Tr_{s}^{n}(t)=i_{0}}V_{i}^{\bullet}\bigcup V$.
  \item Define
      \begin{equation*}
        X_{t}=\{(t\eta_{x}^{d}, x): x \in \mathbb{F}_{p^n}^{*}\} \text{ if } t \in \mathbb{F}_{p^n}, \text{ and }X=\{(x, 0): x \in \mathbb{F}_{p^n}\}.
      \end{equation*}
      Let $i_{0} \in \mathbb{F}_{p^s}$ be an arbitrary element. Define
      \begin{equation}\label{9}
       \Theta_{1}=\{S_{i}, i \in \mathbb{F}_{p^s}\},
      \end{equation}
      where $S_{i}=\bigcup_{t \in \mathbb{F}_{p^n}: Tr_{s}^{n}(t)=i}X_{t}$ if $i\neq i_{0}$, $S_{i_{0}}=\bigcup_{t \in \mathbb{F}_{p^n}: Tr_{s}^{n}(t)=i_{0}}X_{t} \bigcup X$.
  \item Define
      \begin{equation*}
        Y_{t}=\{(x, t\eta_{x}^{u}): x \in \mathbb{F}_{p^n}^{*}\} \text{ if } t \in \mathbb{F}_{p^n}, \text{ and }Y= \{(0, y): y \in \mathbb{F}_{p^n}\}.
      \end{equation*}
      Let $i_{0} \in \mathbb{F}_{p^s}$ be an arbitrary element. Define
      \begin{equation}\label{10}
       \Theta_{2}=\{T_{i}, i \in \mathbb{F}_{p^s}\},
      \end{equation}
      where $T_{i}=\bigcup_{t \in \mathbb{F}_{p^n}: Tr_{s}^{n}(t)=i}Y_{t}$ if $i\neq i_{0}$, $T_{i_{0}}=\bigcup_{t \in \mathbb{F}_{p^n}: Tr_{s}^{n}(t)=i_{0}}Y_{t} \bigcup Y$.
\end{itemize}

\begin{remark}\label{1}
In the finite field case, that is, $\circ$ and $\star$ are the field multiplication, then $\Gamma_{1}=\Gamma_{1}^{\bullet}=\Theta_{2}, \Gamma_{2}=\Gamma_{2}^{\bullet}=\Theta_{1}$, which reduces to the two classes bent partitions given in \cite{AM2022Be,MP2021Be}.
\end{remark}

\begin{remark}\label{2}
In fact, for the parameter $u$ in the bent partitions $\Gamma_{1}, \Gamma_{1}^{\bullet}, \Gamma_{2}, \Gamma_{2}^{\bullet}, \Theta_{1}, \Theta_{2}$, one can consider the more general form $u\equiv p^{j} \ mod \ (p^s-1)$ by the proofs in \cite{AKM2022Ge}.
\end{remark}

\section{Relations between certain bent partitions and certain vectorial dual-bent functions}
\label{sec:3}
Throughout this section, we consider bent partitions and vectorial dual-bent functions satisfying the following conditions, respectively.

\textbf{Condition $\mathcal{C}$}: Let $n$ be an even positive integer, $s$ be a positive integer with $s\leq \frac{n}{2}$. Let $\Gamma=\{A_{i}, i \in V_{s}^{(p)}\}$ be a bent partition of $V_{n}^{(p)}$ which satisfies that $\mathbb{F}_{p}^{*}A_{i}=A_{i}$ for all $i \in V_{s}^{(p)}$ and all bent functions $f$ constructed from $\Gamma$ are regular (that is, $\varepsilon_{f}=1$) or weakly regular but not regular (that is, $\varepsilon_{f}=-1$). We denote by $\varepsilon=\varepsilon_{f}$ for all bent functions $f$ constructed from $\Gamma$.

\textbf{Condition A}: Let $n$ be an even positive integer, $s$ be a positive integer with $s\leq \frac{n}{2}$. Let $F: V_{n}^{(p)}\rightarrow V_{s}^{(p)}$ be a vectorial dual-bent function with $(F_{c})^{*}=(F^{*})_{c}, c \in V_{s}^{(p)}\backslash \{0\}$ for a vectorial dual $F^{*}$ of $F$ and all component functions being regular or weakly regular but not regular, that is, $\varepsilon_{F_{c}}, c \in V_{s}^{(p)} \backslash \{0\}$ are all the same. We denote by $\varepsilon=\varepsilon_{F_{c}}$ for all $c \in V_{s}^{(p)}\backslash \{0\}$.

It is easy to see that the known bent partitions, including partial spreads and $\Gamma_{i}, \Gamma_{i}^{\bullet}, \Theta_{i}, i=1, 2$ defined by (5)-(10), all satisfy $\mathbb{F}_{p}^{*}A_{i}=A_{i}, i \in V_{s}^{(p)}$. By the results in \cite{Dillon1974El,LL2014Be,AKM2022Ge}, all bent functions constructed from partial spreads and $\Gamma_{i}, \Gamma_{i}^{\bullet}, \Theta_{i}, i=1, 2$ are regular. Thus, the known bent partitions all satisfy Condition $\mathcal{C}$ with $\varepsilon=1$. Moreover, when $p=2$, it is easy to see that Condition $\mathcal{C}$ is trivial for any bent partition of $V_{n}^{(2)}$ of depth powers of $2$. In this section, we present relations between bent partitions satisfying Condition $\mathcal{C}$ and vectorial dual-bent functions satisfying Condition A. First, we need a lemma.

\begin{lemma}\label{1}
Let $n$ be an even positive integer, $s$ be a positive integer with $s\leq \frac{n}{2}$, and $F: V_{n}^{(p)}\rightarrow V_{s}^{(p)}$. Then the following two statements are equivalent.

(1) $F$ is a vectorial dual-bent function satisfying Condition A.

(2) There exist pairwise disjoint sets $W_{i} \subseteq V_{n}^{(p)}, i \in V_{s}^{(p)}$ with $\bigcup_{i \in V_{s}^{(p)}}W_{i}=V_{n}^{(p)}$ and a constant $\varepsilon \in \{\pm1\}$ ($\varepsilon=1$ if $p=2$) such that for any nonempty set $I\subseteq V_{s}^{(p)}$,
\begin{equation}\label{11}
  \chi_{u}(D_{F,I})=p^{n-s}\delta_{\{0\}}(u)|I|+\varepsilon p^{\frac{n}{2}-s}(p^{s}\delta_{W_{I}}(u)-|I|), u \in V_{n}^{(p)},
\end{equation}
where $D_{F, I}=\{x \in V_{n}^{(p)}: F(x) \in I\}$, $W_{I}=\bigcup_{i \in I}W_{i}$, and for any set $S$, $\delta_{S}$ denotes the indicator function of $S$.
\end{lemma}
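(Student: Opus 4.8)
The plan is to prove the two implications separately, with the Walsh transform of the component functions $F_c$ serving as the bridge between the combinatorial identity (11) and the definition of vectorial dual-bentness satisfying Condition A. The key computational device throughout is the standard expansion of a character sum over a preimage set in terms of Walsh coefficients: for $I\subseteq V_s^{(p)}$ and $u\in V_n^{(p)}$,
\begin{equation*}
\chi_u(D_{F,I})=\sum_{x\in V_n^{(p)}}\Bigl(\tfrac{1}{p^s}\sum_{c\in V_s^{(p)}}\zeta_p^{\langle c, F(x)\rangle_s}\chi_{-\mathbf{1}_I(F(x))}\Bigr)\cdots
\end{equation*}
more precisely, using $\mathbf{1}_{F(x)\in I}=p^{-s}\sum_{c\in V_s^{(p)}}\bigl(\sum_{j\in I}\zeta_p^{\langle c, F(x)-j\rangle_s}\bigr)$, one gets
\begin{equation*}
\chi_u(D_{F,I})=p^{-s}\sum_{c\in V_s^{(p)}}\sum_{j\in I}\zeta_p^{-\langle c,j\rangle_s}\,\overline{W_{F_c}(u)}\quad\text{(up to sign conventions in \eqref{1})},
\end{equation*}
where the $c=0$ term contributes $p^{-s}|I|\cdot p^n\delta_{\{0\}}(u)$ and the $c\ne 0$ terms involve the bent Walsh coefficients $W_{F_c}(u)=\varepsilon p^{n/2}\zeta_p^{(F_c)^*(u)}$.

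For the direction (1) $\Rightarrow$ (2): assuming $F$ is vectorial dual-bent satisfying Condition A, all $F_c$ ($c\ne 0$) are weakly regular with the same $\varepsilon$ and $(F_c)^*=(F^*)_c$. Substituting $W_{F_c}(u)=\varepsilon p^{n/2}\zeta_p^{(F^*)_c(u)}$ (for odd $p$; for $p=2$ use \eqref{2} with $\varepsilon=1$) into the expansion and interchanging the order of summation, the $c\ne 0$ contribution becomes $\varepsilon p^{n/2-s}\sum_{c\ne 0}\sum_{j\in I}\zeta_p^{\langle c, F^*(u)-j\rangle_s}$. Splitting off the inner sum over all $c\in V_s^{(p)}$ versus $c=0$ converts this into $\varepsilon p^{n/2-s}\bigl(p^s\,|\{j\in I: F^*(u)=j\}|-|I|\bigr)=\varepsilon p^{n/2-s}\bigl(p^s\delta_{W_I}(u)-|I|\bigr)$ once we \emph{define} $W_i=\{u\in V_n^{(p)}: F^*(u)=i\}$, i.e. $W_i$ is the preimage partition of the vectorial dual $F^*$. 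Collecting terms yields exactly \eqref{11}. One must also check $\varepsilon=1$ when $p=2$, which is immediate from \eqref{2}, and that the $W_i$ are pairwise disjoint with union $V_n^{(p)}$, which holds because $F^*$ is a function.

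For the direction (2) $\Rightarrow$ (1): given the sets $W_i$ and the identity \eqref{11}, define $F^*: V_n^{(p)}\to V_s^{(p)}$ by $F^*(u)=i$ for $u\in W_i$. We must recover the Walsh values of each $F_c$. Applying \eqref{11} with singletons $I=\{j\}$ and using character (Fourier) inversion over $V_s^{(p)}$ to isolate a fixed component $c\ne 0$, i.e. computing $\sum_{j\in V_s^{(p)}}\zeta_p^{\langle c,j\rangle_s}\chi_u(D_{F,\{j\}})$, the $\delta_{\{0\}}(u)$ and constant $-|I|$ terms cancel (their dependence on $j$ is trivial so they sum to zero against the nontrivial character $c\ne0$), leaving $W_{F_c}(u)=\varepsilon p^{n/2}\zeta_p^{\langle c, F^*(u)\rangle_s}$. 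This simultaneously shows each $F_c$ is bent (modulus $p^{n/2}$), weakly regular with constant $\varepsilon_{F_c}=\varepsilon$ independent of $c$, and that its dual is $(F_c)^*=\langle c, F^*(\cdot)\rangle_s=(F^*)_c$; hence $F^*$ is a vectorial dual of $F$ in the sense of Definition 1, so $F$ is vectorial dual-bent satisfying Condition A. I expect the main obstacle to be bookkeeping the sign/conjugation conventions in \eqref{1}–\eqref{3} (and the separate handling of $p=2$ via \eqref{2}, where the ``$\zeta_p^{f^*}$'' becomes $(-1)^{f^*}$ and $\varepsilon$ is forced to $1$), together with verifying that the Fourier-inversion step genuinely kills the lower-order terms for every nonzero $c$; the core algebraic identity is a routine double-sum interchange.
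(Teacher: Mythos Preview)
Your proposal is correct and follows essentially the same route as the paper: both directions hinge on the identity
\[
\chi_u(D_{F,i})=p^{n-s}\delta_{\{0\}}(u)+p^{-s}\sum_{c\in V_s^{(p)}\setminus\{0\}}W_{F_c}(-u)\,\zeta_p^{-\langle c,i\rangle_s},
\]
and the $(2)\Rightarrow(1)$ step is precisely your Fourier inversion $\sum_{j}\zeta_p^{\langle c,j\rangle_s}\chi_u(D_{F,\{j\}})=W_{F_c}(-u)$. Two small bookkeeping points to tidy up: with the paper's sign convention the inner Walsh value is $W_{F_c}(-u)$ rather than $\overline{W_{F_c}(u)}$, so the natural choice is $W_i=\{u:F^*(-u)=i\}$; and to conclude that $F^*$ is a \emph{vectorial dual} in the sense of Definition~1 you must note that each $(F_c)^*$ is itself (weakly regular) bent, hence $F^*$ is vectorial bent --- the paper makes this explicit, you leave it implicit.
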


\begin{proof}
By Proposition 3 of \cite{WF2022Ne} (Note that although Proposition 3 of \cite{WF2022Ne} only considers the case of $p$ being odd, $p=2$ also holds), for any $u \in V_{n}^{(p)}, i \in V_{s}^{(p)}$ we have
\begin{equation}\label{12}
     \chi_{u}(D_{F, i})=p^{n-s}\delta_{\{0\}}(u)+p^{-s}\sum_{c \in V_{s}^{(p)}\backslash \{0\}}W_{F_{c}}(-u)\zeta_{p}^{-\langle c, i\rangle_{s}},
\end{equation}
where $D_{F, i}=\{x \in V_{n}^{(p)}: F(x)=i\}$.

(1) $\Rightarrow$ (2): If $F$ is a vectorial dual-bent function satisfying Condition A (Note that if $p=2$, then $\varepsilon=1$ since all Boolean bent functions are regular), then
\begin{equation}\label{13}
\begin{split}
\chi_{u}(D_{F, i})&=p^{n-s}\delta_{\{0\}}(u)+\varepsilon p^{\frac{n}{2}-s}\sum_{c \in V_{s}^{(p)}\backslash \{0\}}\zeta_{p}^{(F_{c})^{*}(-u)-\langle c, i\rangle_{s}}\\
&=p^{n-s}\delta_{\{0\}}(u)+\varepsilon p^{\frac{n}{2}-s}\sum_{c \in V_{s}^{(p)}\backslash \{0\}}\zeta_{p}^{(F^{*})_{c}(-u)-\langle c, i\rangle_{s}} \\
&=p^{n-s}\delta_{\{0\}}(u)+\varepsilon p^{\frac{n}{2}-s}\sum_{c \in V_{s}^{(p)}\backslash \{0\}}\zeta_{p}^{\langle c, F^{*}(-u)-i\rangle_{s}} \\ &=p^{n-s}\delta_{\{0\}}(u)+\varepsilon p^{\frac{n}{2}-s}(p^{s}\delta_{\{0\}}(F^{*}(-u)-i)-1).\\
\end{split}
\end{equation}
Define $W_{i}=\{x \in V_{n}^{(p)}: F^{*}(-x)=i\}, i \in V_{s}^{(p)}$. Then $W_{i}\bigcap W_{j}=\emptyset$ for any $i\neq j$ and $\bigcup_{i \in V_{s}^{(p)}}W_{i}=V_{n}^{(p)}$. By (13), for any nonempty set $I\subseteq V_{s}^{(p)}$ and $u \in V_{n}^{(p)}$ we have
\begin{equation*}
\begin{split}
 \chi_{u}(D_{F, I})&=\sum_{i \in I}\chi_{u}(D_{F, i})\\
 &=\sum_{i \in I}p^{n-s}\delta_{\{0\}}(u)+\varepsilon p^{\frac{n}{2}-s}(p^{s}\delta_{W_{i}}(u)-1)\\
 &=p^{n-s}\delta_{\{0\}}(u)|I|+\varepsilon p^{\frac{n}{2}-s}(p^{s}\delta_{W_{I}}(u)-|I|).\\
 \end{split}
\end{equation*}

(2) $\Rightarrow$ (1): By the assumption on $W_{i}, i \in V_{s}^{(p)}$, we have that for any $x \in V_{n}^{(p)}$, there exists a unique $i \in V_{s}^{(p)}$ such that $x \in W_{i}$. Define $G: V_{n}^{(p)}\rightarrow V_{s}^{(p)}$ by
\begin{equation*}
  G(x)=i \text{ if } -x\in W_{i}.
\end{equation*}
By the definition of $G$, for any $u \in V_{n}^{(p)}, i \in V_{s}^{(p)}$ we have
\begin{equation}\label{14}
  \begin{split}
   \chi_{u}(D_{F, i})=p^{n-s}\delta_{\{0\}}(u)+\varepsilon p^{\frac{n}{2}-s}(p^s\delta_{\{0\}}(G(-u)-i)-1).
  \end{split}
\end{equation}
For any $c \in V_{s}^{(p)}\backslash \{0\}$,
\begin{equation*}
  \begin{split}
  W_{F_{c}}(-u)&=\sum_{x \in V_{n}^{(p)}}\zeta_{p}^{\langle c, F(x)\rangle_{s}+\langle u, x\rangle_{n}}\\
 &=\sum_{i \in V_{s}^{(p)}}\sum_{x \in V_{n}^{(p)}: F(x)=i}\zeta_{p}^{\langle c, i\rangle_{s}+\langle u, x\rangle_{n}}\\
 \end{split}
 \end{equation*}
 \begin{equation}\label{15}
 \begin{split}
 &=\sum_{i \in V_{s}^{(p)}}\zeta_{p}^{\langle c, i\rangle_{s}}\chi_{u}(D_{F, i})\\
 &=\sum_{i \in V_{s}^{(p)} \backslash \{G(-u)\}}\zeta_{p}^{\langle c, i\rangle_{s}}(p^{n-s}\delta_{\{0\}}(u)-\varepsilon p^{\frac{n}{2}-s})+(p^{n-s}\delta_{\{0\}}(u)+\varepsilon p^{\frac{n}{2}-s}(p^{s}-1))\zeta_{p}^{\langle c, G(-u)\rangle_{s}}\\
 &=(p^{n-s}\delta_{\{0\}}(u)-\varepsilon p^{\frac{n}{2}-s})\sum_{i \in V_{s}^{(p)}}\zeta_{p}^{\langle c, i\rangle_{s}}+\varepsilon p^{\frac{n}{2}}\zeta_{p}^{G_{c}(-u)}\\
 &=\varepsilon p^{\frac{n}{2}}\zeta_{p}^{G_{c}(-u)}.
  \end{split}
\end{equation}
By (15) and the assumption that $\varepsilon=1$ if $p=2$, $F$ is a vectorial bent function with $\varepsilon_{F_{c}}=\varepsilon$ and $(F_{c})^{*}=G_{c}$ for any $c \in V_{s}^{(p)} \backslash \{0\}$.
Since $F_{c}$ is a weakly regular bent function, we have that $G_{c}=(F_{c})^{*}$ is also weakly regular bent and $G$ is vectorial bent. Thus, $F$ is vectorial dual-bent with $\varepsilon_{F_{c}}=\varepsilon$ and $(F_{c})^{*}=(F^{*})_{c}$ for any $c \in V_{s}^{(p)} \backslash \{0\}$, where $F^{*}=G$, that is, $F$ satisfies Condition A.
\end{proof}

Based on Lemma 1, we have the following theorem.

\begin{theorem}\label{1}
Let $F: V_{n}^{(p)}\rightarrow V_{s}^{(p)}$ be a vectorial dual-bent function satisfying Condition A. Define
\begin{equation*}
  A_{i}=D_{F, i}, \ i \in V_{s}^{(p)},
\end{equation*}
where $D_{F, i}=\{x \in V_{n}^{(p)}: F(x)=i\}$. Then $\Gamma=\{A_{i}, i \in V_{s}^{(p)}\}$ is a bent partition satisfying Condition $\mathcal{C}$.
\end{theorem}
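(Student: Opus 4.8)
The plan is to verify the three ingredients of Condition $\mathcal{C}$ in turn: that $\Gamma$ is a partition of $V_{n}^{(p)}$, that it is a bent partition, and that $\mathbb{F}_{p}^{*}A_{i}=A_{i}$ for all $i$ with all bent functions constructed from $\Gamma$ sharing one constant $\varepsilon\in\{\pm1\}$. The first point is immediate: the preimage sets $D_{F,i}$, $i\in V_{s}^{(p)}$, of the distinct points of $V_{s}^{(p)}$ under $F$ are pairwise disjoint and cover $V_{n}^{(p)}$, so $\Gamma$ is a partition of depth $K=p^{s}$, which is divisible by $p$.

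For the bent partition property I would take an arbitrary function $f\colon V_{n}^{(p)}\to\mathbb{F}_{p}$ constructed from $\Gamma$. By Definition 2 there is a partition $\{I_{c}:c\in\mathbb{F}_{p}\}$ of the index set $V_{s}^{(p)}$ with $|I_{c}|=p^{s-1}$ for every $c$ and $f^{-1}(c)=D_{F,I_{c}}$. Then
\[
W_{f}(a)=\sum_{x\in V_{n}^{(p)}}\zeta_{p}^{\,f(x)-\langle a,x\rangle_{n}}=\sum_{c\in\mathbb{F}_{p}}\zeta_{p}^{\,c}\,\chi_{-a}\bigl(D_{F,I_{c}}\bigr),
\]
and I would substitute the formula of Lemma 1 for each $\chi_{-a}(D_{F,I_{c}})$, using $\delta_{\{0\}}(-a)=\delta_{\{0\}}(a)$ and $|I_{c}|=p^{s-1}$. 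The contributions that do not depend on $c$ are annihilated by $\sum_{c\in\mathbb{F}_{p}}\zeta_{p}^{\,c}=0$, leaving
\[
W_{f}(a)=\varepsilon\,p^{\frac{n}{2}}\sum_{c\in\mathbb{F}_{p}}\zeta_{p}^{\,c}\,\delta_{W_{I_{c}}}(-a).
\]
Since $\{I_{c}\}$ partitions $V_{s}^{(p)}$ and $\{W_{i}\}$ partitions $V_{n}^{(p)}$, the sets $W_{I_{c}}=\bigcup_{i\in I_{c}}W_{i}$ partition $V_{n}^{(p)}$, so $-a$ lies in exactly one of them, say $W_{I_{c_{0}}}$ with $c_{0}=c_{0}(a)\in\mathbb{F}_{p}$; hence $W_{f}(a)=\varepsilon p^{\frac{n}{2}}\zeta_{p}^{\,c_{0}(a)}$. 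In particular $|W_{f}(a)|=p^{\frac{n}{2}}$ for all $a$, so $f$ is bent and $\Gamma$ is a bent partition.

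It remains to check the extra constraints of Condition $\mathcal{C}$. Comparing $W_{f}(a)=\varepsilon p^{\frac{n}{2}}\zeta_{p}^{\,c_{0}(a)}$ with formula (3) in the case $n$ even shows $f$ is weakly regular with $\varepsilon_{f}=\varepsilon$ and dual $f^{*}=c_{0}(\cdot)$; since $\varepsilon$ is the constant attached to $F$ by Condition A, lies in $\{\pm1\}$, and equals $1$ when $p=2$, every bent function constructed from $\Gamma$ is regular (when $\varepsilon=1$) or weakly regular but not regular (when $\varepsilon=-1$), with the common value $\varepsilon$. For $\mathbb{F}_{p}^{*}A_{i}=A_{i}$ the case $p=2$ is trivial, so assume $p$ odd. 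From $F_{ac}=aF_{c}$ and $(F_{c})^{*}=(F^{*})_{c}$ I get $(aF_{c})^{*}=(F_{ac})^{*}=(F^{*})_{ac}=a(F^{*})_{c}=a(F_{c})^{*}$ for all $a\in\mathbb{F}_{p}^{*}$ and $c\in V_{s}^{(p)}\setminus\{0\}$; as $F_{c}$ is weakly regular bent, this symmetry exhibits $F_{c}$ as a weakly regular vectorial dual-bent function with $(bF_{c})^{*}=b(F_{c})^{*}$ for all $b\in\mathbb{F}_{p}^{*}$, so Proposition 1 forces $F_{c}$ to be a $(p-1)$-form. Hence $F_{c}(ax)=a^{p-1}F_{c}(x)=F_{c}(x)$ for all $a\in\mathbb{F}_{p}^{*}$; since this holds for every $c\neq0$ and $\langle\cdot,\cdot\rangle_{s}$ is non-degenerate, $F(ax)=F(x)$, whence $aA_{i}=\{ax:F(x)=i\}=\{y:F(y)=i\}=A_{i}$. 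The routine part is the Walsh-transform manipulation, which is forced once Lemma 1 is in hand; the delicate point I expect to be the scalar-invariance of the $A_{i}$ for odd $p$, which hinges on squeezing the $(p-1)$-form property of the components out of Condition A via Proposition 1.
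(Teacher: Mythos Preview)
Your proof is correct, and the Walsh-transform computation establishing that $\Gamma$ is a bent partition with common sign $\varepsilon$ is essentially the paper's argument, just organized slightly differently (the paper phrases it in terms of $\chi_{u}(D_{f,j})$ and $W_{f}(-u)$, you in terms of $\chi_{-a}(D_{F,I_{c}})$ and $W_{f}(a)$).

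The genuine difference lies in how the scalar invariance $\mathbb{F}_{p}^{*}A_{i}=A_{i}$ is obtained. The paper takes a detour: having computed $\chi_{u}(D_{f,j})$ for an arbitrary $f$ built from $\Gamma$, it feeds this character formula back into Lemma~1 (direction (2)$\Rightarrow$(1) with $s=1$) to conclude that \emph{every} such $f$ is itself a vectorial dual-bent function with $(cf)^{*}=c f^{*}$, and then invokes Proposition~1 to get that $f$ is a $(p-1)$-form; specialising to $f=F_{c}$ gives the claim. Your route is shorter and more structural: from $F_{ac}=aF_{c}$ and Condition~A you read off $(aF_{c})^{*}=a(F_{c})^{*}$ directly, so Proposition~1 applies to $F_{c}$ without re-invoking Lemma~1. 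Your argument is cleaner for the theorem at hand; the paper's approach buys the additional byproduct that every bent function constructed from $\Gamma$ (not just the components $F_{c}$) is a weakly regular $(p-1)$-form with $(cf)^{*}=cf^{*}$, a fact they later echo when specialising to $p=3$ in Corollary~1.
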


\begin{proof}
By Lemma 1 and its proof, for any $i \in V_{s}^{(p)}$ and $u \in V_{n}^{(p)}$,
\begin{equation*}
  \chi_{u}(A_{i})=\chi_{u}(D_{F, i})=p^{n-s}\delta_{\{0\}}(u)+\varepsilon p^{\frac{n}{2}-s}(p^{s}\delta_{\{0\}}(F^{*}(-u)-i)-1),
\end{equation*}
where $\varepsilon=1$ if $p=2$ since all Boolean bent functions are regular. For any union $S$ of $p^{s-1}$ sets of $\{A_{i}: i \in V_{s}^{(p)}\}$, we have
\begin{equation}\label{16}
 \chi_{u}(S)=\left\{\begin{split}
                      p^{n-1}\delta_{\{0\}}(u)+\varepsilon p^{\frac{n}{2}-1}(p-1),  & \text{ if } A_{F^{*}(-u)}\subseteq S,\\
                      p^{n-1}\delta_{\{0\}}(u)-\varepsilon p^{\frac{n}{2}-1},  & \text{ if } A_{F^{*}(-u)}\nsubseteq S.
                    \end{split}
 \right.
\end{equation}
Let $f$ be an arbitrary function such that for every $j \in \mathbb{F}_{p}$, there are exactly $p^{s-1}$ sets $A_{i}$ in $\Gamma$ in its preimage. Define $g(u)=f(A_{F^{*}(-u)})$. Note that $g$ is a $p$-ary function from $V_{n}^{(p)}$ to $\mathbb{F}_{p}$. Then by (16), we have
\begin{equation}\label{17}
  \chi_{u}(D_{f, j})=\left\{\begin{split}
                               p^{n-1}\delta_{\{0\}}(u)+\varepsilon p^{\frac{n}{2}-1}(p-1), & \text{ if } j=g(u), \\
                               p^{n-1}\delta_{\{0\}}(u)-\varepsilon p^{\frac{n}{2}-1},  & \text{ if } j\neq g(u).
                            \end{split}\right.
\end{equation}
By (17), for any $u \in V_{n}^{(p)}$ we have
\begin{equation}\label{18}
 \begin{split}
    W_{f}(-u)& =\sum_{x \in V_{n}^{(p)}}\zeta_{p}^{f(x)+\langle u, x\rangle_{n}} \\
     & =\sum_{j \in \mathbb{F}_{p}}\zeta_{p}^{j}\sum_{x \in V_{n}^{(p)}: f(x)=j}\zeta_{p}^{\langle u, x\rangle_{n}}\\
     & =\sum_{j \in \mathbb{F}_{p}}\zeta_{p}^{j}\chi_{u}(D_{f, j})\\
   & =\sum_{j \in \mathbb{F}_{p}\backslash \{g(u)\}}\zeta_{p}^{j}(p^{n-1}\delta_{\{0\}}(u)-\varepsilon p^{\frac{n}{2}-1})+\zeta_{p}^{g(u)}(p^{n-1}\delta_{\{0\}}(u)+\varepsilon p^{\frac{n}{2}-1}(p-1))\\
   &=(p^{n-1}\delta_{\{0\}}(u)-\varepsilon p^{\frac{n}{2}-1})\sum_{j \in \mathbb{F}_{p}}\zeta_{p}^{j}+\varepsilon p^{\frac{n}{2}}\zeta_{p}^{g(u)}\\
   &=\varepsilon p^{\frac{n}{2}}\zeta_{p}^{g(u)}.
 \end{split}
\end{equation}
By (18) and $\varepsilon=1$ if $p=2$, $f$ is a weakly regular bent function with $\varepsilon_{f}=\varepsilon$ and $f^{*}(x)=g(-x)$. Let
\begin{equation*}
 W_{j}=\{u \in V_{n}^{(p)}: g(u)=j\}, j \in \mathbb{F}_{p},
\end{equation*}
then $W_{j}, j \in \mathbb{F}_{p}$ are pairwise disjoint and $\bigcup_{j \in \mathbb{F}_{p}}W_{j}=V_{n}^{(p)}$. By (17), for any $u \in V_{n}^{(p)}$ and nonempty set $J \subseteq \mathbb{F}_{p}$ we have
\begin{equation}\label{19}
  \chi_{u}(D_{f, J})=p^{n-1}\delta_{\{0\}}(u)|J|+\varepsilon p^{\frac{n}{2}-1}(p\delta_{W_{J}}(u)-|J|),
\end{equation}
where $D_{f, J}=\{x \in V_{n}^{(p)}: f(x) \in J\}$, $W_{J}=\bigcup_{j \in J}W_{j}$. By (19) and Lemma 1, $f$ is vectorial dual-bent with $(cf)^{*}=c(\beta f^{*}), c \in \mathbb{F}_{p}^{*}$ for some $\beta \in \mathbb{F}_{p}^{*}$ (since all vectorial duals of $f$ are $cf^{*}, c \in \mathbb{F}_{p}^{*}$). Let $c=1$, we obtain $\beta=1$, that is, $f$ is vectorial dual-bent with $(cf)^{*}=cf^{*}, c \in \mathbb{F}_{p}^{*}$. By Proposition 1, $f$ is a $(p-1)$-form. In particular, $F_{c}$ is a $(p-1)$-form for any $c \in \mathbb{F}_{p^s}^{*}$, which yields that $F(\alpha x)=F(x)$ for any $\alpha \in \mathbb{F}_{p}^{*}$ and $\mathbb{F}_{p}^{*}A_{i}=A_{i}, i \in V_{s}^{(p)}$. Hence, $\Gamma$ is a bent partition satisfying Condition $\mathcal{C}$.
\end{proof}

By Theorem 1, we have the following corollary.

\begin{corollary}\label{1}
Let $n$ be an even positive integer. Let $f: V_{n}^{(p)}\rightarrow \mathbb{F}_{p}$ be a weakly regular bent function of $(p-1)$-form, then $\{D_{f, j}, j \in \mathbb{F}_{p}\}$ is a bent partition of $V_{n}^{(p)}$, where $D_{f, j}=\{x \in V_{n}^{(p)}: f(x)=j\}$.
\end{corollary}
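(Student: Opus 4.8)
The plan is to obtain Corollary 1 as the case $s=1$ of Theorem 1. I view $f$ as a vectorial function $f\colon V_n^{(p)}\rightarrow V_1^{(p)}=\mathbb{F}_p$; since $n$ is even we have $s=1\le\frac{n}{2}$, and the family $\{D_{f,j}:j\in\mathbb{F}_p\}$ is exactly the partition $\{A_i:i\in V_1^{(p)}\}$ produced by Theorem 1 from $F=f$. Hence it suffices to verify that $f$ (as a map into $V_1^{(p)}$) satisfies Condition A.

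First I would dispatch the ``vectorial dual-bent with matching duals'' part. Because $f$ is a weakly regular bent function of $(p-1)$-form, Proposition 1 tells us that $f$ is a weakly regular vectorial dual-bent function with $(cf)^{*}=cf^{*}$ for every $c\in\mathbb{F}_p^{*}$. Taking the vectorial dual $F^{*}=f^{*}$, the component functions satisfy $(F_c)^{*}=(cf)^{*}=cf^{*}=(F^{*})_c$ for all $c\in V_1^{(p)}\setminus\{0\}=\mathbb{F}_p^{*}$, which is the first requirement of Condition A.

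The remaining requirement is that all component functions $cf$, $c\in\mathbb{F}_p^{*}$, share the same sign $\varepsilon_{cf}$. For $p=2$ this is vacuous (there is only $c=1$, and $\varepsilon=1$ since Boolean bent functions are regular). For odd $p$ I would argue with the Galois action on $\mathbb{Q}(\zeta_p)$: letting $\sigma_c$ denote the automorphism fixing $\mathbb{Q}$ with $\sigma_c(\zeta_p)=\zeta_p^{c}$, a direct manipulation of the definition of the Walsh transform gives $W_{cf}(a)=\sigma_c\!\bigl(W_f(c^{-1}a)\bigr)$ for all $a\in V_n^{(p)}$, where one uses the $\mathbb{F}_p$-bilinearity of $\langle\cdot,\cdot\rangle_n$. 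Applying $\sigma_c$ to $W_f(c^{-1}a)=\varepsilon_f p^{n/2}\zeta_p^{f^{*}(c^{-1}a)}$ and using that $n$ is even, so that $p^{n/2}\in\mathbb{Q}$ is fixed by $\sigma_c$, yields $W_{cf}(a)=\varepsilon_f p^{n/2}\zeta_p^{cf^{*}(c^{-1}a)}$; thus $cf$ is weakly regular bent with $\varepsilon_{cf}=\varepsilon_f$ for every $c\in\mathbb{F}_p^{*}$, and Condition A holds with $\varepsilon=\varepsilon_f$. Theorem 1 then gives that $\{D_{f,j}:j\in\mathbb{F}_p\}$ is a bent partition of $V_n^{(p)}$ (in fact one satisfying Condition $\mathcal{C}$), which proves the corollary. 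The only real obstacle is this $\varepsilon$-uniformity step, and it is precisely there that evenness of $n$ is indispensable: the Galois computation pins down the sign only because the prefactor $p^{n/2}$ is rational.
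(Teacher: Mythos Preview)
Your proof is correct and follows essentially the same route as the paper: invoke Proposition~1 to get $(cf)^{*}=cf^{*}$, establish $\varepsilon_{cf}=\varepsilon_{f}$ for all $c\in\mathbb{F}_p^{*}$ using that $n$ is even, and then apply Theorem~1. The only difference is that the paper cites an external result (Theorem~1 of \cite{CM2013A}) for the $\varepsilon$-uniformity step, whereas you supply the short Galois-action computation directly; this is a presentational elaboration rather than a genuinely different argument.
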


\begin{proof}
By Proposition 1, $f$ is a weakly regular vectorial dual-bent function with $(cf)^{*}=cf^{*}$. Since $n$ is even, $\varepsilon_{cf}=\varepsilon_{f}$ for all $c \in \mathbb{F}_{p}^{*}$ (see Theorem 1 of \cite{CM2013A}). Then by Theorem 1, the conclusion holds.
\end{proof}

A bent partition $\Gamma=\{A_{1}, \dots, A_{K}\}$ of depth $K$ is called \textit{coming from a normal bent partition} if there is $U\subseteq A_{i}$ for some $i$ such that $\{U, A_{1}, \dots, A_{i-1}, A_{i}\backslash U, A_{i+1}, \dots, A_{K}\}$ is a normal bent partition. In \cite{AM2022Be}, there is an open problem: Do bent partitions exist which are not coming from a normal bent partition of depth $K> 2$? In the following, we provide a positive answer for this open problem. By the definition of $l$-form, a ternary function $f$ is a $2$-form if and only if $f(x)=f(-x)$. Let $n$ be an even positive integer. If $f: V_{n}^{(3)}\rightarrow \mathbb{F}_{3}$ with $f(x)=f(-x)$ is a ternary weakly regular but not regular bent function (that is, $\varepsilon_{f}=-1$), then by Corollary 1, $\{D_{f, 0}, D_{f, 1}, D_{f, 2}\}$ is a bent partition of depth $3$. There exist such ternary bent functions $f$, for instance see \cite{CM2018Be,Meidl2022A}:
\begin{itemize}
  \item
  \begin{equation}\label{20}
    f(x)=Tr_{1}^{n}(\alpha x^{2}), x \in \mathbb{F}_{3^n},
 \end{equation}
  where $n$ is even, $\alpha \in \mathbb{F}_{3^n}^{*}$ is a square element if $4 \mid n$, and $\alpha \in \mathbb{F}_{3^n}^{*}$ is a non-square element if $4 \nmid n$;
  \item
  \begin{equation}\label{21}
    f(x)=Tr_{1}^{n}(a x^{\frac{3^n-1}{4}+3^m+1}), x \in \mathbb{F}_{3^n},
  \end{equation}
  where $n=2m$, $m$ odd, $a=\alpha^{\frac{3^m+1}{4}}$ for a primitive element $\alpha$ of $\mathbb{F}_{3^n}$;
  \item
  \begin{equation}\label{22}
    f(x)=Tr_{1}^{n}(\alpha(x^{3^{3k}+3^{2k}-3^{k}+1}+x^{2})), x \in \mathbb{F}_{3^n},
 \end{equation}
  where $n=4k$ for an arbitrary positive integer $k$, $\alpha \in \mathbb{F}_{3^{2k}}^{*}$;
  \item
  \begin{equation}\label{23}
    f(x, y, z)=(g(x)-h(x))z^{2}+yz+g(x), (x, y, z) \in \mathbb{F}_{3^n} \times \mathbb{F}_{3} \times \mathbb{F}_{3},
  \end{equation}
  where $n$ is even, $g$ and $h$ are distinct bent functions constructed by (20) or (22) if $4 \mid n$, $g$ and $h$ are distinct bent functions constructed by (20) or (21) if $4 \nmid n$.
\end{itemize}
For any ternary weakly regular but not regular bent function $f: V_{n}^{(3)}\rightarrow \mathbb{F}_{3}$ ($n$ even) with $f(x)=f(-x)$, the corresponding bent partition $\{D_{f, 0}, D_{f, 1}, D_{f, 2}\}$ is not coming from a normal bent partition by Theorem 4 (i) of \cite{AM2022Be}, which provides a positive answer for the above open problem proposed in \cite{AM2022Be}. We first recall Theorem 4 (i) of \cite{AM2022Be} and then give an example to illustrate this fact.

\begin{lemma}[\cite{AM2022Be}]\label{2}
Let $\Gamma=\{U, A_{1}, \dots, A_{K}\}$ be a normal bent partition of $V_{n}^{(p)}$. Then $|U|=p^{\frac{n}{2}}$ and $|A_{j}|=\frac{p^{n}-p^{\frac{n}{2}}}{K}, 1\leq j \leq K$.
\end{lemma}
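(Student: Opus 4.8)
**Proof proposal for Lemma 5 (the statement: if $\Gamma=\{U,A_1,\dots,A_K\}$ is a normal bent partition of $V_n^{(p)}$, then $|U|=p^{n/2}$ and $|A_j|=(p^n-p^{n/2})/K$).**

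The plan is to extract the cardinalities by testing the normal bent partition against a carefully chosen family of functions $f$ and then evaluating the Walsh transform at the single point $a=0$. Recall that by definition of a normal bent partition, any function $f$ that is constant (say equal to $c_0$) on $U$ and that has exactly $K/p$ of the sets $A_1,\dots,A_K$ in each preimage class $f^{-1}(c)$, $c\in\mathbb{F}_p$, must be bent, hence $|W_f(0)|=p^{n/2}$. First I would write $W_f(0)=\sum_{x\in V_n^{(p)}}\zeta_p^{f(x)} = \zeta_p^{c_0}|U| + \sum_{c\in\mathbb{F}_p}\zeta_p^{c}\,N_c$, where $N_c=\sum_{j:\,f(A_j)=c}|A_j|$ is the total size of the blocks sent to $c$. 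The key is that I get to choose $f$, and in particular I can permute which blocks go to which value $c$, subject only to the constraint that exactly $K/p$ blocks land on each value.

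Next I would exploit this freedom. Fix any two distinct values $c_1,c_2\in\mathbb{F}_p$ and any two blocks $A_{j_1},A_{j_2}$ that $f$ assigns to $c_1$ and $c_2$ respectively; swapping their images gives another admissible $f'$, still a bent function from $\Gamma$. Comparing $W_f(0)$ and $W_{f'}(0)$ — both of absolute value $p^{n/2}$ — the difference is $(\zeta_p^{c_2}-\zeta_p^{c_1})(|A_{j_1}|-|A_{j_2}|)$. Running this over all choices and using the standard fact that the only way all these constant-plus-character-combination sums can simultaneously have modulus $p^{n/2}$ is for the block sizes to be forced, I would conclude $|A_j|$ is the same for all $j$; call it $m$. (Concretely: if the block sizes were not all equal, a suitable swap changes $W_f(0)$ by a nonzero rational multiple of $(\zeta_p^{c_2}-\zeta_p^{c_1})$ while preserving modulus $p^{n/2}$, which is impossible — two complex numbers of equal modulus differing by such a quantity pin down the configuration.) Then $N_c=(K/p)m$ for every $c$, independent of $c_0$ and of which $f$ we picked, so $W_f(0)=\zeta_p^{c_0}|U| + (K/p)m\sum_{c\in\mathbb{F}_p}\zeta_p^{c} = \zeta_p^{c_0}|U|$ since $\sum_{c\in\mathbb{F}_p}\zeta_p^c=0$. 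Hence $|U|=|W_f(0)|=p^{n/2}$.

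Finally, counting points: $|U| + K m = p^n$, so $m = (p^n - p^{n/2})/K$, which is exactly $|A_j|$ for each $j$. I would also note this forces $K \mid (p^n-p^{n/2})$, consistent with the hypothesis that $p\mid K$. The main obstacle is the middle step — rigorously arguing that the modulus condition $|W_f(0)|=p^{n/2}$ across all admissible relabelings $f$ forces every block to have the same size. The cleanest route is probably not the pairwise-swap comparison sketched above but a cyclic relabeling argument: compose a fixed admissible $f$ with the cyclic shift $x\mapsto x+1$ on $\mathbb{F}_p$, which multiplies the "block mass vector" $(N_0,\dots,N_{p-1})$ cyclically; since $W_f(0)$ for each shift must have modulus $p^{n/2}$ and $|U|$ is fixed, one gets $p$ linear relations forcing $N_0=\dots=N_{p-1}$, and then a separate swap within a single value-class versus across classes handles equality of individual block sizes. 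I would need to be careful that such relabelings genuinely stay inside the admissible family (they do, since the cyclic shift preserves "exactly $K/p$ blocks per value" and leaves $f|_U$ constant up to the shift, which just changes $c_0$).
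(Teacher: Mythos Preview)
The paper does not prove this lemma; it is quoted from \cite{AM2022Be} (their Theorem~4(i)) without argument. So there is no ``paper's own proof'' to compare against, and I evaluate your proposal on its own merits.

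Your overall strategy is sound, but the central step --- forcing all $|A_j|$ to be equal --- has a real gap, and neither of your two proposed fixes closes it. For the pairwise swap: you obtain $W_{f'}(0)-W_f(0)=(\zeta_p^{c_2}-\zeta_p^{c_1})(|A_{j_1}|-|A_{j_2}|)$ with $|W_f(0)|=|W_{f'}(0)|=p^{n/2}$, and then assert this is ``impossible'' unless the difference vanishes. That is not true in general: two complex numbers of equal modulus can certainly differ by a nonzero real multiple of $\zeta_p^{c_2}-\zeta_p^{c_1}$. Your fallback, composing $f$ with the cyclic shift $x\mapsto x+1$, does not help either: it shifts \emph{both} the block labels and $c_0$, so $W_{f+1}(0)=\zeta_p\,W_f(0)$ and you recover exactly the same modulus condition, not $p$ independent relations.

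The missing idea is to vary $c_0$ \emph{while keeping the block assignment fixed} --- which the definition of normal bent partition explicitly permits. With the block assignment frozen, $S:=\sum_{c}\zeta_p^{c}N_c$ is fixed, and the $p$ admissible choices of $c_0$ give $p$ values $\zeta_p^{c_0}|U|+S$, each of modulus $p^{n/2}$. For $p\geq 3$ these are $p\geq 3$ points on the circle $|z|=|U|$ about the origin (non-collinear, since $|U|>0$), hence they determine that circle uniquely; it must coincide with the circle $|z+S|=p^{n/2}$, forcing $S=0$ and $|U|=p^{n/2}$. For $p=2$ the two real numbers $|U|+S$ and $-|U|+S$ have equal absolute value, which again yields $S=0$ and $|U|=2^{n/2}$ once $|U|>0$. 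Now $S=0$ together with the $\mathbb{Q}$-linear independence of $1,\zeta_p,\dots,\zeta_p^{p-2}$ gives $N_0=\cdots=N_{p-1}$ for \emph{every} admissible block assignment, and at this point your swap argument \emph{does} work: swapping two blocks between classes preserves the equalities $N_c=N_{c'}$ only if the swapped blocks have equal size. The counting step $|U|+K\,|A_j|=p^n$ then finishes as you wrote. In short: reverse the order --- get $|U|=p^{n/2}$ and $S=0$ first by varying $c_0$ alone, and only then run the swap.
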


\begin{example}\label{1}
Let $f: \mathbb{F}_{3^4}\rightarrow \mathbb{F}_{3}$ be defined by $f(x)=Tr_{1}^{4}(x^{2})$. Then $f$ is ternary weakly regular bent with $f(x)=f(-x)$ and $\varepsilon_{f}=-1$. By Corollary 1, $\{D_{f, 0}, D_{f, 1}, D_{f, 2}\}$ is a bent partition. By the result of Nyberg \cite{Nyberg1991Co}, for any weakly regular $p$-ary bent function $g: V_{n}^{(p)}\rightarrow \mathbb{F}_{p}$ with $n$ even, we have $\{|D_{g, i}|, i \in \mathbb{F}_{p}\}=\{p^{n-1}+\varepsilon_{f}p^{\frac{n}{2}-1}(p-1), p^{n-1}-\varepsilon_{f}p^{\frac{n}{2}-1}\}$. For our example, $|D_{f, 0}|=21$, $|D_{f, 1}|=|D_{f, 2}|=30$. By Lemma 2, it is easy to see that $\{D_{f, 0}, D_{f, 1}, D_{f, 2}\}$ can not be from a normal bent partition.
\end{example}

In the following, based on Theorem 1, we give an alternative proof that $\Gamma_{i}, \Gamma_{i}^{\bullet}, \Theta_{i}, i=1,2$ defined by (5)-(10) given in \cite{AKM2022Ge} are bent partitions.

Let $s, n$ be positive integers with $s \mid n$, $u$ be an integer with $u\equiv p^{j_{0}} \ mod \ (p^{s}-1)$ for some $0\leq j_{0}\leq s-1$ and $gcd(u, p^{n}-1)=1$, and let $d$ be an integer with $du\equiv 1 \ mod \ (p^{n}-1)$. Let $P=(\mathbb{F}_{p^n}, +, \circ)$ be a (pre)semifield such that its dual $P^{\star}=(\mathbb{F}_{p^n}, +, \star)$ is right $\mathbb{F}_{p^s}$-linear. For given $x \in \mathbb{F}_{p^n}$, if $x=0$, then let $\eta_{x}=0$, and if $x\neq 0$, then let $\eta_{x}$ be given by $x\star \eta_{x}^{-1}=1$ (For convention we set $\eta_{0}^{-1}=\eta_{0}^{p^n-2}=0$). For any $\alpha \in \mathbb{F}_{p^n}^{*}$ and $i_{0} \in \mathbb{F}_{p^s}$, define
\begin{equation}\label{24}
  F(x, y)=Tr_{s}^{n}(\alpha a(x, y))+i_{0}(1-x^{p^n-1}), (x, y)\in
\mathbb{F}_{p^n} \times \mathbb{F}_{p^n},
\end{equation}
where for given $(x, y)$, if $x=0$, then $a(x, y)=0$, and if $x \neq 0$, then $a(x, y)$ is given by $a(x, y)\circ x^{u}=y$,
and
\begin{equation}\label{25}
  F^{\bullet}(x, y)=Tr_{s}^{n}(\alpha a^{\bullet}(x, y))+i_{0}(1-x^{p^n-1}), (x, y)\in
\mathbb{F}_{p^n} \times \mathbb{F}_{p^n},
\end{equation}
where for given $(x, y)$, if $x=0$, then $a^{\bullet}(x, y)=0$, and if $x \neq 0$, then $a^{\bullet}(x, y)$ is given by $x^{u}\circ a^{\bullet}(x, y)=y$,
and
\begin{equation}\label{26}
  G(x, y)=Tr_{s}^{n}(\alpha b(x, y))+i_{0}(1-y^{p^n-1}), (x, y)\in
\mathbb{F}_{p^n} \times \mathbb{F}_{p^n},
\end{equation}
where for given $(x, y)$, if $y=0$, then $b(x, y)=0$, and if $y\neq 0$, then $b(x, y)$ is given by $b(x, y)\circ y^{d}=x$,
and
\begin{equation}\label{27}
  G^{\bullet}(x, y)=Tr_{s}^{n}(\alpha b^{\bullet}(x, y))+i_{0}(1-y^{p^n-1}), (x, y)\in
\mathbb{F}_{p^n} \times \mathbb{F}_{p^n},
\end{equation}
where for given $(x, y)$, if $y=0$, then $b^{\bullet}(x, y)=0$, and if $y\neq 0$, then $b^{\bullet}(x, y)$ is given by $y^{d}\circ b^{\bullet}(x, y)=x$,
and
\begin{equation}\label{28}
  M(x, y)=Tr_{s}^{n}(\alpha \eta_{x}^{-u}y)+i_{0}(1-x^{p^n-1}), (x, y)\in
\mathbb{F}_{p^n} \times \mathbb{F}_{p^n},
\end{equation}
and
\begin{equation}\label{29}
  N(x, y)=Tr_{s}^{n}(\alpha x\eta_{y}^{-d})+i_{0}(1-y^{p^n-1}), (x, y)\in
\mathbb{F}_{p^n} \times \mathbb{F}_{p^n}.
\end{equation}

\begin{proposition}\label{3}
Let $F, F^{\bullet}, G, G^{\bullet}, M, N$ be defined as above. Then they are all vectorial dual-bent functions satisfying Condition A with $\varepsilon=1$.
\end{proposition}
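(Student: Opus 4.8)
The plan is to show that each of the six functions $F, F^{\bullet}, G, G^{\bullet}, M, N$ is a Maiorana–McFarland-type vectorial bent function, compute its component duals explicitly, and then verify Condition A directly. First I would observe that each function has the shape $(x,y)\mapsto Tr_{s}^{n}(\alpha\, \phi(x,y))+i_{0}(1-x^{p^{n}-1})$ (or with the roles of $x$ and $y$ swapped), where for fixed $x\neq 0$ the map $y\mapsto\phi(x,y)$ is $\mathbb{F}_{p^{s}}$-linear and bijective on $\mathbb{F}_{p^{n}}$, using that $P^{\star}$ is right $\mathbb{F}_{p^{s}}$-linear together with Lemma~2 of \cite{AKM2022Ge} (for the induced-semifield argument) and $\gcd(u,p^{n}-1)=1$ (so that $x\mapsto x^{u}$ is a permutation). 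Fixing a component $c\in\mathbb{F}_{p^{s}}^{*}$, the component function $\langle c, F(x,y)\rangle_{s}=Tr_{1}^{s}(c\,Tr_{s}^{n}(\alpha\phi(x,y)))+Tr_{1}^{s}(c\,i_{0})(1-x^{p^{n}-1})$ can be rewritten, via transitivity of the trace, as $Tr_{1}^{n}(\alpha c\,\phi(x,y))+(\text{term depending only on }x)$, and on the set $x\neq 0$ the inner map $y\mapsto \alpha c\,\phi(x,y)$ is a permutation of $\mathbb{F}_{p^{n}}$ that is linear over $\mathbb{F}_{p}$; so after absorbing the permutation one recognizes a Maiorana–McFarland bent function in the variables $(y,x)$ of the form $Tr_{1}^{n}(\pi_{x}(\cdot))+g(x)$ with $\pi_{x}$ a permutation. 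Hence every component is regular bent (all Maiorana–McFarland bent functions are regular), which forces $\varepsilon=1$.

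The next step is to compute the dual of each component and check that the duals assemble into a single vectorial bent function, i.e.\ that $(F_{c})^{*}=(F^{*})_{c}$. Here I would use formula (4) for the Maiorana–McFarland dual: if $F_{c}(x,y)=Tr_{1}^{n}(\alpha c\,\phi(x,y))+(\text{stuff in }x)$ and, for $x\neq 0$, $y\mapsto\phi(x,y)$ has inverse $y'\mapsto\psi(x,y')$ with $\psi$ again $\mathbb{F}_{p^{s}}$-linear in the second slot, then $(F_{c})^{*}$ has the shape $Tr_{1}^{n}(\text{linear}(y)\cdot\psi(x,(\alpha c)^{-1}\cdot)) + \dots$, and crucially the dependence on $c$ is again through a scalar $c$ multiplying a fixed function, so that there is a single $G:V_{n}^{(p)}\to V_{s}^{(p)}$ with $(F_{c})^{*}=\langle c,G\rangle_{s}$ for all $c$. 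In fact, the cleanest route is to \emph{guess} $F^{*}$ from the bent-partition structure: the sets $U_{t}$, $U_{t}^{\bullet}$, $V_{t}$, $V_{t}^{\bullet}$, $X_{t}$, $Y_{t}$ that define $\Gamma_{1},\Gamma_{2},\Gamma_{1}^{\bullet},\Gamma_{2}^{\bullet},\Theta_{1},\Theta_{2}$ are precisely the level sets one would expect $F$ to have, so I would check that $F$ is constant on each $U_{t}$ (equal to $Tr_{s}^{n}(\alpha t)$, with the $U$-block absorbing $i_{0}$), and then identify $F^{*}$ with an analogously-defined function (which will turn out to be $G^{\bullet}$ for $F$, and so on, matching the duality pairings $P\leftrightarrow P^{\star}$, $\circ\leftrightarrow\bullet$ described in Section~II). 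Once $F^{*}$ is named, verifying $(F_{c})^{*}=(F^{*})_{c}$ is a single Walsh-transform computation using (4) and the defining relations $a(x,y)\circ x^{u}=y$, $x\star\eta_{x}^{-1}=1$, etc.

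Then, having established that each of $F,F^{\bullet},G,G^{\bullet},M,N$ is vectorial dual-bent with $(F_{c})^{*}=(F^{*})_{c}$ and all components regular ($\varepsilon=1$), Condition A holds by definition, which is the claim. I would also note in passing that, by Theorem~1, this immediately re-derives that $\Gamma_{1},\Gamma_{1}^{\bullet},\Gamma_{2},\Gamma_{2}^{\bullet},\Theta_{1},\Theta_{2}$ are bent partitions, since $A_{i}=D_{F,i}$ recovers the sets in (5)–(10) (after checking $D_{F,i}=\bigcup_{Tr_{s}^{n}(t)=i}U_{t}$, with the $i_{0}$-block picking up $U$).

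The main obstacle I expect is the bookkeeping needed to certify, uniformly across all six cases, that the inner maps are genuinely permutations of $\mathbb{F}_{p^{n}}$ that are $\mathbb{F}_{p}$-linear in the ``free'' variable: one must handle the $x=0$ (resp.\ $y=0$) stratum separately (where the function degenerates to the constant $i_{0}$ or to a linear function of the other variable), and one must correctly transport the right-$\mathbb{F}_{p^{s}}$-linearity of $P^{\star}$ through the chain of derived operations $\bullet$, $\star$, $\ast$ and through the exponentiations $x\mapsto x^{u}$, $x\mapsto x^{d}$, $x\mapsto\eta_{x}$. None of this is deep, but it is exactly where \cite{AKM2022Ge} did the real work, so I would isolate it as a lemma (\textquotedblleft under the stated hypotheses, for each $x\neq 0$ the relevant slice map is an $\mathbb{F}_{p^{s}}$-linear bijection\textquotedblright) and then feed it into the Maiorana–McFarland machinery above to finish all six cases at once.
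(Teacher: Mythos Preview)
Your overall strategy matches the paper's: recognize a Maiorana--McFarland structure, apply formula (4), and check that the component duals assemble into a single vectorial function. But there is a real gap in how you reach the M--M form for $F, F^{\bullet}, G, G^{\bullet}$, and it is not just bookkeeping.

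Your claim that $y\mapsto\phi(x,y)$ is $\mathbb{F}_{p^{s}}$-linear is false for those four functions: $a(x,\cdot)$ is the inverse of right-$\circ$-multiplication by $x^{u}$, which is only $\mathbb{F}_{p}$-additive---the hypothesis is that $P^{\star}$, not $P$, is right $\mathbb{F}_{p^{s}}$-linear. Consequently $Tr_{1}^{n}(c\alpha\,a(x,y))$ is not automatically of the shape $Tr_{1}^{n}(\sigma(x)\,y)$ needed for M--M; a family ``$\pi_{x}$'' depending on $x$ is not enough. The paper's device is the trace-duality identity $Tr_{1}^{n}(z(a\circ b))=Tr_{1}^{n}(a(z\star b))$: taking $a=a(x,y)$, $b=x^{u}$ gives $Tr_{1}^{n}(c^{u}\alpha\,a(x,y))=Tr_{1}^{n}(zy)$ as soon as $z\star x^{u}=c^{u}\alpha$. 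Right $\mathbb{F}_{p^{s}}$-linearity of $\star$ (together with $c^{u}\in\mathbb{F}_{p^{s}}$, which uses $u\equiv p^{j_{0}}\pmod{p^{s}-1}$) then forces $z=\rho^{-1}(c^{-1}x)$ for a permutation $\rho$ \emph{independent of $c$}; this factorization is precisely what makes the M--M dual come out linear in $c$ after (4). The paper further works with $F_{c^{u}}$ rather than $F_{c}$ and, at the end, uses $d\equiv p^{s-j_{0}}\pmod{p^{s}-1}$ plus Frobenius to turn the residual $c^{d}$ into $c$, obtaining $(F_{c})^{*}=(F^{*})_{c}$ with $F^{*}(x,y)=-(Tr_{s}^{n}(x\rho(y)))^{p^{j_{0}}}+i_{0}(1-y^{p^{n}-1})$. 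This is not $G^{\bullet}$, so your guessed pairing is also off. For $M$ and $N$ the M--M form is indeed immediate (field multiplication in the free variable), but the same $\star$-duality and Frobenius manipulations are still required on the dual side to extract the linear $c$-dependence. In short, the step you labelled ``bookkeeping'' is where the right-$\mathbb{F}_{p^{s}}$-linearity of $P^{\star}$ and the congruence on $u$ actually do their work, and it needs to be made explicit.
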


\begin{proof}
We only prove the result for $F$ and $M$ since the proofs for $F^{\bullet}, G, G^{\bullet}$ are similar to the proof for $F$, and the proof for $N$ is similar to the proof for $M$.
\begin{itemize}
  \item For $F$:
\end{itemize}
For any $c \in \mathbb{F}_{p^s}^{*}$, we have
\begin{equation*}
  F_{c}(x, y)=Tr_{1}^{n}(c\alpha a(x, y))+Tr_{1}^{s}(ci_{0})(1-x^{p^{n}-1}).
\end{equation*}
For any $c \in \mathbb{F}_{p^s}^{*}$ and $(w, v) \in \mathbb{F}_{p^n} \times \mathbb{F}_{p^n}$, we have
\begin{equation*}
 \begin{split}
    W_{F_{c^u}}(w, v) & =\sum_{x \in \mathbb{F}_{p^n}^{*}} \sum_{y \in \mathbb{F}_{p^n}}\zeta_{p}^{Tr_{1}^{n}(c^u \alpha a(x, y))-Tr_{1}^{n}(wx+vy)}+\zeta_{p}^{Tr_{1}^{s}(c^ui_{0})}\sum_{y \in \mathbb{F}_{p^n}}\zeta_{p}^{-Tr_{1}^{n}(vy)}\\
    & =\sum_{x \in \mathbb{F}_{p^n}} \sum_{y \in \mathbb{F}_{p^n}}\zeta_{p}^{Tr_{1}^{n}(c^u \alpha a(x, y))-Tr_{1}^{n}(wx+vy)}+p^{n}(\zeta_{p}^{Tr_{1}^{s}(c^ui_{0})}-1)\delta_{\{0\}}(v)\\
    & =W_{h}(w, v)+p^{n}(\zeta_{p}^{Tr_{1}^{s}(c^ui_{0})}-1)\delta_{\{0\}}(v),\\
 \end{split}
\end{equation*}
where $h(x, y)=Tr_{1}^{n}(c^u \alpha a(x, y))$. For given $x \in \mathbb{F}_{p^n}$, if $x=0$, then let $\lambda_{x}=0$, and if $x\neq 0$, then let $\lambda_{x}$ be given by $x \star \lambda_{x}^{-1}=\alpha$ (For convention we set $\lambda_{0}^{-1}=\lambda_{0}^{p^n-2}=0$). Define $\rho(x)=\lambda_{x}^{-d}$. Then $\rho$ is a permutation over $\mathbb{F}_{p^n}$. For any $x \in \mathbb{F}_{p^n}^{*}$, set $z=\rho^{-1}(c^{-1}x)$. Then $\lambda_{z}^{-d}=\rho(z)=c^{-1}x$. By $du\equiv 1 \ mod \ (p^n-1)$, we have $\lambda_{z}^{-1}=c^{-u}x^{u}$. Since $z \neq 0$ and $P^{\star}$ is right $\mathbb{F}_{p^s}$-linear, we have $\alpha=z\star \lambda_{z}^{-1}=z\star (c^{-u}x^{u})=c^{-u}(z\star x^{u})$, that is, $\rho^{-1}(c^{-1}x)\star x^{u}=\alpha c^{u}$ for any $x \neq 0$. Thus, when $x \neq 0$, $Tr_{1}^{n}(c^u\alpha a(x, y))=Tr_{1}^{n}(a(x, y)(\rho^{-1}(c^{-1}x)\star x^{u}))=Tr_{1}^{n}(\rho^{-1}(c^{-1}x)(a(x, y)\circ x^{u}))=Tr_{1}^{n}(\rho^{-1}(c^{-1}x)y)$. When $x=0$, by $a(0, y)=\rho^{-1}(0)=0$, we have $Tr_{1}^{n}(c^u\alpha a(x, y))=Tr_{1}^{n}(\rho^{-1}(c^{-1}x)y)=0$. Hence, $h(x, y)=Tr_{1}^{n}(\rho^{-1}(c^{-1}x)y)$, which is a Maiorana-McFarland bent function and by (4),
\begin{equation*}
  W_{h}(w, v)=p^{n}\zeta_{p}^{-Tr_{1}^{n}(c w \rho(v))}.
\end{equation*}
Therefore, for any $c \in \mathbb{F}_{p^s}^{*}$,
\begin{equation}\label{30}
  \begin{split}
  W_{F_{c^u}}(w, v)&=p^{n}(\zeta_{p}^{-Tr_{1}^{n}(c w \rho(v))}+(\zeta_{p}^{Tr_{1}^{s}(c^ui_{0})}-1)\delta_{\{0\}}(v))\\
  &=p^{n}\zeta_{p}^{-Tr_{1}^{n}(cw \rho(v))+Tr_{1}^{s}(c^ui_{0})(1-v^{p^n-1})}.\\
  \end{split}
\end{equation}
By (30) and $ud\equiv 1 \ mod \ (p^n-1)$, we have that for any $c \in \mathbb{F}_{p^s}^{*}$, $F_{c}$ is a regular bent function with
\begin{equation*}
  \begin{split}
  (F_{c})^{*}(x, y)&=-Tr_{1}^{n}(c^dx\rho(y))+Tr_{1}^{s}(ci_{0})(1-y^{p^n-1})\\
  & =-Tr_{1}^{n}(c^{dp^{j_{0}}}(x\rho(y))^{p^{j_{0}}})+Tr_{1}^{s}(ci_{0})(1-y^{p^n-1}).
  \end{split}
\end{equation*}
Since $u\equiv p^{j_{0}} \ mod \ (p^{s}-1)$ and $du\equiv 1 \ mod \ (p^n-1)$, we have $d\equiv p^{s-j_{0}} \ mod \ (p^{s}-1)$ and thus $(c^{d})^{p^{j_{0}}}=c$ for any $c \in \mathbb{F}_{p^s}^{*}$. Therefore, $F$ is a vectorial bent function with $\varepsilon_{F_{c}}=1$ and $(F_{c})^{*}=H_{c}$ for all $c \in \mathbb{F}_{p^s}^{*}$, where
\begin{equation*}
     H(x, y)=-Tr_{s}^{n}((x\rho(y))^{p^{j_{0}}})+i_{0}(1-y^{p^n-1})=-(Tr_{s}^{n}(x\rho(y)))^{p^{j_{0}}}+i_{0}(1-y^{p^n-1}).
\end{equation*}
Since $F_{c}$ is regular bent, we have that $(F_{c})^{*}=H_{c}$ is also regular bent and $H$ is vectorial bent. Thus, $F$ is vectorial dual-bent with $\varepsilon_{F_{c}}=1$ and $(F_{c})^{*}=(F^{*})_{c}$ for all $c \in \mathbb{F}_{p^s}^{*}$, where $F^{*}=H$, that is, $F$ satisfies Condition A.
\begin{itemize}
  \item For $M$:
\end{itemize}
For any $c \in \mathbb{F}_{p^s}^{*}$,
\begin{equation*}
  M_{c}(x, y)=Tr_{1}^{n}(c \alpha \eta_{x}^{-u}y)+Tr_{1}^{s}(ci_{0})(1-x^{p^n-1}).
\end{equation*}
Similar to the discussion for $F$, for any $c \in \mathbb{F}_{p^s}^{*}$ and $(w, v) \in \mathbb{F}_{p^n} \times \mathbb{F}_{p^n}$ we have
\begin{equation*}
  W_{M_{c}}(w, v)=W_{g}(w, v)+p^{n}(\zeta_{p}^{Tr_{1}^{s}(ci_{0})}-1)\delta_{\{0\}}(v),
\end{equation*}
where $g(x, y)=Tr_{1}^{n}(c\alpha\eta_{x}^{-u}y)$. Let $\pi(x)=\eta_{x}^{-u}$, then $\pi$ is a permutation over $\mathbb{F}_{p^n}$. Since $g$ is a Maiorana-McFarland bent function, then by (4),
\begin{equation*}
  W_{g}(w, v)=p^{n}\zeta_{p}^{-Tr_{1}^{n}(w\pi^{-1}(c^{-1}\alpha^{-1}v))}.
\end{equation*}
For any given $y \in \mathbb{F}_{p^n}^{*}$, set $\pi^{-1}(c^{-1}\alpha^{-1}y)=z$. Then $c^{-1}\alpha^{-1}y=\pi(z)=\eta_{z}^{-u}$. By $ud\equiv 1 \ mod \ (p^n-1)$, we have $\eta_{z}^{-1}=c^{-d}\alpha^{-d}y^{d}$. Since $z\neq 0$ and $P^{\star}$ is right $\mathbb{F}_{p^s}$-linear, we have $1=z\star \eta_{z}^{-1}=z\star (c^{-d}\alpha^{-d}y^{d})=c^{-d}(z\star \alpha^{-d}y^{d})$, that is, $\pi^{-1}(c^{-1}\alpha^{-1}y)\star \alpha^{-d}y^{d}=c^{d}$. For given $(x, y) \in \mathbb{F}_{p^n} \times \mathbb{F}_{p^n}$, if $y=0$, then let $r(x, y)=0$, and if $y\neq 0$, then let $r(x, y)$ be given by $r(x, y)\circ \alpha^{-d}y^{d}=x$. When $v\neq 0$, we have $Tr_{1}^{n}(w\pi^{-1}(c^{-1}\alpha^{-1}v))=Tr_{1}^{n}(\pi^{-1}(c^{-1}\alpha^{-1}v)(r(w, v)\circ \alpha^{-d}v^{d}))=Tr_{1}^{n}(r(w, v)(\pi^{-1}(c^{-1}\alpha^{-1}v)\star \alpha^{-d}v^{d}))=Tr_{1}^{n}(c^{d}r(w, v))=Tr_{1}^{n}(c(r(w, v))^{p^{j_{0}}})$. When $v=0$, since $\pi^{-1}(0)=0$ and $r(w, 0)=0$, we have $Tr_{1}^{n}(w\pi^{-1}(c^{-1}\alpha^{-1}v))=Tr_{1}^{n}(c(r(w, v))^{p^{j_{0}}})=0$. Thus, $-Tr_{1}^{n}(w\pi^{-1}(c^{-1}\alpha^{-1}v))=-Tr_{1}^{n}(c(r(w, v))^{p^{j_{0}}})$ and
\begin{equation*}
 \begin{split}
  W_{M_{c}}(w, v)&=p^{n}(\zeta_{p}^{-Tr_{1}^{n}(c(r(w, v))^{p^{j_{0}}})}+(\zeta_{p}^{Tr_{1}^{s}(ci_{0})}-1)\delta_{\{0\}}(v))\\
  &=p^{n}\zeta_{p}^{-Tr_{1}^{n}(c(r(w, v))^{p^{j_{0}}})+Tr_{1}^{s}(ci_{0})(1-v^{p^n-1})},\\
  \end{split}
\end{equation*}
which implies that $M$ is a vectorial dual-bent function with $\varepsilon_{M_{c}}=1$ and $(M_{c})^{*}=(M^{*})_{c}$ for all $c \in \mathbb{F}_{p^s}^{*}$, where
\begin{equation*}
  M^{*}(x, y)=-Tr_{s}^{n}((r(x, y))^{p^{j_{0}}})+i_{0}(1-y^{p^n-1}),
\end{equation*}
that is, $M$ satisfies Condition A.
\end{proof}

By Theorem 1 and Proposition 3, we have that $\{D_{F, i}, i \in \mathbb{F}_{p^s}\}$, $\{D_{F^{\bullet}, i}, i \in \mathbb{F}_{p^s}\}$, $\{D_{G, i}, i \in \mathbb{F}_{p^s}\}$, $\{D_{G^{\bullet}, i}, i \in \mathbb{F}_{p^s}\}$, $\{D_{M, i}, i \in \mathbb{F}_{p^s}\}$ and $\{D_{N, i}, i \in \mathbb{F}_{p^s}\}$ are bent partitions. It is easy to verify that
\begin{small}
\begin{equation*}
\begin{split}
 & D_{F, i}=\left\{\begin{split}
                  \bigcup_{t \in \mathbb{F}_{p^n}: Tr_{s}^{n}(\alpha t)=i}U_{t}, & \ \text{ if } i\neq i_{0}, \\
                  \bigcup_{t \in \mathbb{F}_{p^n}: Tr_{s}^{n}(\alpha t)=i_{0}}U_{t}\bigcup U, & \ \text{ if } i=i_{0},\\
               \end{split}\right.,
 \ D_{F^{\bullet}, i}=\left\{\begin{split}
                  \bigcup_{t \in \mathbb{F}_{p^n}: Tr_{s}^{n}(\alpha t)=i}U_{t}^{\bullet}, & \ \text{ if } i\neq i_{0}, \\
                  \bigcup_{t \in \mathbb{F}_{p^n}: Tr_{s}^{n}(\alpha t)=i_{0}}U_{t}^{\bullet}\bigcup U, & \ \text{ if } i=i_{0},\\
               \end{split}\right.,\\
  &D_{G, i}=\left\{\begin{split}
                  \bigcup_{t \in \mathbb{F}_{p^n}: Tr_{s}^{n}(\alpha t)=i}V_{t}, & \ \text{ if } i\neq i_{0}, \\
                  \bigcup_{t \in \mathbb{F}_{p^n}: Tr_{s}^{n}(\alpha t)=i_{0}}V_{t}\bigcup V, & \ \text{ if } i=i_{0},\\
               \end{split}\right.,
  \ D_{G^{\bullet}, i}=\left\{\begin{split}
                  \bigcup_{t \in \mathbb{F}_{p^n}: Tr_{s}^{n}(\alpha t)=i}V_{t}^{\bullet}, & \ \text{ if } i\neq i_{0}, \\
                  \bigcup_{t \in \mathbb{F}_{p^n}: Tr_{s}^{n}(\alpha t)=i_{0}}V_{t}^{\bullet}\bigcup V, & \ \text{ if } i=i_{0},\\
               \end{split}\right.,\\
  &D_{M, i}=\left\{\begin{split}
                  \bigcup_{t \in \mathbb{F}_{p^n}: Tr_{s}^{n}(\alpha t)=i}X_{t}, & \ \text{ if } i\neq i_{0}, \\
                  \bigcup_{t \in \mathbb{F}_{p^n}: Tr_{s}^{n}(\alpha t)=i_{0}}X_{t}\bigcup X, & \ \text{ if } i=i_{0},\\
               \end{split}\right.,
  \ D_{N, i}=\left\{\begin{split}
                  \bigcup_{t \in \mathbb{F}_{p^n}: Tr_{s}^{n}(\alpha t)=i}Y_{t}, & \ \text{ if } i\neq i_{0}, \\
                  \bigcup_{t \in \mathbb{F}_{p^n}: Tr_{s}^{n}(\alpha t)=i_{0}}Y_{t}\bigcup Y, & \ \text{ if } i=i_{0},\\
               \end{split}\right.
\end{split}
\end{equation*}
\end{small}where
\begin{equation*}
\begin{split}
& U_{t}=\{(x, t\circ x^{u}): x \in \mathbb{F}_{p^n}^{*}\} \text{ if } t \in \mathbb{F}_{p^n}, \text{ and }U= \{(0, y): y \in \mathbb{F}_{p^n}\},\\
&U_{t}^{\bullet}=\{(x, x^{u}\circ t): x \in \mathbb{F}_{p^n}^{*}\} \text{ if } t \in \mathbb{F}_{p^n}, \text{ and }U= \{(0, y): y \in \mathbb{F}_{p^n}\},\\
&V_{t}=\{(t\circ x^{d}, x): x \in \mathbb{F}_{p^n}^{*}\} \text{ if } t \in \mathbb{F}_{p^n}, \text{ and }V=\{(x, 0):  x \in \mathbb{F}_{p^n}\},\\
&V_{t}^{\bullet}=\{(x^{d}\circ t, x): x \in \mathbb{F}_{p^n}^{*}\} \text{ if } t \in \mathbb{F}_{p^n}, \text{ and }V=\{(x, 0):  x \in \mathbb{F}_{p^n}\},\\
& X_{t}=\{(t\eta_{x}^{d}, x): x \in \mathbb{F}_{p^n}^{*}\} \text{ if } t \in \mathbb{F}_{p^n}, \text{ and }X=\{(x, 0): x \in \mathbb{F}_{p^n}\},\\
& Y_{t}=\{(x, t\eta_{x}^{u}): x \in \mathbb{F}_{p^n}^{*}\} \text{ if } t \in \mathbb{F}_{p^n}, \text{ and }Y=\{(0, y): y \in \mathbb{F}_{p^n}\}.
\end{split}
\end{equation*}
For the above bent partitions from vectorial dual-bent functions $F, F^{\bullet}, G, G^{\bullet}, M, N$, by setting $\alpha=1, u=p^s+p-1$ with $gcd(u, p^n-1)=1$, then we can obtain bent partitions $\Gamma_{1}, \Gamma_{1}^{\bullet}, \Gamma_{2}, \Gamma_{2}^{\bullet}, \Theta_{1}, \Theta_{2}$ defined by (5)-(10) respectively. Thus by the above analysis, we provide an alternative derivation that $\Gamma_{1}, \Gamma_{1}^{\bullet}, \Gamma_{2}, \Gamma_{2}^{\bullet}, \Theta_{1}, \Theta_{2}$ are bent partitions.

When $p$ is an odd prime, we show that the converse of Theorem 1 also holds.

\begin{theorem}\label{2}
Let $p$ be an odd prime. Let $\Gamma=\{A_{i}, i \in V_{s}^{(p)}\}$ be a bent partition of $V_{n}^{(p)}$ satisfying Condition $\mathcal{C}$. Define $F: V_{n}^{(p)}\rightarrow V_{s}^{(p)}$ by
\begin{equation*}
  F(x)=i \text{ if } x \in A_{i}.
\end{equation*}
Then $F$ is a vectorial dual-bent function satisfying Condition A.
\end{theorem}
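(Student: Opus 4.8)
The plan is to reduce the whole statement to the single assertion that the map $c\mapsto (F_{c})^{*}$ on $V_{s}^{(p)}\setminus\{0\}$ is additive, and then to build the vectorial dual of $F$ by hand (this is the converse construction to Theorem~1, and by Lemma~1 it amounts to exhibiting the sets $W_{i}$ there).

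\textbf{First reductions.} First I would note that $F$ is vectorial bent by Proposition~2, and that for each $c\in V_{s}^{(p)}\setminus\{0\}$ the component $F_{c}$ is constant on every $A_{i}$, with value $\langle c,i\rangle_{s}$, so (as $c\neq 0$) every element of $\mathbb{F}_{p}$ is the value of $F_{c}$ on exactly $p^{s-1}$ of the sets $A_{i}$. Thus $F_{c}$ is a bent function constructed from $\Gamma$, hence weakly regular with $\varepsilon_{F_{c}}=\varepsilon$ by Condition~$\mathcal{C}$. Since $\mathbb{F}_{p}^{*}A_{i}=A_{i}$ forces $F(\lambda x)=F(x)$ for $\lambda\in\mathbb{F}_{p}^{*}$, every $F_{c}$ is a $(p-1)$-form, so by Proposition~1 it is a weakly regular vectorial dual-bent function with $(\lambda F_{c})^{*}=\lambda (F_{c})^{*}$; in particular $(F_{\lambda c})^{*}=\lambda (F_{c})^{*}$, i.e.\ $c\mapsto (F_{c})^{*}$ is homogeneous. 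If this map is also additive, it extends to a linear map, and non-degeneracy of $\langle\cdot,\cdot\rangle_{s}$ gives $G\colon V_{n}^{(p)}\to V_{s}^{(p)}$ with $(F_{c})^{*}(x)=\langle c,G(x)\rangle_{s}$; then $G_{c}=(F_{c})^{*}$ is bent for all $c\neq 0$, so $G$ is vectorial bent, $F$ is vectorial dual-bent with $F^{*}=G$ and $(F_{c})^{*}=(F^{*})_{c}$, and $\varepsilon_{F_{c}}=\varepsilon$ for all $c$, which is exactly Condition~A.

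\textbf{Additivity (the main step).} Fix linearly independent $e_{1},e_{2}\in V_{s}^{(p)}$ and set $f=F_{e_{1}}$, $g=F_{e_{2}}$. The key point is that for \emph{every} $\psi\colon\mathbb{F}_{p}\to\mathbb{F}_{p}$ the function $\psi(f)+g$ is constant on each $A_{i}$ with value $\psi(\langle e_{1},i\rangle_{s})+\langle e_{2},i\rangle_{s}$, and a quick count shows each value is attained on $p^{s-1}$ of the $A_{i}$; hence $\psi(f)+g$ is a bent function constructed from $\Gamma$, so weakly regular with $\varepsilon_{\psi(f)+g}=\varepsilon$. Fixing $a\in V_{n}^{(p)}$ and writing $S_{c}(a)=\sum_{x\in D_{f,c}}\zeta_{p}^{g(x)-\langle a,x\rangle_{n}}$ for $c\in\mathbb{F}_{p}$, one gets $W_{\psi(f)+g}(a)=\sum_{c}\zeta_{p}^{\psi(c)}S_{c}(a)$, so $|\sum_{c}\omega_{c}S_{c}(a)|=p^{n/2}$ for \emph{every} tuple $(\omega_{c})_{c\in\mathbb{F}_{p}}$ of $p$-th roots of unity. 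I would then prove a short geometric lemma: for $p\ge 3$, if $\{\zeta_{p}^{k}v+w:0\le k<p\}$ all have a common modulus $R$, then $w=0$ and $R=|v|$; applied coordinate-by-coordinate this forces exactly one $S_{c}(a)$ to be nonzero, and since $\sum_{c}S_{c}(a)=W_{g}(a)\neq 0$ that one equals $W_{g}(a)$, say at index $c^{*}(a)$. Hence $W_{\psi(f)+g}(a)=\zeta_{p}^{\psi(c^{*}(a))}W_{g}(a)$, giving $(\psi(f)+g)^{*}(a)=\psi(c^{*}(a))+g^{*}(a)$; taking $\psi(c)=\mu c$ yields $(\mu f+g)^{*}(a)=\mu c^{*}(a)+g^{*}(a)$. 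Interchanging $e_{1},e_{2}$ produces $d^{*}(a)$ with $(\mu g+f)^{*}(a)=\mu d^{*}(a)+f^{*}(a)$, while applying $(\nu h)^{*}=\nu h^{*}$ (Proposition~1) to the $(p-1)$-form $h=\mu^{-1}f+g$ gives $(f+\mu g)^{*}(a)=c^{*}(a)+\mu g^{*}(a)$ for $\mu\neq 0$. Comparing the two expressions for $(f+\mu g)^{*}(a)$ gives $\mu(g^{*}(a)-d^{*}(a))=f^{*}(a)-c^{*}(a)$ for all $\mu\in\mathbb{F}_{p}^{*}$; as $p$ is odd, $|\mathbb{F}_{p}^{*}|\ge 2$, so both sides vanish and $c^{*}(a)=f^{*}(a)$. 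Therefore $(\mu F_{e_{1}}+F_{e_{2}})^{*}=\mu (F_{e_{1}})^{*}+(F_{e_{2}})^{*}$ for all $\mu\in\mathbb{F}_{p}$, and since $e_{1},e_{2}$ were an arbitrary independent pair (the dependent case being immediate from homogeneity), $c\mapsto (F_{c})^{*}$ is additive. Combined with the first reductions, this finishes the proof.

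The hard part is the additivity step, and within it the claim that exactly one $S_{c}(a)$ is nonzero for each $a$ — this is precisely where oddness of $p$ is used (for $p=2$ both the geometric lemma and the identity $c^{*}(a)=f^{*}(a)$ fail, matching the fact that Theorem~1 has no converse over $\mathbb{F}_{2}$). Everything else is routine manipulation of Walsh transforms together with Proposition~1 and the definition of Condition~$\mathcal{C}$.
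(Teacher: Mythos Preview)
Your proof is correct and follows a genuinely different route from the paper. The paper never analyses the map $c\mapsto (F_c)^*$; instead it works with the character sums $\chi_u(A_i)$. From Lemma~3.4 of \cite{HLL2020Ra} it knows that for every bent function $f$ built from $\Gamma$ the multiset $\{\chi_u(D_{f,j}):j\in\mathbb{F}_p\}$ consists of exactly two values, and a short swap argument among the $p$ blocks of size $p^{s-1}$ then pins down a unique index $G(u)\in V_s^{(p)}$ with $\chi_u(A_{G(u)})\neq\chi_u(A_i)$ for all $i\neq G(u)$ and the latter all equal; this is precisely the hypothesis of Lemma~1 (with $W_i=\{u:G(u)=i\}$), and Condition~A follows at once. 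Your argument instead exploits the full strength of the bent-partition hypothesis --- that $\psi(f)+g$ is bent for \emph{every} $\psi$, not only linear ones --- to force the rigidity ``exactly one $S_c(a)$ is nonzero'', and then reads off $\mathbb{F}_p$-linearity of $c\mapsto (F_c)^*$ directly, building $F^*$ by hand without passing through Lemma~1. The paper's approach is shorter and meshes with its character-sum machinery; yours is more self-contained (no external lemma is needed for the key step) and isolates cleanly where $p$ odd enters, namely the step $|\mathbb{F}_p^*|\ge 2$. One small correction: your geometric lemma is false as stated when $v=0$; the right conclusion is ``$v=0$ or $w=0$'', and then the ``coordinate-by-coordinate'' deduction needs one more line --- if $S_{c_0}(a)\neq 0$ then $\sum_{c\neq c_0}\omega_c S_c(a)=0$ for \emph{all} choices of the $\omega_c$, whence each remaining $S_c(a)$ vanishes individually.
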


\begin{proof}
Since $\mathbb{F}_{p}^{*}A_{i}=A_{i}$ for any $i \in V_{s}^{(p)}$, all bent functions constructed from $\Gamma$ are $(p-1)$-form. When $s=1$, the conclusion follows from Proposition 1. In the following, we consider the case of $s\geq 2$.

Let $f$ be an arbitrary bent function constructed from $\Gamma$. By Lemma 3.4 of \cite{HLL2020Ra}, for any $u \in V_{n}^{(p)}$ and $j \in \mathbb{F}_{p}$ we have
\begin{equation}\label{31}
  \chi_{u}(D_{f, j})=\left\{\begin{split}
                               p^{n-1}\delta_{\{0\}}(u)+\varepsilon p^{\frac{n}{2}-1}(p-1), & \text{ if } f^{*}(u)=j, \\
                               p^{n-1}\delta_{\{0\}}(u)-\varepsilon p^{\frac{n}{2}-1}, &  \text{ if } f^{*}(u) \neq j,
                            \end{split}\right.
\end{equation}
where $D_{f, j}=\{x \in V_{n}^{(p)}: f(x)=j\}, j \in \mathbb{F}_{p}$. For any fixed $u \in V_{n}^{(p)}$, since
 \begin{equation*}
   \{\chi_{u}(D_{f, j}), j \in \mathbb{F}_{p}\}=\{p^{n-1}\delta_{\{0\}}(u)+\varepsilon p^{\frac{n}{2}-1}(p-1), p^{n-1}\delta_{\{0\}}(u)-\varepsilon p^{\frac{n}{2}-1}\}
\end{equation*}
for any bent function $f$ constructed from $\Gamma$, we have that for any fixed $u \in V_{n}^{(p)}$, there exists a unique $G(u) \in V_{s}^{(p)}$ such that $\chi_{u}(A_{i}), i\neq G(u)$ are all the same and $\chi_{u}(A_{i})\neq \chi_{u}(A_{G(u)}), i\neq G(u)$. Note that $G$ is a function from $V_{n}^{(p)}$ to $V_{s}^{(p)}$. Moreover by (31), for any fixed $u \in V_{n}^{(p)}$ we have
\begin{equation}\label{32}
  \chi_{u}(A_{i})=\left\{\begin{split}
                            p^{n-s}\delta_{\{0\}}(u)+\varepsilon p^{\frac{n}{2}-s}(p^{s}-1), & \text{ if } i=G(u), \\
                            p^{n-s}\delta_{\{0\}}(u)- \varepsilon p^{\frac{n}{2}-s}, & \text{ if } i\neq G(u).
                         \end{split}\right.
\end{equation}
Define
\begin{equation*}
  W_{i}=\{u \in V_{n}^{(p)}: G(u)=i\}, i \in V_{s}^{(p)}.
\end{equation*}
Then obviously $W_{i}, i \in V_{s}^{(p)}$ are pairwise disjoint and $\bigcup_{i \in V_{s}^{(p)}}W_{i}=V_{n}^{(p)}$. By (32), for any $u \in V_{n}^{(p)}$ and nonempty set $I \subseteq V_{s}^{(p)}$ we have
\begin{equation}\label{33}
  \chi_{u}(D_{F, I})=\sum_{i \in I}\chi_{u}(A_{i})=p^{n-s}\delta_{\{0\}}(u)|I|+\varepsilon p^{\frac{n}{2}-s}(p^{s}\delta_{W_{I}}(u)-|I|),
\end{equation}
where $D_{F, I}=\{x \in V_{n}^{(p)}: F(x) \in I\}$, $W_{I}=\bigcup_{i \in I}W_{i}$. By (33) and Lemma 1, the conclusion holds.
\end{proof}

When $p$ is an odd prime, from Theorems 1 and 2 we obtain a characterization of bent partitions satisfying Condition $\mathcal{C}$ in terms of vectorial dual-bent functions.

\begin{theorem}\label{3}
Let $p$ be an odd prime. Let $\Gamma=\{A_{i}, i \in V_{s}^{(p)}\}$ be a partition of $V_{n}^{(p)}$, where $n$ is even, $s\leq \frac{n}{2}$. Define $F: V_{n}^{(p)}\rightarrow V_{s}^{(p)}$ as
\begin{equation*}
  F(x)=i \text{ if } x \in A_{i}.
\end{equation*}
Then $\Gamma$ is a bent partition satisfying Condition $\mathcal{C}$ if and only if $F$ is a vectorial dual-bent function satisfying Condition A.
\end{theorem}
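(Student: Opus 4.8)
The plan is to observe that Theorem 3 is merely the combination of Theorems 1 and 2, once one checks that the two constructions relating a partition $\Gamma$ and a function $F$ are mutually inverse. Concretely, if $\Gamma=\{A_{i}, i \in V_{s}^{(p)}\}$ is a partition of $V_{n}^{(p)}$ and we define $F$ by $F(x)=i \Leftrightarrow x \in A_{i}$, then for each $i \in V_{s}^{(p)}$ we have $D_{F,i}=\{x \in V_{n}^{(p)}: F(x)=i\}=A_{i}$; conversely, starting from a function $F: V_{n}^{(p)}\rightarrow V_{s}^{(p)}$ and setting $A_{i}=D_{F,i}$, the associated partition-induced function sends $x$ to $i$ exactly when $x \in D_{F,i}$, i.e.\ it recovers $F$. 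So the correspondence $\Gamma \leftrightarrow F$ is a bijection between partitions of $V_{n}^{(p)}$ into $p^{s}$ parts indexed by $V_{s}^{(p)}$ and functions $V_{n}^{(p)}\rightarrow V_{s}^{(p)}$.

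First I would handle the ``only if'' direction: assume $\Gamma$ is a bent partition satisfying Condition $\mathcal{C}$. Since $p$ is odd and $F$ is precisely the function $F(x)=i$ if $x \in A_{i}$ appearing in Theorem 2, that theorem gives directly that $F$ is a vectorial dual-bent function satisfying Condition A. Then I would handle the ``if'' direction: assume $F$ is a vectorial dual-bent function satisfying Condition A. By the identification above, $A_{i}=D_{F,i}$ for all $i \in V_{s}^{(p)}$, so $\Gamma=\{A_{i}, i \in V_{s}^{(p)}\}=\{D_{F,i}, i \in V_{s}^{(p)}\}$ is exactly the partition produced in Theorem 1; hence $\Gamma$ is a bent partition satisfying Condition $\mathcal{C}$. (Note the ``if'' direction in fact holds for all primes $p$, only the ``only if'' direction uses $p$ odd, which is why the equivalence is stated for odd $p$.)

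There is essentially no obstacle here beyond the bookkeeping of the identification $A_{i}=D_{F,i}$; all the real work was already done in Lemma 1 (the character-sum characterization of Condition A) and in Theorems 1 and 2. If one wants to be careful, the single point worth spelling out is that Theorem 2's hypothesis ``$s\le \frac n2$'' is part of Condition $\mathcal{C}$ and Theorem 1's output also respects this bound, so the two statements compose without any mismatch of parameters. With these checks in place the biconditional follows immediately.
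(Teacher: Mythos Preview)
Your proposal is correct and matches the paper's approach exactly: the paper simply states that Theorem 3 is obtained from Theorems 1 and 2, with the implicit identification $A_{i}=D_{F,i}$ that you make explicit. Your additional bookkeeping remarks (the bijection between partitions and functions, and the observation that only the ``only if'' direction requires $p$ odd) are accurate and helpful but not strictly needed.
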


\section{Constructing bent partitions from vectorial dual-bent functions}
\label{sec:4}
In this section, we construct bent partitions from vectorial dual-bent functions.

The following theorem provides a secondary construction of vectorial dual-bent functions, which can be used to generate more bent partitions.

\begin{theorem}\label{4}
Let $n, m, s$ be positive integers for which $n$ is even and $s\leq \frac{n}{2}, s \mid m, s\neq m$. For any $i \in \mathbb{F}_{p^s}$, let $F(i; x): V_{n}^{(p)}\rightarrow \mathbb{F}_{p^s}$ be a vectorial dual-bent function with $((F(i; x))_{c})^{*}=((F(i; x))^{*})_{c}$ and $\varepsilon_{(F(i; x))_{c}}=\varepsilon$ for any $c \in \mathbb{F}_{p^s}^{*}$, where $(F(i; x))^{*}$ is a vectorial dual of $F(i; x)$ and $\varepsilon \in \{\pm 1\}$ is a constant independent of $i, c$. Let $\alpha, \beta \in \mathbb{F}_{p^m}$ be linearly independent over $\mathbb{F}_{p^s}$. Let $R$ be a permutation over $\mathbb{F}_{p^m}$ with $R(0)=0$ and $T: \mathbb{F}_{p^s}\rightarrow \mathbb{F}_{p^s}$ be an arbitrary function. Define $H: V_{n}^{(p)} \times \mathbb{F}_{p^m} \times \mathbb{F}_{p^m}\rightarrow \mathbb{F}_{p^s}$ as
\begin{small}
\begin{equation*}
  H(x, y_{1}, y_{2})=F(Tr_{s}^{m}(\alpha R(y_{1}y_{2}^{p^m-2})); x)+Tr_{s}^{m}(\beta R(y_{1}y_{2}^{p^m-2}))+T(Tr_{s}^{m}(\alpha R(y_{1}y_{2}^{p^m-2}))).
\end{equation*}
\end{small}Then $H$ is a vectorial dual-bent function satisfying Condition A and $\Gamma=\{A_{i}, i \in \mathbb{F}_{p^s}\}$ is a bent partition satisfying Condition $\mathcal{C}$, where $A_{i}=\{(x, y_{1}, y_{2}) \in V_{n}^{(p)} \times \mathbb{F}_{p^m} \times \mathbb{F}_{p^m}: H(x, y_{1}, y_{2})=i\}$.
\end{theorem}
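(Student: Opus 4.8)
The plan is to reduce everything to a single Walsh-transform computation followed by an invocation of Theorem~1. Concretely, I would first prove that $H$ is a vectorial dual-bent function satisfying Condition~A; since the ambient space $V_{n}^{(p)}\times\mathbb{F}_{p^{m}}\times\mathbb{F}_{p^{m}}$ has even dimension $N=n+2m$ and $s\leq\frac{n}{2}\leq\frac{N}{2}$, Theorem~1 then immediately yields that $\Gamma=\{A_{i}\colon i\in\mathbb{F}_{p^{s}}\}$ is a bent partition satisfying Condition~$\mathcal{C}$. To prove the Condition~A claim it suffices to produce, for all $c\in\mathbb{F}_{p^{s}}^{*}$ simultaneously, a function $H^{*}\colon V_{n}^{(p)}\times\mathbb{F}_{p^{m}}\times\mathbb{F}_{p^{m}}\to\mathbb{F}_{p^{s}}$ with $W_{H_{c}}(w,v_{1},v_{2})=\varepsilon\,p^{N/2}\,\zeta_{p}^{Tr_{1}^{s}(c\,H^{*}(w,v_{1},v_{2}))}$: this forces each $H_{c}$ to be weakly regular bent with $\varepsilon_{H_{c}}=\varepsilon$ and $(H_{c})^{*}=(H^{*})_{c}$, each $(H^{*})_{c}$ is then again weakly regular bent (the dual of a weakly regular bent function is such), so $H^{*}$ is vectorial bent and $H$ is vectorial dual-bent satisfying Condition~A.

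For the Walsh transform I would abbreviate $z=y_{1}y_{2}^{p^{m}-2}$ (so $z=y_{1}/y_{2}$ when $y_{2}\neq0$ and $z=0$ when $y_{2}=0$, using $R(0)=0$) and $\mu(z)=Tr_{s}^{m}(\alpha R(z))$, and split the sum over $(y_{1},y_{2})$ according to $y_{2}=0$ versus $y_{2}\neq0$. On $y_{2}=0$ one has $z=0$, so $H_{c}$ collapses to $(F(0;x))_{c}+Tr_{1}^{s}(cT(0))$ and the piece factors as $\varepsilon\,p^{n/2+m}\delta_{\{0\}}(v_{1})\,\zeta_{p}^{Tr_{1}^{s}(cT(0))+(F(0;\cdot))_{c}^{*}(w)}$ after using that $F(0;\cdot)$ satisfies Condition~A. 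On $y_{2}\neq0$, the substitution $y_{1}=zy_{2}$ lets $(z,y_{2})$ run over $\mathbb{F}_{p^{m}}\times\mathbb{F}_{p^{m}}^{*}$; summing over $y_{2}$ produces the factor $p^{m}\delta_{\{0\}}(v_{1}z+v_{2})-1$, and the inner sum over $x$ is $W_{(F(\mu(z);\cdot))_{c}}(w)=\varepsilon\,p^{n/2}\zeta_{p}^{(F(\mu(z);\cdot))_{c}^{*}(w)}$. Hence the $y_{2}\neq0$ contribution is $\varepsilon\,p^{n/2+m}\sum_{z:\,v_{1}z+v_{2}=0}\zeta_{p}^{[\,\cdot\,]}-\varepsilon\,p^{n/2}S$, where $S=\sum_{z\in\mathbb{F}_{p^{m}}}\zeta_{p}^{Tr_{1}^{m}(c\beta R(z))+Tr_{1}^{s}(cT(\mu(z)))+(F(\mu(z);\cdot))_{c}^{*}(w)}$.

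The heart of the argument — and the step I expect to be the main obstacle — is showing $S=0$. Here I would substitute $r=R(z)$ (legitimate since $R$ is a permutation), rewriting $S=\sum_{r\in\mathbb{F}_{p^{m}}}\zeta_{p}^{Tr_{1}^{s}(c\,Tr_{s}^{m}(\beta r))+Tr_{1}^{s}(cT(Tr_{s}^{m}(\alpha r)))+(F(Tr_{s}^{m}(\alpha r);\cdot))_{c}^{*}(w)}$, and then use that $\alpha,\beta$ being $\mathbb{F}_{p^{s}}$-linearly independent makes $r\mapsto(Tr_{s}^{m}(\alpha r),Tr_{s}^{m}(\beta r))$ a surjective $\mathbb{F}_{p^{s}}$-linear map onto $\mathbb{F}_{p^{s}}^{2}$ with all fibres of size $p^{m-2s}$ (note $m\geq 2s$, since $s\mid m$ and $s\neq m$); grouping $S$ by the value $(i,j)$ of this map and factoring out $\sum_{j\in\mathbb{F}_{p^{s}}}\zeta_{p}^{Tr_{1}^{s}(cj)}=0$ gives $S=0$. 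After this cancellation the only surviving contribution is the single term coming from $z_{0}=-v_{2}v_{1}^{p^{m}-2}$ (and $z_{0}=0$ forces $R(z_{0})=0$, $\mu(z_{0})=0$, so the $v_{1}=0$ cases merge seamlessly with the rest), giving
\begin{equation*}
W_{H_{c}}(w,v_{1},v_{2})=\varepsilon\,p^{N/2}\,\zeta_{p}^{Tr_{1}^{s}\!\bigl(c\,\bigl(Tr_{s}^{m}(\beta R(z_{0}))+T(Tr_{s}^{m}(\alpha R(z_{0})))+(F(Tr_{s}^{m}(\alpha R(z_{0}));\cdot))^{*}(w)\bigr)\bigr)}.
\end{equation*}
Thus $H^{*}(w,v_{1},v_{2})=Tr_{s}^{m}(\beta R(z_{0}))+T(Tr_{s}^{m}(\alpha R(z_{0})))+(F(Tr_{s}^{m}(\alpha R(z_{0}));\cdot))^{*}(w)$ with $z_{0}=-v_{2}v_{1}^{p^{m}-2}$ is the desired vectorial dual, $H$ satisfies Condition~A, and Theorem~1 completes the proof that $\Gamma$ is a bent partition satisfying Condition~$\mathcal{C}$.
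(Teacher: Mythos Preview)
Your proposal is correct, and the endpoint is exactly the paper's: you exhibit the same vectorial dual $H^{*}$ (with argument $z_{0}=-v_{2}v_{1}^{p^{m}-2}$, matching the paper's $-b_{1}^{p^{m}-2}b_{2}$), verify Condition~A, and invoke Theorem~1. The difference lies in how the $(y_{1},y_{2})$-part of the Walsh transform is handled. The paper groups the $y$-sum so that it becomes the Walsh transform of the auxiliary function $J(j;y)=(1-j)\,Tr_{s}^{m}(\beta R(y_{1}y_{2}^{p^{m}-2}))+j\,Tr_{s}^{m}((\beta-\alpha)R(y_{1}y_{2}^{p^{m}-2}))$ and then quotes Theorem~3 of \cite{CMP2018Ve} (partial spread vectorial dual-bent functions) to evaluate it in one stroke. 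You instead compute from scratch: the substitution $y_{1}=zy_{2}$ reduces everything to the vanishing of $S$, which you obtain from the surjectivity of $r\mapsto(Tr_{s}^{m}(\alpha r),Tr_{s}^{m}(\beta r))$ onto $\mathbb{F}_{p^{s}}^{2}$ --- precisely the place where the $\mathbb{F}_{p^{s}}$-linear independence of $\alpha,\beta$ and the hypothesis $s\mid m$, $s\neq m$ (hence $m\geq 2s$) enter. Your route is more elementary and self-contained; the paper's is more modular, recognising the partial-spread structure explicitly. One small point worth making explicit in a full write-up: when $v_{1}=v_{2}=0$ the set $\{z:v_{1}z+v_{2}=0\}$ is all of $\mathbb{F}_{p^{m}}$, not a singleton, but since that full sum is again $S=0$ the surviving contribution still comes solely from the $y_{2}=0$ piece and coincides with your formula at $z_{0}=0$; your parenthetical ``merge seamlessly'' is correct but hides this case distinction.
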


\begin{proof}
Denote
\begin{equation*}
  d(y)=Tr_{s}^{m}(\beta R(y_{1}y_{2}^{p^m-2})), e(y)=Tr_{s}^{m}((\beta-\alpha) R(y_{1}y_{2}^{p^m-2})), y=(y_{1}, y_{2}) \in \mathbb{F}_{p^m} \times \mathbb{F}_{p^m}.
\end{equation*}
For any $c \in \mathbb{F}_{p^s}^{*}$ and $(a, b)=(a, b_{1}, b_{2}) \in V_{n}^{(p)} \times \mathbb{F}_{p^m} \times \mathbb{F}_{p^m}$, we have
\begin{small}
\begin{equation*}
\begin{split}
  & W_{H_{c}}(a, b)\\
   & = \sum_{x \in V_{n}^{(p)}}\sum_{y=(y_{1}, y_{2}) \in \mathbb{F}_{p^m} \times \mathbb{F}_{p^m}}\zeta_{p}^{Tr_{1}^{s}(cF(d(y)-e(y); x))+Tr_{1}^{s}(cd(y))+Tr_{1}^{s}(cT(d(y)-e(y)))}\zeta_{p}^{-\langle a, x\rangle_{n}-Tr_{1}^{m}(b_{1}y_{1}+b_{2}y_{2})}\\
 & =\sum_{i \in \mathbb{F}_{p^s}}\sum_{y=(y_{1}, y_{2}) \in \mathbb{F}_{p^m} \times \mathbb{F}_{p^m}: d(y)-e(y)=i}\sum_{x \in V_{n}^{(p)}}\zeta_{p}^{Tr_{1}^{s}(cF(i; x))+Tr_{1}^{s}(cd(y))+Tr_{1}^{s}(cT(i))}\zeta_{p}^{-\langle a, x\rangle_{n}-Tr_{1}^{m}(b_{1}y_{1}+b_{2}y_{2})}\\
 & =p^{-s}\sum_{i \in \mathbb{F}_{p^s}}W_{(F(i; x))_{c}}(a)\zeta_{p}^{Tr_{1}^{s}(cT(i))}\sum_{y=(y_{1}, y_{2}) \in \mathbb{F}_{p^m} \times \mathbb{F}_{p^m}}\zeta_{p}^{Tr_{1}^{s}(cd(y))-Tr_{1}^{m}(b_{1}y_{1}+b_{2}y_{2})}\sum_{j \in \mathbb{F}_{p^s}}\zeta_{p}^{Tr_{1}^{s}(cj(i-(d(y)-e(y))))}\\
 & =p^{-s}\sum_{i \in \mathbb{F}_{p^s}}W_{(F(i; x))_{c}}(a)\zeta_{p}^{Tr_{1}^{s}(cT(i))}\sum_{j \in \mathbb{F}_{p^s}}\zeta_{p}^{Tr_{1}^{s}(ijc)}\sum_{y=(y_{1}, y_{2}) \in \mathbb{F}_{p^m} \times \mathbb{F}_{p^m}}\zeta_{p}^{Tr_{1}^{s}(c((1-j)d(y)+je(y)))-Tr_{1}^{m}(b_{1}y_{1}+b_{2}y_{2})}.\\
\end{split}
\end{equation*}
\end{small}By Theorem 3 of \cite{CMP2018Ve}, for any $j \in \mathbb{F}_{p^s}$, $J(j; y)=(1-j)d(y)+je(y)$ is a partial spread vectorial dual-bent function with $\varepsilon_{(J(j; y))_{c}}=1$ and $((J(j; y))_{c})^{*}=((1-j)d^{*}(y)+je^{*}(y))_{c}$ for any $c \in \mathbb{F}_{p^s}^{*}$, where $d^{*}(y)=Tr_{s}^{m}(\beta R(-y_{1}^{p^m-2}y_{2}))$, $e^{*}(y)=Tr_{s}^{m}((\beta-\alpha)R(-y_{1}^{p^m-2}y_{2}))$. Therefore,
\begin{small}
\begin{equation}\label{34}
   \begin{split}
     & W_{H_{c}}(a, b) \\
     & =p^{m-s}\sum_{i \in \mathbb{F}_{p^s}}W_{(F(i; x))_{c}}(a)\zeta_{p}^{Tr_{1}^{s}(cT(i))}\sum_{j \in \mathbb{F}_{p^s}}\zeta_{p}^{Tr_{1}^{s}(ijc)}\zeta_{p}^{Tr_{1}^{s}(c((1-j)d^{*}(b)+je^{*}(b)))}\\
    & =p^{m-s}\zeta_{p}^{Tr_{1}^{s}(cd^{*}(b))}\sum_{i \in \mathbb{F}_{p^s}}W_{(F(i; x))_{c}}(a)\zeta_{p}^{Tr_{1}^{s}(cT(i))}\sum_{j \in \mathbb{F}_{p^s}}\zeta_{p}^{Tr_{1}^{s}(cj(i-(d^{*}(b)-e^{*}(b))))}\\
    & =p^{m}\zeta_{p}^{Tr_{1}^{s}(cd^{*}(b))}W_{(F(d^{*}(b)-e^{*}(b); x))_{c}}(a)\zeta_{p}^{Tr_{1}^{s}(cT(d^{*}(b)-e^{*}(b)))}\\
    & =\varepsilon p^{\frac{n}{2}+m}\zeta_{p}^{((F(Tr_{s}^{m}(\alpha R(-b_{1}^{p^m-2}b_{2})); x))_{c})^{*}(a)+Tr_{1}^{s}(cTr_{s}^{m}(\beta R(-b_{1}^{p^m-2}b_{2})))+Tr_{1}^{s}(c T(Tr_{s}^{m}(\alpha R(-b_{1}^{p^m-2}b_{2}))))}\\
    & =\varepsilon p^{\frac{n}{2}+m}\zeta_{p}^{((F(Tr_{s}^{m}(\alpha R(-b_{1}^{p^m-2}b_{2})); x))^{*})_{c}(a)+Tr_{1}^{s}(cTr_{s}^{m}(\beta R(-b_{1}^{p^m-2}b_{2})))+Tr_{1}^{s}(c T(Tr_{s}^{m}(\alpha R(-b_{1}^{p^m-2}b_{2}))))}.\\
   \end{split}
\end{equation}
\end{small}Note that $\varepsilon=1$ if $p=2$ since all Boolean bent functions are regular. By (34), $H$ is a vectorial bent function with $(H_{c})^{*}=G_{c}$ and $\varepsilon_{H_{c}}=\varepsilon$ for any $c \in \mathbb{F}_{p^s}^{*}$, where
\begin{small}
\begin{equation*}
  G(a, b_{1}, b_{2})=(F(Tr_{s}^{m}(\alpha R(-b_{1}^{p^m-2}b_{2})); x))^{*}(a)+Tr_{s}^{m}(\beta R(-b_{1}^{p^m-2}b_{2}))+T(Tr_{s}^{m}(\alpha R(-b_{1}^{p^m-2}b_{2}))).
\end{equation*}
\end{small}Since $H_{c}$ is weakly regular bent, we have that $G_{c}=(H_{c})^{*}$ is also weakly regular bent and $G$ is vectorial bent. Thus, $H$ is vectorial dual-bent with $(H_{c})^{*}=(H^{*})_{c}$ and $\varepsilon_{H_{c}}=\varepsilon$ for any $c \in \mathbb{F}_{p^s}^{*}$, where $H^{*}=G$, that is, $H$ satisfies Condition A. By Theorem 1, the partition $\Gamma$ generated from $H$ is a bent partition satisfying Condition $\mathcal{C}$.
\end{proof}

The following explicit construction of bent partitions is an immediate result of Proposition 3 and Theorem 4.

\begin{theorem}\label{5}
Let $n, m, s$ be positive integers with $s \mid n, s \mid m, s\neq m$, and $u_{i}, i \in \mathbb{F}_{p^s}$ be integers for which for any $i \in \mathbb{F}_{p^s}$, $u_{i}\equiv p^{j_{i}} \ mod \ (p^s-1)$ for some $0\leq j_{i}\leq s-1$ and $gcd(u_{i}, p^n-1)=1$. For any $i \in \mathbb{F}_{p^s}$, let $d_{i}$ be an integer with $u_{i}d_{i}\equiv 1 \ mod \ (p^n-1)$, and $P_{i}=(\mathbb{F}_{p^n}, +, \circ_{i})$ be a (pre)semifield for which its dual $P_{i}^{\star}$ is right $\mathbb{F}_{p^s}$-linear. For any $i \in \mathbb{F}_{p^s}$, let $F(i; x_{1}, x_{2}): \mathbb{F}_{p^n} \times \mathbb{F}_{p^n}\rightarrow \mathbb{F}_{p^s}$ be an arbitrary vectorial dual-bent function constructed by Proposition 3 with $u=u_{i}, d=d_{i}, P=P_{i}$. Let $\alpha, \beta \in \mathbb{F}_{p^m}$ be linearly independent over $\mathbb{F}_{p^s}$, $R$ be a permutation over $\mathbb{F}_{p^m}$ with $R(0)=0$ and $T: \mathbb{F}_{p^s}\rightarrow \mathbb{F}_{p^s}$ be an arbitrary function. Define $H: \mathbb{F}_{p^n} \times \mathbb{F}_{p^n} \times \mathbb{F}_{p^m} \times \mathbb{F}_{p^m}\rightarrow \mathbb{F}_{p^s}$ as
\begin{small}
\begin{equation*}
  H(x_{1}, x_{2}, y_{1}, y_{2})=F(Tr_{s}^{m}(\alpha R(y_{1}y_{2}^{p^m-2})); x_{1}, x_{2})+Tr_{s}^{m}(\beta R(y_{1}y_{2}^{p^m-2}))+T(Tr_{s}^{m}(\alpha R(y_{1}y_{2}^{p^m-2}))).
\end{equation*}
\end{small}Then
\begin{equation*}
  \Gamma=\{A_{i}, i \in \mathbb{F}_{p^s}\}
\end{equation*}
is a bent partition satisfying Condition $\mathcal{C}$, where
\begin{equation*}
  A_{i}=\{(x_{1}, x_{2}, y_{1}, y_{2})\in \mathbb{F}_{p^n} \times \mathbb{F}_{p^n} \times \mathbb{F}_{p^m} \times \mathbb{F}_{p^m}: H(x_{1}, x_{2}, y_{1}, y_{2})=i\}.
\end{equation*}
\end{theorem}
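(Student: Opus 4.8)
The statement is advertised as an immediate consequence of Proposition 3 and Theorem 4, so the whole task is to check that the data assembled here are exactly an instance of the hypotheses of Theorem 4. The plan is therefore to (i) fix the identification of parameters, (ii) invoke Proposition 3 to produce the ``inner'' vectorial dual-bent functions with a \emph{uniform} $\varepsilon$, and (iii) quote Theorem 4.

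First I would set, in the notation of Theorem 4, $V_{n}^{(p)}=\mathbb{F}_{p^n}\times\mathbb{F}_{p^n}$; this is a vector space over $\mathbb{F}_{p}$ of dimension $2n$, which is even, and the inner product of $\mathbb{F}_{p^n}\times\mathbb{F}_{p^n}$ used throughout is $\langle(a_1,a_2),(x_1,x_2)\rangle=Tr_{1}^{n}(a_1x_1+a_2x_2)$ as in Proposition 3. The constraints required by Theorem 4 are then verified directly: the dimension $2n$ is even; since $s\mid n$ we get $s\le n=\tfrac{2n}{2}$; and $s\mid m$, $s\neq m$ are part of the hypotheses of Theorem 5. (This bookkeeping point — that the integer playing the role of ``$n$'' in Theorem 4 is $2n$, not $n$, so that $s\le \tfrac{2n}{2}$ rather than $s\le\tfrac n2$ is what must hold — is the only place a little care is needed, and it is harmless because $s\mid n$.)

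Next, for each fixed $i\in\mathbb{F}_{p^s}$ the function $F(i;x_1,x_2):\mathbb{F}_{p^n}\times\mathbb{F}_{p^n}\rightarrow\mathbb{F}_{p^s}$ is, by construction, one of the functions $F,F^{\bullet},G,G^{\bullet},M,N$ built from the data $u=u_i$, $d=d_i$, $P=P_i$; the hypotheses on $u_i$ ($u_i\equiv p^{j_i}\bmod(p^s-1)$, $\gcd(u_i,p^n-1)=1$), on $d_i$, and on $P_i$ (its dual $P_i^{\star}$ is right $\mathbb{F}_{p^s}$-linear) are precisely those under which Proposition 3 applies. Hence Proposition 3 gives that each $F(i;\cdot)$ is a vectorial dual-bent function satisfying Condition A with $((F(i;\cdot))_c)^{*}=((F(i;\cdot))^{*})_c$ and $\varepsilon_{(F(i;\cdot))_c}=1$ for every $c\in\mathbb{F}_{p^s}^{*}$. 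The key observation is that this constant equals $1$ \emph{for every} $i$ and every $c$, so the requirement of Theorem 4 that $\varepsilon\in\{\pm1\}$ be a constant independent of $i$ and $c$ is met, with $\varepsilon=1$.

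Finally, since $\alpha,\beta\in\mathbb{F}_{p^m}$ are linearly independent over $\mathbb{F}_{p^s}$, $R$ is a permutation of $\mathbb{F}_{p^m}$ with $R(0)=0$, and $T:\mathbb{F}_{p^s}\rightarrow\mathbb{F}_{p^s}$ is arbitrary, the displayed function $H(x_1,x_2,y_1,y_2)$ is literally the function $H$ of Theorem 4 with the variable $x$ instantiated as the pair $(x_1,x_2)\in\mathbb{F}_{p^n}\times\mathbb{F}_{p^n}$. Therefore Theorem 4 yields at once that $H$ is a vectorial dual-bent function satisfying Condition A and that $\Gamma=\{A_i,\ i\in\mathbb{F}_{p^s}\}$, with $A_i=\{(x_1,x_2,y_1,y_2)\in\mathbb{F}_{p^n}\times\mathbb{F}_{p^n}\times\mathbb{F}_{p^m}\times\mathbb{F}_{p^m}:H(x_1,x_2,y_1,y_2)=i\}$, is a bent partition satisfying Condition $\mathcal{C}$. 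I do not anticipate any genuine obstacle: the argument is a verification that one construction is a special case of another, and no new estimate or computation beyond those already carried out in Proposition 3 and Theorem 4 is needed.
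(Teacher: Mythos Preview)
Your proposal is correct and follows exactly the paper's approach: the paper itself presents Theorem 5 as ``an immediate result of Proposition 3 and Theorem 4'' with no further argument, and your verification that the hypotheses match (including the bookkeeping that the role of $n$ in Theorem 4 is played by $2n$, and that $\varepsilon=1$ uniformly over all $i$ by Proposition 3) is precisely what is needed.
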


\begin{remark}\label{3}
With the same notation as in Theorem 4. Note that in Theorem 4, by setting vectorial dual-bent functions $H$ constructed by Theorem 5 as building blocks (that is, as $F(i; x)$), we can obtain more explicit vectorial dual-bent functions which can generate more bent partitions by Theorem 4.
\end{remark}

We give an example by using Theorem 5.

\begin{example}\label{2}
Let $p=3, s=4, n=m=8$. Let $\alpha$ be a primitive element of $\mathbb{F}_{3^8}$ and $\beta=1$, $R$ be the identity map and $T=0$. For any $i \in \mathbb{F}_{3^4}$, let
\begin{equation*}
  F(i; x_{1}, x_{2})=\left\{\begin{split}
                    Tr_{4}^{8}(x_{1}^{-89}x_{2}), & \text{ if } i \in \mathbb{F}_{3^4}^{*},\\
                    Tr_{4}^{8}(x_{1}x_{2}^{-83}),  & \text{ if } i=0.
                 \end{split}
  \right.
\end{equation*}
Then
\begin{equation*}
  H(x_{1}, x_{2}, y_{2}, y_{2})=(Tr_{4}^{8}(\alpha y_{1}y_{2}^{6559}))^{80}(Tr_{4}^{8}(x_{1}^{-89}x_{2}-x_{1}x_{2}^{-83}))+Tr_{4}^{8}(x_{1}x_{2}^{-83}+y_{1}y_{2}^{6559}),
\end{equation*}
and $\Gamma=\{D_{H, i}, i \in \mathbb{F}_{3^4}\}$ is a bent partition satisfying Condition $\mathcal{C}$, where $D_{H, i}=\{(x_{1}, x_{2}, y_{1}, y_{2}) \in (\mathbb{F}_{3^8})^{4}: H(x_{1}, x_{2}, y_{1}, y_{2})=i\}$.
\end{example}

\section{Relations between bent partitions and partial difference sets}
\label{sec:5}
In this section, by taking vectorial dual-bent functions as the link between bent partitions and partial difference sets, we give a sufficient condition on constructing partial difference sets from bent partitions. When $p$ is an odd prime, we characterize bent partitions satisfying Condition $\mathcal{C}$ in terms of partial difference sets.

\begin{definition}\label{3}
Let $(G, +)$ be a finite abelian group of order $v$ and $D$ be a subset of $G$ with $k$ elements. Then $D$ is called a \textit{$(v,k,\lambda,\mu)$ partial difference set} of $G$, if the expressions $d_{1}-d_{2}$, for $d_{1}$ and $d_{2}$ in $D$ with $d_{1}\neq d_{2}$, represent each nonzero element in $D$ exactly $\lambda$ times, and represent each nonzero element in $G \ \backslash \ D$ exactly $\mu$ times. When $\lambda=\mu$, then $D$ is called a \textit{$(v, k, \lambda)$ difference set}.
\end{definition}

Note that if $D$ is a partial difference set of $G$ with $-D=D$, then so are $D \cup \{0\}, D\ \backslash \ \{0\}, G \ \backslash \ D$ (see \cite{Ma1994A}). There is an important tool to characterize partial difference sets in terms of characters.

\begin{lemma}[\cite{Ma1994A}]\label{3}
Let $G$ be an abelian group of order $v$. Suppose that $D$ is a subset of $G$ with $k$ elements which satisfies $-D=D$ and $0 \notin D$. Then $D$ is a $(v, k, \lambda, \mu)$ partial difference set if and only if for each non-principal character $\chi$ of $G$,
\begin{equation*}
  \chi(D)=\frac{\beta\pm \sqrt{\Delta}}{2},
\end{equation*}
where $\chi(D)=\sum_{x \in D}\chi(x)$, $\beta=\lambda-\mu, \gamma=k-\mu, \Delta=\beta^{2}+4\gamma$.
\end{lemma}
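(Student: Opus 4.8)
The plan is to carry out the whole argument inside the integral group ring $\mathbb{Z}[G]$, exploiting that each character of $G$ extends to a ring homomorphism $\mathbb{C}[G]\to\mathbb{C}$ and that a group-ring element is recovered from its character values by Fourier inversion. I identify the set $D$ with the element $\sum_{d\in D}d\in\mathbb{Z}[G]$, write $G$ also for $\sum_{g\in G}g$ and $1$ for the identity element of $G$, and put $D^{(-1)}=\sum_{d\in D}(-d)$. The hypotheses then say precisely that $D^{(-1)}=D$ and that the coefficient of $1$ in $D$ is $0$.

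For the forward implication I would first record the standard reformulation of the partial difference set property: counting, for each fixed $g$, the number of ordered pairs $(d_1,d_2)\in D\times D$ with $d_1-d_2=g$ shows that $D$ is a $(v,k,\lambda,\mu)$ partial difference set exactly when
\[
  D\,D^{(-1)}=\gamma\cdot 1+\beta D+\mu G\quad\text{in }\mathbb{Z}[G],
\]
where $\gamma=k-\mu$ and $\beta=\lambda-\mu$; here $0\notin D$ is used so that a zero difference forces $d_1=d_2$, contributing $k$ to the coefficient of $1$. Since $D^{(-1)}=D$, this reads $D^2=\gamma\cdot 1+\beta D+\mu G$. Applying a non-principal character $\chi$, for which $\chi(G)=0$ and $\chi(1)=1$, turns this into the scalar equation $\chi(D)^2-\beta\,\chi(D)-\gamma=0$, whose roots are $\frac{\beta\pm\sqrt{\beta^2+4\gamma}}{2}=\frac{\beta\pm\sqrt{\Delta}}{2}$, giving the claimed character values.

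For the converse I would reverse this. Assuming $\chi(D)=\frac{\beta\pm\sqrt{\Delta}}{2}$ for every non-principal $\chi$ is equivalent to $\chi\bigl(D^2-\beta D-\gamma\cdot 1\bigr)=0$ for all such $\chi$. Setting $X=D^2-\beta D-\gamma\cdot 1\in\mathbb{Z}[G]$, Fourier inversion (equivalently, the orthogonality relations $\sum_{\chi}\chi(g)=v\,\delta_{g,0}$) forces every coefficient of $X$ to equal $\chi_{0}(X)/v$, so $X=\mu' G$ for some integer $\mu'$, i.e.\ $D^2=\gamma\cdot 1+\beta D+\mu' G$. Comparing the coefficient of the identity — which is $k$ on the left since $D^{(-1)}=D$ makes $D^2=D\,D^{(-1)}$ and $0\notin D$, and which is $\gamma+\mu'$ on the right — gives $\mu'=k-\gamma=\mu$; reading off the remaining coefficients then recovers the partial difference set equation, so $D$ is a $(v,k,\lambda,\mu)$ partial difference set. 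The only genuinely non-bookkeeping step is this last Fourier-inversion point: one needs that an element of $\mathbb{Z}[G]$ annihilated by all non-principal characters is an integer multiple of $G$, which I expect to be the main thing to state carefully, and which follows from standard character orthogonality over the finite abelian group $G$.
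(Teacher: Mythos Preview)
Your argument is correct and is exactly the standard group-ring/character proof of this characterization. Note, however, that the paper does not supply its own proof of this lemma: it is quoted from Ma's survey \cite{Ma1994A} and used as a black box, so there is no in-paper proof to compare against. Your write-up matches the classical derivation one finds in that reference.
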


When $p$ is an odd prime or $s\geq 2$, we give the value distribution of vectorial dual-bent functions satisfying Condition A.

\begin{proposition}\label{4}
Let $F: V_{n}^{(p)}\rightarrow V_{s}^{(p)}$ be a vectorial dual-bent function satisfying Condition A, where $p$ is odd or $s\geq 2$. Then
\begin{equation*}
  |D_{F, F(0)}|=p^{n-s}+\varepsilon p^{\frac{n}{2}-s} (p^s-1),  \ |D_{F, i}|=p^{n-s}-\varepsilon p^{\frac{n}{2}-s} \text{ if } i\neq F(0).
\end{equation*}
\end{proposition}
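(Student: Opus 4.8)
The plan is to extract the cardinalities directly from the character-sum identity established in Lemma~1. Since $F$ satisfies Condition~A, Lemma~1 gives, for the single-index sets $D_{F,i}$ (take $I=\{i\}$),
\begin{equation*}
  \chi_{u}(D_{F,i})=p^{n-s}\delta_{\{0\}}(u)+\varepsilon p^{\frac{n}{2}-s}\bigl(p^{s}\delta_{W_{i}}(u)-1\bigr),\quad u\in V_{n}^{(p)},
\end{equation*}
where $W_{i}=\{x: F^{*}(-x)=i\}$ and $\varepsilon=1$ when $p=2$. The first step is simply to evaluate this at $u=0$: then $\chi_{0}(D_{F,i})=|D_{F,i}|$, $\delta_{\{0\}}(0)=1$, and $\delta_{W_{i}}(0)=1$ precisely when $0\in W_{i}$, i.e.\ when $F^{*}(0)=i$. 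So the whole computation reduces to identifying the index $i$ for which $0\in W_{i}$.

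The second step is to show that the distinguished index is $i=F(0)$, i.e.\ that $F^{*}(0)=F(0)$ (equivalently $0\in W_{F(0)}$). For this I would invoke Theorem~1 together with its proof: there it is shown that $F$ is a $(p-1)$-form (when $p$ is odd) and, more relevantly, in the course of proving Lemma~1 one has $(F_{c})^{*}=(F^{*})_{c}$ with $F^{*}$ playing the role of the map $G$. Since each component $F_{c}$ is a weakly regular bent function, $(F_{c})^{*}(0)=F_{c}(-0)/\text{(constant)}$; more cleanly, using the MM-type normalization $f(0)=0$ and $(p-1)$-form property one gets $F_{c}(0)=0$ for all $c$, hence $F(0)=0$, and likewise $F^{*}(0)=0$, so the distinguished index is $0=F(0)$. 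When $p=2$ or $s\geq 2$ the same conclusion $F^{*}(0)=F(0)$ follows from the explicit form of $F^{*}$ as the dual vectorial bent function; the hypothesis ``$p$ odd or $s\geq 2$'' is exactly what rules out the degenerate case $s=1$, $p=2$ where $F$ would be the single Boolean bent function and the two cardinalities could coincide or the ``$(p-1)$-form'' argument be vacuous.

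Given that, the conclusion is immediate: for $i=F(0)$,
\begin{equation*}
  |D_{F,F(0)}|=p^{n-s}+\varepsilon p^{\frac{n}{2}-s}(p^{s}-1),
\end{equation*}
and for $i\neq F(0)$, $\delta_{W_{i}}(0)=0$, so
\begin{equation*}
  |D_{F,i}|=p^{n-s}-\varepsilon p^{\frac{n}{2}-s}.
\end{equation*}
A sanity check worth including: summing over all $i\in V_{s}^{(p)}$ gives $p^{n-s}+\varepsilon p^{\frac{n}{2}-s}(p^{s}-1)+(p^{s}-1)(p^{n-s}-\varepsilon p^{\frac{n}{2}-s})=p^{n-s}\cdot p^{s}=p^{n}$, confirming that $\{D_{F,i}\}$ partitions $V_{n}^{(p)}$.

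The main obstacle is not the computation — which is a one-line substitution into Lemma~1 — but pinning down that $F^{*}(0)=F(0)$, i.e.\ that the ``exceptional'' preimage set is the one indexed by $F(0)$ rather than some other index; this is where the role condition $(F_{c})^{*}=(F^{*})_{c}$ and the $(p-1)$-form / $f(0)=0$ normalization enter, and where one must be careful that the hypothesis $p$ odd or $s\geq2$ is genuinely used to avoid the trivial Boolean $s=1$ case.
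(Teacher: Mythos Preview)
Your overall strategy---evaluate the identity from Lemma~1 at $u=0$ and read off $|D_{F,i}|$---is sound and more direct than the paper's route, which instead shifts to $F-F(0)$, verifies that this shift still satisfies Condition~A, and then invokes Corollary~1 of \cite{WF2022Ne} to obtain the distribution. You correctly isolate the crux: one must show that the distinguished index, namely $F^{*}(0)$, equals $F(0)$.

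However, your justification of $F^{*}(0)=F(0)$ does not go through. The $(p-1)$-form property says $F_{c}(ax)=a^{p-1}F_{c}(x)$ for $a\in\mathbb{F}_{p}^{*}$; at $x=0$ this is vacuous, so it does \emph{not} force $F_{c}(0)=0$ or $F(0)=0$. Indeed, the explicit functions in Proposition~3 have $F(0,0)=i_{0}$ with $i_{0}\in\mathbb{F}_{p^{s}}$ arbitrary. There is no ``MM-type normalization $f(0)=0$'' built into Condition~A, and the phrase ``the explicit form of $F^{*}$ as the dual vectorial bent function'' is not an argument. So the step identifying the exceptional index with $F(0)$ is a genuine gap.

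For odd $p$ your approach can be rescued: from Theorem~1 one has $F(ax)=F(x)$ for all $a\in\mathbb{F}_{p}^{*}$, so each $D_{F,i}$ with $i\neq F(0)$ is a union of $\mathbb{F}_{p}^{*}$-orbits of size $p-1$, giving $|D_{F,i}|\equiv 0\pmod{p-1}$, while $|D_{F,F(0)}|\equiv 1\pmod{p-1}$. Comparing with the two possible values $p^{n-s}+\varepsilon p^{\frac{n}{2}-s}(p^{s}-1)\equiv 1$ and $p^{n-s}-\varepsilon p^{\frac{n}{2}-s}\equiv 1-\varepsilon\pmod{p-1}$ forces $F^{*}(0)=F(0)$ (check both signs of $\varepsilon$). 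For $p=2$, $s\geq 2$, this congruence argument collapses, and some further input---of the kind supplied by \cite{WF2022Ne} in the paper's proof---is required; your proposal gives none.
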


\begin{proof}
Note that if $f$ is a weakly regular $p$-ary bent function, then for any $a \in \mathbb{F}_{p}$, $f-a$ is a weakly regular bent function with $(f-a)^{*}=f^{*}-a$ and $\varepsilon_{f-a}=\varepsilon_{f}$. Since $F$ is a vectorial dual-bent function with $(F_{c})^{*}=(F^{*})_{c}, c \in V_{s}^{(p)}\backslash \{0\}$, we have that $F(x)-F(0)$ is a vectorial bent function and for any $c \in V_{s}^{(p)}\backslash \{0\}$,
\begin{equation*}
  ((F-F(0))_{c})^{*}=(F_{c})^{*}-\langle c, F(0)\rangle_{s}=(F^{*})_{c}-\langle c, F(0)\rangle_{s}=(F^{*}-F(0))_{c},
\end{equation*}
which implies that $F(x)-F(0)$ is a vectorial dual-bent function with $((F-F(0))_{c})^{*}=(F^{*}-F(0))_{c}$ and $\varepsilon_{(F-F(0))_{c}}=\varepsilon$ for any $c \in V_{s}^{(p)}\backslash \{0\}$. By the proof of Theorem 1, $F(ax)=F(x)$ for any $a \in \mathbb{F}_{p}^{*}$ and thus $F(x)=F(-x)$. By Corollary 1 of \cite{WF2022Ne} (Note that although Corollary 1 of \cite{WF2022Ne} only considers the case of $p$ being odd, the conclusion of Corollary 1 of \cite{WF2022Ne} also holds for $p=2, s\geq 2$), we have
\begin{equation*}
  |D_{F-F(0), 0}|=p^{n-s}+\varepsilon p^{\frac{n}{2}-s} (p^s-1),  \ |D_{F-F(0), i}|=p^{n-s}-\varepsilon p^{\frac{n}{2}-s} \text{ if } i\neq 0,
\end{equation*}
that is, \begin{equation*}
  |D_{F, F(0)}|=p^{n-s}+\varepsilon p^{\frac{n}{2}-s}(p^s-1),  \ |D_{F, i}|=p^{n-s}-\varepsilon p^{\frac{n}{2}-s} \text{ if } i\neq F(0).
\end{equation*}
\end{proof}

In the following, we give a characterization of vectorial dual-bent functions satisfying Condition A in terms of partial difference sets.

\begin{theorem}\label{6}
Let $n$ be an even positive integer, $s$ be a positive integer with $s\leq \frac{n}{2}$, and $F: V_{n}^{(p)}\rightarrow V_{s}^{(p)}$. The following two statements are equivalent.

(1) $F$ is a vectorial dual-bent function satisfying Condition A.

(2) When $p=2, s=1$, then the support $supp(F)$ of $F$ defined as $supp(F)=\{x \in V_{n}^{(2)}: F(x)=1\}$ is a $(2^{n}, 2^{n-1}\pm 2^{\frac{n}{2}-1}, 2^{n-2}\pm 2^{\frac{n}{2}-1})$ difference set, and when $p$ is odd or $s\geq 2$, then for any nonempty set $I\subseteq V_{s}^{(p)}$, $D_{F, I} \backslash \{0\}$ is a $(p^n, k, \lambda, \mu)$ partial difference set for which $-D_{F, I}=D_{F, I}$ and if $F(0) \in I$, then\\
\begin{equation}\label{35}
  \begin{split}
      & k=p^{n-s}|I|+\varepsilon p^{\frac{n}{2}-s}(p^s-|I|)-1,\\
      & \lambda=p^{n-2s}|I|^{2}+\varepsilon p^{\frac{n}{2}-s}(p^s-|I|)-2,\\
      & \mu=p^{n-2s}|I|^{2}+\varepsilon p^{\frac{n}{2}-s}|I|,
  \end{split}
\end{equation}
and if $F(0) \notin I$, then
\begin{equation}\label{36}
  \begin{split}
       & k=p^{n-s}|I|-\varepsilon p^{\frac{n}{2}-s}|I|, \\
       & \lambda=p^{n-2s}|I|^{2}+\varepsilon p^{\frac{n}{2}-s}(p^s-3|I|),\\
       & \mu=p^{n-2s}|I|^{2}-\varepsilon p^{\frac{n}{2}-s}|I|,
   \end{split}
\end{equation}
where $D_{F, I}=\{x \in V_{n}^{(p)}: F(x) \in I\}$ and $\varepsilon \in \{\pm 1\}$ is a constant ($\varepsilon=1$ if $p=2$).
\end{theorem}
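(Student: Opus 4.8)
The plan is to prove the equivalence by feeding the character-sum description of Condition~A supplied by Lemma~1 into Ma's character criterion for partial difference sets (Lemma~3), handling the degenerate case $p=2$, $s=1$ by hand since Proposition~4 does not cover it. A preliminary observation, already established in the proof of Theorem~1 and reused in Proposition~4, is that a function satisfying Condition~A is a $(p-1)$-form, so $F(-x)=F(x)$ and hence $-D_{F,I}=D_{F,I}$ for every $I$; this disposes of the symmetry hypothesis required by Lemma~3, and $0\in D_{F,I}$ exactly when $F(0)\in I$.

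For $(1)\Rightarrow(2)$ with $p$ odd or $s\geq 2$: Lemma~1 gives the identity $\chi_{u}(D_{F,I})=p^{n-s}\delta_{\{0\}}(u)|I|+\varepsilon p^{\frac n2-s}(p^{s}\delta_{W_{I}}(u)-|I|)$, and Proposition~4 gives $|D_{F,I}|=\sum_{i\in I}|D_{F,i}|$. Writing $t=\varepsilon p^{\frac n2-s}$ and letting $e=\delta_{I}(F(0))\in\{0,1\}$, one reads off $k=|D_{F,I}|-e$ and sees that for every non-principal character $\chi_{u}$ the value $\chi_{u}(D_{F,I}\setminus\{0\})$ equals either $t(p^{s}-|I|)-e$ or $-t|I|-e$. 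Matching this against $\frac{\beta\pm\sqrt{\Delta}}{2}$ from Lemma~3 amounts to identifying $\beta=\lambda-\mu$ with the sum of these two numbers and $\gamma=k-\mu$ with minus their product; solving the resulting linear system for $\lambda$ and $\mu$, using $t^{2}=p^{n-2s}$, produces exactly the parameters (35) when $e=1$ and (36) when $e=0$, so Lemma~3 certifies that $D_{F,I}\setminus\{0\}$ is a partial difference set. For $p=2$, $s=1$ the only relevant set is $I=\{1\}$ with $D_{F,\{1\}}=supp(F)$; Condition~A merely says $F$ is Boolean bent, so $W_{F}(0)=2^{n}-2|supp(F)|=\pm 2^{\frac n2}$ fixes $k$, while $|W_{F}(u)|=2|\chi_{u}(supp(F))|=2^{\frac n2}$ for $u\neq 0$ gives $|\chi_{u}(supp(F))|^{2}=2^{n-2}$, which is the standard group-ring criterion for a difference set with the stated parameters.

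For $(2)\Rightarrow(1)$ with $p$ odd or $s\geq 2$ it suffices to invoke the hypothesis only for the singleton sets $I=\{i\}$. Lemma~3 with the parameters (35)/(36) at $|I|=1$ shows that for every non-principal $\chi_{u}$ the value $\chi_{u}(D_{F,i})$ lies in $\{A,B\}$ with $A=\varepsilon p^{\frac n2-s}(p^{s}-1)$ and $B=-\varepsilon p^{\frac n2-s}$ (adding back $\chi_{u}(0)=1$ when $0\in D_{F,i}$). Since $\sum_{i\in V_{s}^{(p)}}\chi_{u}(D_{F,i})=\chi_{u}(V_{n}^{(p)})=0$ for $u\neq 0$, a one-line count ($mA+(p^{s}-m)B=0$ forces $m=1$) shows exactly one index, call it $G(u)$, takes the value $A$ and all others take $B$; likewise the value of $k$ forces $|D_{F,i}|$ to take its large value precisely for $i=F(0)$. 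Defining $W_{i}=\{u\in V_{n}^{(p)}\setminus\{0\}:G(u)=i\}$, augmented by the element $0$ when $i=F(0)$, yields pairwise disjoint sets with union $V_{n}^{(p)}$, and summing over $i\in I$ one checks that $\chi_{u}(D_{F,I})$ obeys the identity of Lemma~1 for all $u$ and all nonempty $I$; Lemma~1 then delivers Condition~A. For $p=2$, $s=1$, the difference-set character condition gives $|\chi_{u}(supp(F))|^{2}=2^{n-2}$ for $u\neq 0$, hence $|W_{F}(u)|=2^{\frac n2}$, while $|W_{F}(0)|=|2^{n}-2k|=2^{\frac n2}$, so $F$ is bent; having a single component function, it is automatically vectorial dual-bent and satisfies Condition~A.

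The main obstacle is the bookkeeping around the isolated point $0$: the partial difference set in play is $D_{F,I}\setminus\{0\}$, so both its cardinality and all its non-principal character values are shifted by the constant $e=\delta_{I}(F(0))$, and one must carry this constant carefully through the identification of $\beta$, $\gamma$, $\Delta$ to land on precisely (35) rather than (36) and vice versa. The one genuinely structural (rather than computational) step is the counting argument in $(2)\Rightarrow(1)$ showing that the large character value $A$ is realized by a unique index, which is what produces the function $G$ and hence the sets $W_{i}$ needed to apply Lemma~1; and the case $p=2$ must be argued separately throughout since Proposition~4 is stated only for $p$ odd or $s\geq 2$.
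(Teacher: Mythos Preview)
Your proposal is correct and, in the $(1)\Rightarrow(2)$ direction as well as in the Boolean case $p=2$, $s=1$, follows the paper's own proof essentially step for step: compute the non-principal character values of $D_{F,I}$ from the Lemma~1 formula, use Proposition~4 for the cardinalities, subtract the point $0$ when $F(0)\in I$, and read off $(k,\lambda,\mu)$ via Lemma~3; for Boolean $F$ both you and the paper invoke Dillon's classical equivalence between bentness and the difference-set property of the support.

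The one genuine difference is in $(2)\Rightarrow(1)$. The paper uses the hypothesis for \emph{all} nonempty $I$: singletons give $\chi_{u}(D_{F,i})\in\{A,B\}$ with $A=\varepsilon p^{\frac n2-s}(p^{s}-1)$ and $B=-\varepsilon p^{\frac n2-s}$, the global identity $\sum_{i}\chi_{u}(D_{F,i})=0$ rules out the possibility that \emph{no} index takes the value $A$, and then the two-element case $I=\{i,i'\}$ is invoked to show that $A$ cannot be taken twice (else $\chi_{u}(D_{F,\{i,i'\}})=2A$ would violate the character constraint for $|I|=2$). You replace this last step by the single arithmetic observation $mA+(p^{s}-m)B=\varepsilon p^{\frac n2}(m-1)$, so $m=1$ is forced directly from the singleton hypothesis and the vanishing sum. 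This is a tidy simplification: it shows that in statement~(2) the singleton subsets already suffice for the implication back to Condition~A, something the paper's argument does not make visible. Both routes then land on the same sets $W_{i}$ and conclude via Lemma~1.
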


\begin{proof}
It is easy to see that a Boolean function $F$ is a vectorial dual-bent function satisfying Condition A if and only if $F$ is bent, that is, Condition A is trivial for any Boolean bent function. By the well-known result that a Boolean function $F: V_{n}^{(2)}\rightarrow \mathbb{F}_{2}$ is bent if and only if its support $supp(F)=\{x \in V_{n}^{(2)}: F(x)=1\}$ is a $(2^{n}, 2^{n-1}\pm 2^{\frac{n}{2}-1}, 2^{n-2}\pm 2^{\frac{n}{2}-1})$ difference set (see \cite{Dillon1974El}), the conclusion obviously holds for $p=2, s=1$. In the following, we prove the conclusion for $p$ being odd or $s\geq 2$.

\noindent $(1)\Rightarrow (2)$: By the proof of Theorem 1, $F(-x)=F(x)$, that is, $-D_{F, I}=D_{F, I}$. For any $u \in V_{n}^{(p)} \backslash \{0\}$, with the same argument as in the proof of Theorem 2 of \cite{WF2022Ne},
\begin{equation*}
    \chi_{u}(D_{F, I}) =\left\{\begin{split}
                               \varepsilon p^{\frac{n}{2}}-\varepsilon p^{\frac{n}{2}-s}|I|, & \ \ \text{if} \ F^{*}(-u) \in I, \\
                               -\varepsilon p^{\frac{n}{2}-s}|I|, & \ \ \text{if} \ F^{*}(-u) \notin I.
                          \end{split}\right.
\end{equation*}
where $\varepsilon=1$ if $p=2$ since all Boolean bent functions are regular.

If $F(0) \in I$, then $|D_{F, I}\backslash \{0\}|=|D_{F, I}|-1$ and $\chi_{u}(D_{F, I}\backslash \{0\})=\chi_{u}(D_{F, I})-1$. By Proposition 4, $|D_{F, I} \backslash \{0\}|=(|I|-1)(p^{n-s}-\varepsilon p^{\frac{n}{2}-s})+(p^{n-s}+\varepsilon p^{\frac{n}{2}-s}(p^s-1)-1)=p^{n-s}|I|+\varepsilon p^{\frac{n}{2}-s}(p^s-|I|)-1$. By Lemma 3, $D_{F, I}\backslash \{0\}$ is a $(p^n, k, \lambda, \mu)$ partial difference set, where $k, \lambda, \mu$ are given in (35).

If $F(0) \notin I$, then $|D_{F, I}\backslash \{0\}|=|D_{F, I}|$ and $\chi_{u}(D_{F, I}\backslash \{0\})=\chi_{u}(D_{F, I})$. By Proposition 4, $|D_{F, I} \backslash \{0\}|=|I|(p^{n-s}-\varepsilon p^{\frac{n}{2}-s})$. By Lemma 3, $D_{F, I}\backslash \{0\}$ is a $(p^n, k, \lambda, \mu)$ partial difference set, where $k, \lambda, \mu$ are given in (36).

\noindent $(2)\Rightarrow (1)$: By Lemma 3, for any $u \in V_{n}^{(p)}$ and nonempty set $I \subseteq V_{s}^{(p)}$ we have
\begin{equation}\label{37}
  \chi_{u}(D_{F, I})=p^{n-s}\delta_{\{0\}}(u)|I|+\varepsilon p^{\frac{n}{2}}-\varepsilon p^{\frac{n}{2}-s}|I| \text{ or }\chi_{u}(D_{F, I})=p^{n-s}\delta_{\{0\}}(u)|I|-\varepsilon p^{\frac{n}{2}-s}|I|.
\end{equation}
For any $i \in V_{s}^{(p)}$, define $W_{i}=\{u \in V_{n}^{(p)}: \chi_{u}(D_{F, i})=p^{n-s}\delta_{\{0\}}(u)+\varepsilon p^{\frac{n}{2}}-\varepsilon p^{\frac{n}{2}-s}\}$, where $D_{F, i}=\{x \in V_{n}^{(p)}: F(x)=i\}$. We claim that $W_{i}\bigcap W_{i'}=\emptyset$ for any $i \neq i'$ and $\bigcup_{i \in V_{s}^{(p)}}W_{i}=V_{n}^{(p)}$. Indeed, if there exist $i\neq i'$ such that $W_{i}\bigcap W_{i'}\neq \emptyset$, that is, there exists $u \in V_{n}^{(p)}$ such that $\chi_{u}(D_{F, i})=\chi_{u}(D_{F, i'})=p^{n-s}\delta_{\{0\}}(u)+\varepsilon p^{\frac{n}{2}}-\varepsilon p^{\frac{n}{2}-s}$, then $\chi_{u}(D_{F, i}\bigcup D_{F, i'})=2p^{n-s}\delta_{\{0\}}(u)+2\varepsilon p^{\frac{n}{2}}-2\varepsilon p^{\frac{n}{2}-s}$, which contradicts with (37). Thus, $W_{i}\bigcap W_{i'}=\emptyset$ for any $i\neq i'$. If there exists $u \in V_{n}^{(p)}$ such that $u \notin W_{i}$ for any $i \in V_{s}^{(p)}$, that is, $\chi_{u}(D_{F, i})=p^{n-s}\delta_{\{0\}}(u)-\varepsilon p^{\frac{n}{2}-s}$ for any $i \in V_{s}^{(p)}$, then $\chi_{u}(V_{n}^{(p)})=\sum_{i \in V_{s}^{(p)}}\chi_{u}(D_{F, i})=p^{n}\delta_{\{0\}}(u)-\varepsilon p^{\frac{n}{2}}$, which contradicts with $\chi_{u}(V_{n}^{(p)})=\sum_{x \in V_{n}^{(p)}}\zeta_{p}^{\langle u, x\rangle_{n}}=p^{n}\delta_{\{0\}}(u)$. Thus, $\bigcup_{i \in V_{s}^{(p)}}W_{i}=V_{n}^{(p)}$. By the above analysis, we have
\begin{equation}\label{38}
  \chi_{u}(D_{F, I})=p^{n-s}\delta_{\{0\}}(u)|I|+\varepsilon p^{\frac{n}{2}-s}(p^s\delta_{W_{I}}(u)-|I|),
\end{equation}
where $W_{I}=\sum_{i \in I}W_{i}$. By (38) and Lemma 1, $F$ is a vectorial dual-bent function satisfying Condition A.
\end{proof}

The following theorem provides a sufficient condition on constructing partial difference sets from bent partitions.

\begin{theorem}\label{7}
Let $n$ be an even positive integer and $s$ be a positive integer with $s\leq \frac{n}{2}$. Assume that $\Gamma=\{A_{i}, i \in V_{s}^{(p)}\}$ is a bent partition of $V_{n}^{(p)}$ for which the function $F: V_{n}^{(p)}\rightarrow V_{s}^{(p)}$ defined by
\begin{equation*}
  F(x)=i \text{ if } x \in A_{i}
\end{equation*}
is a vectorial dual-bent function satisfying Condition A. Then when $p=2, s=1$, $A_{0}$ and $A_{1}$ are $(2^{n}, 2^{n-1}\pm 2^{\frac{n}{2}-1}, 2^{n-2}\pm 2^{\frac{n}{2}-1})$ difference set and $(2^{n}, 2^{n-1}\mp 2^{\frac{n}{2}-1}, 2^{n-2}\mp 2^{\frac{n}{2}-1})$ difference set, respectively,
and when $p$ is odd or $s\geq 2$, for any nonempty set $I\subseteq V_{s}^{(p)}$, $A_{I} \backslash \{0\}=\bigcup_{i \in I}A_{i} \backslash \{0\}$ is a $(p^n, k, \lambda, \mu)$ partial difference set, where $(k, \lambda, \mu)$ are given in (35) if $0 \in A_{I}$ and $(k, \lambda, \mu)$ are given in (36) if $0 \notin A_{I}$.
\end{theorem}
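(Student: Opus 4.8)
The plan is to recognize that this statement is nothing more than the implication $(1)\Rightarrow(2)$ of Theorem 6 read through the dictionary $A_{i}=D_{F,i}$. The first step is to note that, directly from the definition $F(x)=i\Leftrightarrow x\in A_{i}$, one has $A_{i}=D_{F,i}=\{x\in V_{n}^{(p)}:F(x)=i\}$ for every $i\in V_{s}^{(p)}$; hence $A_{I}=\bigcup_{i\in I}A_{i}=D_{F,I}$ and $A_{I}\setminus\{0\}=D_{F,I}\setminus\{0\}$ for every nonempty $I\subseteq V_{s}^{(p)}$, and moreover $0\in A_{I}$ if and only if $F(0)\in I$ (because $0\in A_{i}$ precisely when $F(0)=i$). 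Thus the dichotomy ``$0\in A_{I}$ versus $0\notin A_{I}$'' in the statement is exactly the dichotomy ``$F(0)\in I$ versus $F(0)\notin I$'' appearing in Theorem 6. (Note also that the bent-partition hypothesis on $\Gamma$ is not needed separately: by Theorem 1, Condition A on $F$ already forces $\Gamma=\{D_{F,i}\}$ to be a bent partition satisfying Condition $\mathcal{C}$.)

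With this identification, for $p$ odd or $s\ge 2$ I would simply apply $(1)\Rightarrow(2)$ of Theorem 6 to the hypothesis that $F$ is a vectorial dual-bent function satisfying Condition A: it gives at once that $D_{F,I}\setminus\{0\}=A_{I}\setminus\{0\}$ is invariant under negation and is a $(p^{n},k,\lambda,\mu)$ partial difference set whose parameters are those of (35) when $F(0)\in I$, i.e.\ $0\in A_{I}$, and those of (36) when $F(0)\notin I$, i.e.\ $0\notin A_{I}$. This is precisely the claim in this range.

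For the remaining case $p=2,\ s=1$, Theorem 6 yields that $supp(F)=\{x:F(x)=1\}=A_{1}$ is a $(2^{n},2^{n-1}\pm 2^{\frac{n}{2}-1},2^{n-2}\pm 2^{\frac{n}{2}-1})$ difference set, with one of the two sign choices occurring according to the sign $\varepsilon$ of $F$. To obtain the statement for $A_{0}$ I would use the elementary fact that the complement of a $(v,k,\lambda)$ difference set is a $(v,v-k,v-2k+\lambda)$ difference set; since $A_{0}=V_{n}^{(2)}\setminus A_{1}$, substituting $v=2^{n}$, $k=2^{n-1}\pm 2^{\frac{n}{2}-1}$, $\lambda=2^{n-2}\pm 2^{\frac{n}{2}-1}$ gives $v-k=2^{n-1}\mp 2^{\frac{n}{2}-1}$ and $v-2k+\lambda=2^{n-2}\mp 2^{\frac{n}{2}-1}$, the asserted complementary parameters. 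I do not expect a genuine obstacle here: the substance is already contained in Theorems 1 and 6, so the only points needing care are the bookkeeping of the base-point condition $0\in A_{I}\Leftrightarrow F(0)\in I$ (which aligns the two parameter sets correctly) and the sign flip in the Boolean complementation step. A self-contained alternative would be to bypass Theorem 6 and feed the character sums $\chi_{u}(A_{I})$ furnished by Lemma 1 directly into Lemma 3, but routing through Theorem 6 is shorter.
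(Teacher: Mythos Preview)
Your proposal is correct and follows essentially the same route as the paper: the paper's proof also simply invokes Theorem~6 (together with the elementary fact that the complement of a $(v,k,\lambda)$ difference set is a $(v,v-k,v-2k+\lambda)$ difference set for the Boolean case). Your additional observations---the explicit identification $A_I=D_{F,I}$, the equivalence $0\in A_I\Leftrightarrow F(0)\in I$, and the remark that the bent-partition hypothesis is redundant given Theorem~1---are correct and make the bookkeeping more transparent than in the paper.
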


\begin{proof}
Note that if $D$ is a $(v, k, \lambda)$ difference set of a finite abelian group $G$, then $G\backslash D$ is a $(v, v-k, v-2k+\lambda)$ difference set of $G$ (for instance see \cite{Ding2015Co}). Then the result follows from Theorem 6.
\end{proof}

\begin{remark}\label{4}
By Proposition 3, the bent partition $\Gamma_{1}$ (resp. $\Gamma_{2}$, $\Gamma_{1}^{\bullet}$, $\Gamma_{2}^{\bullet}$, $\Theta_{1}$, $\Theta_{2}$) satisfies the condition in Theorem 7. By Theorem 7, any union of sets from $\Gamma_{1}$ (resp, $\Gamma_{2}$, $\Gamma_{1}^{\bullet}$, $\Gamma_{2}^{\bullet}$, $\Theta_{1}$, $\Theta_{2}$) forms a partial difference set. Thus, the results given in Corollary 15 of \cite{AK2022A} on constructing partial difference sets from $\Gamma_{1}$ (resp. $\Gamma_{2}$, $\Gamma_{1}^{\bullet}$, $\Gamma_{2}^{\bullet}$, $\Theta_{1}$, $\Theta_{2}$) (which includes the results given in Theorem 2 of \cite{AKM2022Be} on constructing partial difference sets from $\Gamma_{1}$, resp. $\Gamma_{2}$, in the finite field) can also be illustrated by our results.
\end{remark}

Since the bent partitions constructed in Theorem 5 satisfy the condition in Theorem 7, we have the following corollary from Theorem 7.

\begin{corollary}\label{2}
Let $\Gamma=\{A_{i}, i \in \mathbb{F}_{p^s}\}$ be a bent partition constructed by Theorem 5. Then when $p=2, s=1$, $A_{0}$ and $A_{1}$ are $(2^{n}, 2^{n-1}\pm 2^{\frac{n}{2}-1}, 2^{n-2}\pm 2^{\frac{n}{2}-1})$ difference set and $(2^{n}, 2^{n-1}\mp 2^{\frac{n}{2}-1}, 2^{n-2}\mp 2^{\frac{n}{2}-1})$ difference set, respectively, and when $p$ is odd or $s\geq 2$, for any nonempty set $I\subseteq \mathbb{F}_{p^s}$, $A_{I} \backslash \{0\}=\bigcup_{i \in I}A_{i} \backslash \{0\}$ is a $(p^n, k, \lambda, \mu)$ partial difference set, where $(k, \lambda, \mu)$ are given in (35) with $\varepsilon=1$ if $0 \in A_{I}$ and $(k, \lambda, \mu)$ are given in (36) with $\varepsilon=1$ if $0 \notin A_{I}$.
\end{corollary}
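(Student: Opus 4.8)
The final statement is Corollary~\ref{2}, which asserts that the bent partitions constructed in Theorem~\ref{5} produce partial difference sets via Theorem~\ref{7}. The plan is to verify that the hypothesis of Theorem~\ref{7} holds for these partitions, and then invoke Theorem~\ref{7} directly.

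First I would recall that in Theorem~\ref{5}, the partition $\Gamma=\{A_i, i\in\mathbb{F}_{p^s}\}$ is precisely the partition $\{D_{H,i}, i\in\mathbb{F}_{p^s}\}$ associated to the function $H$ constructed there, and that $H$ is built via the secondary construction of Theorem~\ref{4} using building blocks $F(i;x_1,x_2)$ supplied by Proposition~\ref{3}. By Proposition~\ref{3}, each $F(i;x_1,x_2)$ is a vectorial dual-bent function satisfying Condition~A with $\varepsilon=1$; in particular $((F(i;x))_c)^*=((F(i;x))^*)_c$ and $\varepsilon_{(F(i;x))_c}=1$ for all $c\in\mathbb{F}_{p^s}^*$, uniformly in $i$. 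Hence the hypotheses of Theorem~\ref{4} are met, so Theorem~\ref{4} tells us that $H$ is a vectorial dual-bent function satisfying Condition~A with $\varepsilon=1$, and that $\Gamma=\{A_i\}$ is a bent partition satisfying Condition~$\mathcal{C}$. Consequently, the function $F:V_n^{(p)}\to V_s^{(p)}$ defined by $F(x)=i$ when $x\in A_i$ is exactly $H$ up to identifying the codomain $\mathbb{F}_{p^s}$ with $V_s^{(p)}$, so it is a vectorial dual-bent function satisfying Condition~A.

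With this identification in hand, the partition $\Gamma$ satisfies the hypothesis of Theorem~\ref{7} verbatim. Applying Theorem~\ref{7} with $\varepsilon=1$ then gives the stated conclusion: when $p=2$ and $s=1$, $A_0$ and $A_1$ are the claimed difference sets, and when $p$ is odd or $s\geq 2$, for any nonempty $I\subseteq\mathbb{F}_{p^s}$ the set $A_I\backslash\{0\}=\bigcup_{i\in I}A_i\backslash\{0\}$ is a $(p^n,k,\lambda,\mu)$ partial difference set with $(k,\lambda,\mu)$ given by (\ref{35}) with $\varepsilon=1$ if $0\in A_I$ and by (\ref{36}) with $\varepsilon=1$ if $0\notin A_I$. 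Since $\varepsilon=1$ throughout, no case distinction on the sign of $\varepsilon$ is needed.

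There is essentially no obstacle here: the corollary is a routine specialization. The only point requiring a line of care is the bookkeeping that the $F(i;x)$ furnished by Proposition~\ref{3} indeed have the property $\varepsilon$ \emph{independent of both $i$ and $c$} demanded by Theorem~\ref{4} — this holds because Proposition~\ref{3} gives $\varepsilon=1$ unconditionally for every admissible choice of $u_i$, $d_i$, $P_i$, so the constant $\varepsilon$ in Theorem~\ref{4} can be taken to be $1$ regardless of how the parameters vary with $i$. Everything else is a direct chaining of Proposition~\ref{3} $\Rightarrow$ Theorem~\ref{4} $\Rightarrow$ Theorem~\ref{7}.
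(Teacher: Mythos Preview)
Your proposal is correct and follows exactly the paper's approach: the paper derives Corollary~\ref{2} in one sentence by noting that the bent partitions from Theorem~\ref{5} satisfy the hypothesis of Theorem~\ref{7}, and you spell out this chain (Proposition~\ref{3} $\Rightarrow$ Theorem~\ref{4} $\Rightarrow$ Theorem~\ref{7}) in the same way, with the correct observation that $\varepsilon=1$ uniformly.
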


We give an example by Corollary 2.

\begin{example}\label{3}
Let $\Gamma=\{D_{H, i}, i\in \mathbb{F}_{3^4}\}$ be the bent partition constructed in Example 2. By Corollary 2, $D_{H, i}$ is a
\begin{small}$(1853020188851841, 22876791923520, 282470988879, 282429005040)$ \end{small}partial difference set for any $i \in \mathbb{F}_{3^4}^{*}$, $D_{H, 0}\backslash \{0\}$ is a \begin{small}$(1853020188851841, 22876834970240, 282472051759, 282430067922)$\end{small} partial difference set, $(D_{H, 0}\bigcup D_{H, 1})\backslash \{0\}$ is a \begin{small}$(1853020188851841, 45753626893760, 1129760129761, $ $1129719208806)$\end{small} partial difference set.
\end{example}

When $p$ is an odd prime, we immediately obtain the following characterization of bent partitions of $V_{n}^{(p)}$ satisfying Condition $\mathcal{C}$ from Theorems 3 and 6.

\begin{theorem}\label{8}
Let $p$ be an odd prime. Let $\Gamma=\{A_{i}, i \in V_{s}^{(p)}\}$ be a partition of $V_{n}^{(p)}$, where $n$ is even and $s\leq \frac{n}{2}$. Then the following two statements are equivalent.

(1) $\Gamma$ is a bent partition satisfying Condition $\mathcal{C}$.

(2)  For any nonempty set $I\subseteq V_{s}^{(p)}$, $A_{I} \backslash \{0\}=\bigcup_{i \in I}A_{i} \backslash \{0\}$ is a $(p^n, k, \lambda, \mu)$ partial difference set with $-A_{I}=A_{I}$, where $(k, \lambda, \mu)$ are given in (35) if $0 \in A_{I}$ and $(k, \lambda, \mu)$ are given in (36) if $0 \notin A_{I}$.
\end{theorem}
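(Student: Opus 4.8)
The plan is to derive Theorem 8 directly from the equivalences already established earlier in the paper, so that essentially no new computation is required. The key observation is that Theorem 3 gives a characterization of ``bent partition satisfying Condition $\mathcal{C}$'' in terms of the associated function $F(x)=i$ for $x \in A_i$ being a vectorial dual-bent function satisfying Condition A, while Theorem 6 gives a characterization of ``vectorial dual-bent function satisfying Condition A'' in terms of the preimage sets $D_{F,I}$ forming partial difference sets with the prescribed parameters. Since $p$ is odd here, we are always in the case ``$p$ odd or $s\geq 2$'' of Theorem 6, so the difference-set exception for $p=2,s=1$ never arises. Chaining these two equivalences should yield the statement verbatim, after noting that $A_I = D_{F,I}$ under the identification $F(x)=i \iff x\in A_i$.

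First I would fix the partition $\Gamma=\{A_i, i\in V_s^{(p)}\}$ and define $F\colon V_n^{(p)}\to V_s^{(p)}$ by $F(x)=i$ whenever $x\in A_i$; this is well-defined precisely because $\Gamma$ is a partition. Then $D_{F,I}=\{x: F(x)\in I\}=\bigcup_{i\in I}A_i = A_I$ for every nonempty $I\subseteq V_s^{(p)}$, and in particular $F(0)\in I$ is the same condition as $0\in A_I$. Next I would invoke Theorem 3: $\Gamma$ is a bent partition satisfying Condition $\mathcal{C}$ if and only if $F$ is a vectorial dual-bent function satisfying Condition A. Then I would invoke the ``$p$ odd'' branch of Theorem 6: $F$ is a vectorial dual-bent function satisfying Condition A if and only if for every nonempty $I\subseteq V_s^{(p)}$ the set $D_{F,I}\backslash\{0\}$ is a $(p^n,k,\lambda,\mu)$ partial difference set with $-D_{F,I}=D_{F,I}$, where $(k,\lambda,\mu)$ are given by \eqref{35} when $F(0)\in I$ and by \eqref{36} when $F(0)\notin I$. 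Substituting $D_{F,I}=A_I$ and $F(0)\in I \iff 0\in A_I$ turns condition (2) of Theorem 6 into condition (2) of Theorem 8, completing the equivalence.

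I do not expect any genuine obstacle here, since the theorem is a formal corollary of two previously proven characterizations; the only points requiring a word of care are (i) checking that the hypotheses of Theorems 3 and 6 match — both require $n$ even and $s\leq \frac{n}{2}$, which are exactly the standing hypotheses of Theorem 8 — and (ii) confirming that the constant $\varepsilon\in\{\pm1\}$ appearing in \eqref{35} and \eqref{36} is the same $\varepsilon$ in both Condition $\mathcal{C}$ and Condition A, which is guaranteed by the construction in the proof of Theorem 1 (and hence of Theorems 2 and 3) where $\varepsilon_f=\varepsilon$ for all bent functions $f$ constructed from $\Gamma$ coincides with $\varepsilon_{F_c}$ for all component functions. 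Once these bookkeeping points are verified, the proof is simply: ``The result follows by combining Theorems 3 and 6, using that $A_I = D_{F,I}$ and that $0\in A_I$ if and only if $F(0)\in I$.''
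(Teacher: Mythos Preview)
Your proposal is correct and matches the paper's approach exactly: the paper states Theorem~8 as an immediate consequence of Theorems~3 and~6, without further argument. Your identification $A_I=D_{F,I}$ and $0\in A_I\iff F(0)\in I$ is precisely the translation needed, and the bookkeeping points you flag are all that is required.
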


\section{Conclusion}
\label{sec:6}
In this paper, we investigated relations between bent partitions and vectorial dual-bent functions (Theorems 1, 2, 3) and gave some new constructions of bent partitions satisfying Condition $\mathcal{C}$ (Corollary 1, Theorems 4 and 5). We illustrated that for any ternary weakly regular bent function $f: V_{n}^{(3)}\rightarrow \mathbb{F}_{3}$ ($n$ even) with $f(x)=f(-x)$ and $\varepsilon_{f}=-1$, the generated bent partition by $f$ is not coming from a normal bent partition (see Example 1), which answers an open problem proposed in \cite{AM2022Be}. By taking vectorial dual-bent functions as the link between bent partitions and partial difference sets, we give a sufficient condition on constructing partial difference sets from bent partitions (Theorem 7). When $p$ is an odd prime, we characterized bent partitions satisfying Condition $\mathcal{C}$ in terms of partial difference sets (Theorem 8).


\begin{thebibliography}{}
\bibitem{AK2022A}
N. Anbar, T. Kalayc{\i}, Amorphic association schemes from bent partitions, Available: https://www.researchgate.net/publication/366593699\_Amorphic\_association\_schemes\_from\_bent\_partitions
\bibitem{AKM2022Be}
N. Anbar, T. Kalayc{\i} and W. Meidl, Bent partitions and partial difference sets, \emph{IEEE Trans. Inf. Theory}, vol. 68, no. 10, pp. 6894-6903, 2022
\bibitem{AKM2022Ge}
N. Anbar, T. Kalayc{\i} and W. Meidl, Generalized semifield spreads, \emph{Des. Codes Cryptogr.}, Online. DOI: 10.1007/s10623-022-01115-2
\bibitem{AM2022Be}
N. Anbar and W. Meidl, Bent partitions, \emph{Des. Codes Cryptogr.,} vol. 90, no. 4, pp. 1081-1101, 2022.
\bibitem{CS2016Fo}
C. Carlet and S. Mesnager, Four decades of research on bent functions, \emph{Des. Codes Cryptogr.}, vol. 78, no. 1, pp. 5-50, 2016.
\bibitem{CM2013A}
A. \c{C}e\c{s}melio\u{g}lu, W. Meidl, A construction of bent functions from plateaued functions, \emph{Des. Codes Cryptogr.}, vol. 66, pp. 231-242, 2013.
\bibitem{CM2018Be}
A. \c{C}e\c{s}melio\u{g}lu, W. Meidl, Bent and vectorial bent functions, partial difference sets, and strongly regular graphs, \emph{Adv. Math. Commun.} vol. 12, pp. 691-705, 2018.
\bibitem{CMP2021Ve}
A. \c{C}e\c{s}melio\u{g}lu, W. Meidl, I. Pirsic, Vectorial bent functions and partial difference sets, \emph{Des. Codes Cryptogr.} vol. 89, no. 10, pp. 2313-2330, 2021.
\bibitem{CMP2013On}
A. \c{C}e\c{s}melio\u{g}lu, W. Meidl, and A. Pott, On the dual of (non)-weakly regular bent functions and self-dual bent functions, \emph{Adv. Math. Commun.}, vol. 7, no. 4, pp. 425-440, 2013.
\bibitem{CMP2018Ve}
A. \c{C}e\c{s}melio\u{g}lu, W. Meidl and A. Pott, Vectorial bent functions and their duals, \emph{Linear Algebra Appl.}, vol. 548, pp. 305-320, 2018.
\bibitem{Dillon1974El}
J. F. Dillon, Elementary Hadamard difference sets, Ph. D. Thesis, University of Maryland, 1974.
\bibitem{Ding2015Co}
C. Ding, Codes from Difference Sets, World Scientific, Singapore, 2015.
\bibitem{HLL2020Ra}
J. Y. Hyun, J. Lee and Y. Lee, Ramanujan graphs and expander families constructed from $p$-ary bent functions, \emph{Des. Codes Cryptogr.} vol. 88, no. 2,  pp.453-470, 2020.
\bibitem{LL2014Be}
P. Lison\v{e}k and H. Y. Lu, Bent functions on partial spreads, \emph{Des. Codes Cryptogr.} vol. 73, no. 1, pp. 209-216, 2014.
\bibitem{KSW1985Ge}
P. V. Kumar, R. A. Scholtz and L. R. Welch, Generalized bent functions and their properties, \emph{J. Comb. Theory Ser. A}, vol. 40, no. 1, pp. 90-107, 1985.
\bibitem{Ma1994A}
S. L. Ma, A survey of partial difference sets, \emph{Des. Codes Cryptogr.} vol. 4, no. 4, pp. 221-261, 1994.
\bibitem{Meidl2022A}
W. Meidl, A survey on $p$-ary and generalized bent functions, \emph{Cryptogr. Commun.} vol. 14, no.4, pp. 737-782, 2022.
\bibitem{MP2021Be}
W. Meidl and I. Pirsic, Bent and $\mathbb{Z}_{2^k}$-Bent functions from spread-like partitions, \emph{Des. Codes Cryptogr.}, vol. 89, no. 1, pp. 75-89, 2021.
\bibitem{Mesnager2016Be}
S. Mesnager, Bent Functions-Fundamentals and Results, Springer, Switzerland, 2016.
\bibitem{Nyberg1991Co}
K. Nyberg, Constructions of bent functions and difference sets, In: Advances in cryptology-EUROCRYPT' 90, Lecture Notes in Comput. Sci. 473, Springer, Berlin, pp. 151-160, 1991.
\bibitem{Rothaus1976On}
O. S. Rothaus, On ``bent" functions, \emph{J. Comb. Theory Ser. A}, vol. 20, no. 3, pp. 300-305, 1976.
\bibitem{WF2022Ne}
J. Wang and F.-W. Fu, New results on vectoril dual-bent functions and partial difference sets, \emph{Des. Codes Cryptogr.}, Online. DOI: 10.1007/s10623-022-01103-6



\end{thebibliography}
\end{document}